\let\doendproof\endproof
\renewcommand\endproof{~\hfill\qed\doendproof}
\newcommand{\states}{\ensuremath{S} }
\newcommand{\state}{\ensuremath{s} }
\newcommand{\supp}{\ensuremath{{\sf Supp}} }
\newcommand{\initState}{\ensuremath{s_{{\sf init}}} }
\newcommand{\weight}{\ensuremath{w} }
\newcommand{\markovProcess}{\ensuremath{{M} }}
\newcommand{\integ}{\ensuremath{\mathbb{Z}} }
\newcommand{\strat}{\ensuremath{\sigma} }
\newcommand*{\pr}{\mathbb{P}}
\newcommand*{\mpsup}{\ensuremath{\overline{\mathsf{MP}}}}
\newcommand*{\mpinf}{\ensuremath{\underline{\mathsf{MP}}}}
\newcommand*{\expect}{\mathbb{E}}
\newcommand{\Inf}{\textrm{\sf Inf}}
\newcommand{\truncatedTarget}{\ensuremath{T} }
\newcommand{\truncatedSum}[1]{\ensuremath{{\sf TS}^{#1}} }
\newcommand{\discSum}[1]{\ensuremath{{\sf DS}^{#1}} }
\newcommand{\discount}{\ensuremath{\lambda} }
\newcommand{\dimension}{\ensuremath{d} }
\newcommand{\queries}{\ensuremath{q} }
\newcommand{\nat}{\ensuremath{\mathbb{N}} }
\newcommand{\rat}{\ensuremath{\mathbb{Q}} }
\newcommand{\twoCM}{\ensuremath{\mathcal{M}} }
\newcommand{\strats}{\ensuremath{\Sigma} }
\newcommand{\round}{\ensuremath{{\sf Round}_{\gamma}} }
\newcommand{\roundedDiscSum}[1]{\ensuremath{{\sf RDS}^{#1}} }
\newcommand{\query}{\ensuremath{\mathcal{Q}} }
\newcommand\NPTIME{\textrm{\sf NP}}
\newcommand\Val{\textrm{\sf Val}}
\newcommand\calD{\ensuremath{\mathcal{D}}}
\newcommand\calI{\ensuremath{\mathcal{I}}}
\newcommand\calT{\ensuremath{\mathcal{T}}}
\newcommand\calA{\ensuremath{\mathcal{A}}}
\newcommand\calM{\ensuremath{\mathcal{M}}}
\newcommand\mecs{\ensuremath{\textrm{\sf MEC}}}
\newcommand\Safe{\ensuremath{\textrm{\sf Safe}}}
\newcommand\PTIME{\textrm{\sf P}}
\newcommand\PSPACE{\textrm{\sf PSPACE}}
\newcommand\EXPTIME{\textrm{\sf EXPTIME}}
\newcommand\restr[2]{\ensuremath{\left.#1\right|_{#2}}}
\newcommand\bis{\ensuremath{\textrm{\sf bis}}}
\title{Percentile Queries in Multi-Dimensional Markov Decision Processes\thanks{M.~Randour is an F.R.S.-FNRS Postdoctoral Researcher, J.-F.~Raskin is supported by ERC Starting Grant (279499: inVEST). Work partly supported by European project CASSTING (FP7-ICT-601148).}}
\author{Mickael Randour\inst{1} \and Jean-Fran\c{}cois Raskin\inst{1} \and Ocan Sankur\inst{2}}
\institute{
D\'epartement d'Informatique, Universit\'e libre de Bruxelles (ULB), Belgium
\and CNRS, Irisa, Rennes, France
}
\begin{document}
\maketitle

\begin{abstract}
Markov decision processes (MDPs) with multi-dimensional weights are useful to analyze systems with multiple objectives that may be conflicting and 
require the analysis of trade-offs. We study the complexity of percentile queries in such MDPs and give algorithms to synthesize strategies that enforce such constraints. 
Given a multi-dimensional weighted MDP and a quantitative payoff function~$f$, thresholds $v_i$ (one per dimension), and probability thresholds $\alpha_i$, 
we show how to compute a single strategy to enforce that for all dimensions $i$, the probability of outcomes $\rho$ satisfying $f_i(\rho) \geq v_i$ is at least $\alpha_i$. 
We consider classical quantitative payoffs from the literature (sup, inf, lim sup, lim inf, mean-payoff, truncated sum, discounted sum).  
Our work extends to the quantitative case  the multi-objective model checking problem studied by Etessami et al.~\cite{EKVY-lmcs08} in unweighted MDPs.
\end{abstract}

\section{Introduction}

{\em Markov decision processes} (MDPs) are central mathematical models for reasoning about (optimal) strategies in {\em uncertain environments}. For example, if rewards (given as numerical values) are assigned to actions in an MDP, we can search for a strategy (policy) that resolves the nondeterminism in a way that the {\em expected mean reward} of the actions taken by the strategy over time is maximized. See for example~\cite{Puterman-wiley94} for a solution to this problem. If we are risk-averse, we may want to search instead for strategies that ensure that the mean reward over time is larger than a given value with a high probability, i.e., a probability that exceeds a given threshold. See for example~\cite{FKR-ieee95} for a solution.

Recent works are exploring several natural extensions of those problems.
First, there is a series of works that investigate MDPs with multi-dimensional
weights~\cite{CMH-stacs06,BBCFK-lmcs14} rather than single-dimensional as it is
traditionally the case. Multi-dimensional MDPs are useful to analyze systems
with {\em multiple objectives} that are potentially conflicting and make
necessary the analysis of trade-offs. For instance, we may want to build a
control strategy that both ensures some good quality of service and minimizes the energy consumption. Second, there are works
that aim at synthesizing strategies enforcing {\em richer properties}. For
example, we may want to construct a strategy that both ensures some minimal
threshold with certainty (or probability one) and a good expectation~\cite{DBLP:conf/stacs/BruyereFRR14}. An illustrative survey of such extensions can be found in~\cite{DBLP:conf/vmcai/RandourRS15}.
 
Our paper participates in this general effort by providing algorithms and complexity results on the synthesis of strategies that enforce {\em multiple percentile constraints}.  A \textit{multi-percentile query} and the associated synthesis problem is as follows: given a multi-dimensionally weighted MDP $M$ and an initial state $\initState$, synthesize a strategy $\sigma$ such that it satisfies the conjunction of $q$ constraints:
\[
\query \coloneqq \bigwedge_{i = 1}^{q}\; \pr_{M,\initState}^\strat\big[f_{l_{i}} \geq v_i\big] \geq
	\alpha_i.
\]
where each $l_i$ refers to a dimension of the weight vectors, each $v_i$ is a value threshold, and $\alpha_i$ is a probability threshold, and $f$ is a payoff function. Each constraint $i$ expresses that the strategy ensures probability at least $\alpha_{i}$ to obtain payoff at least $v_{i}$ in dimension $l_{i}$.

We consider seven payoff functions: sup, inf, limsup, liminf, mean-payoff, truncated sum and discounted sum. This wide range covers most classical functions: our exhaustive study provides a \textit{complete picture} for the new multi-percentile framework and we focus on establishing meta-theorems and connections whenever possible. Some of our results are obtained by reduction to the previous work of~\cite{EKVY-lmcs08}, but for mean-payoff, truncated sum and discounted sum, that are {\em non-regular payoffs}, we need to develop original techniques.

Let us consider some examples. In an MDP that models a stochastic shortest path problem, we may want to obtain a strategy that ensures that the probability to reach the target within $d$ time units exceeds 50 percent: this is a single-constraint percentile query. With a {\em multi-constraint percentile query}, we can impose richer properties on strategies,
for instance, enforcing that the duration is less than $d_1$ in at least 50 percent of the cases, and less than $d_2$ in 95 percent of the cases, with $d_1 < d_2$. 
We may also consider percentile queries in multi-dimensional systems. If in the model, we add information about fuel consumption, we may also enforce that we arrive within $d$ time units in 95 percent of the cases, and that in half of the cases the fuel consumption is below some threshold $c$.

\paragraph{\bf Contributions.} We study percentile problems for a range of classical payoff functions: we establish algorithms and prove complexity and memory bounds.
Our algorithms can handle multi-constraint multi-dimensional queries, but
we also study interesting subclasses, namely, multi-constraint single-dimensional queries, single-constraint queries, and other classes depending on the payoff functions.
We present an overview of our results in Table~\ref{table}.
For all payoff functions but the discounted sum, they only require \textit{polynomial time in the size of the model} when the query size is fixed. 
In most applications, the query size is typically small while the model
can be very large. So our algorithms have clear potential to be useful in practice.

\def\arraystretch{1.2}
\begin{table}[t]
  \footnotesize
  \centering
  \begin{tabular}{|c||c|c|c|}
    \cline{2-4} \multicolumn{1}{c||}{} & \multirow{2}{*}{~Single-constraint~} & Single-dim. & ~Multi-dim.~ \\
    \multicolumn{1}{c||}{} & & ~Multi-constraint~ & ~Multi-constraint~\\
    \hline
    \hline
    Reachability & \PTIME~\cite{Puterman-wiley94} & P($M$)$\cdot$E($\query$)~\cite{EKVY-lmcs08}, \PSPACE-h & --- \\
    \hline
    \multirow{2}{*}{~$\scriptsize f \in \mathcal{F}$~} & \multirow{2}{*}{\PTIME~\cite{CH-ilc09}} & \multirow{2}{*}{\PTIME} & ~P($M$)$\cdot$E($\query$)~ \\
    & & & \PSPACE-h.\\
    \hline
    ~$\mpsup$ & ~\PTIME~\cite{Puterman-wiley94}~ & \PTIME & \PTIME\\
    \hline
    ~$\mpinf$ & ~\PTIME~\cite{Puterman-wiley94}~ & ~P($M$)$\cdot$E($\query$)~ & ~P($M$)$\cdot$E($\query$)~\\
    \hline
    \multirow{2}{*}{~SP~} &  ~P($M$)$\cdot$P$_{ps}$($\query$)~\cite{HaaseK14}~ & ~P($M$)$\cdot$P$_{ps}$($\query$) (one target)~ & ~P($M$)$\cdot$E($\query$)~\\
    & ~\PSPACE-h.~\cite{HaaseK14}~ & ~\PSPACE-h.~\cite{HaaseK14}~ & ~\PSPACE-h.~\cite{HaaseK14}~\\
    \hline
    \multirow{2}{*}{~$\varepsilon$-gap DS} & ~P$_{ps}$($M, \query, \varepsilon$)~ & ~P$_{ps}$($M,\varepsilon$)$\cdot$E($\query$)~ & ~P$_{ps}$($M,\varepsilon$)$\cdot$E($\query$)~\\
    & \NPTIME-h. & \NPTIME-h. & \PSPACE-h.\\
    \hline
  \end{tabular}
  \vspace{2mm}
  \caption{Some results for percentile queries. Here $\mathcal{F} = \{\inf, \sup, \liminf, \limsup\}$, $\mpsup$ (resp. $\mpinf$) stands for sup. (resp. inf.) mean-payoff, SP for shortest path, and DS for discounted sum. Parameters $M$ and $\query$ resp. represent model size and query size; P($x$), E($x$) and P$_{ps}$($x$) resp. denote polynomial, exponential and pseudo-polynomial time in parameter $x$. All results without reference are new.}
  \vspace{-6mm}
  \label{table}
\end{table}

We give a non-exhaustive list of contributions and highlight some links with related problems. 
\begin{itemize}
\item[A)] We show the \PSPACE-hardness of the multiple reachability problem with exponential dependency on the query size (Theorem~\ref{thm:asreach}), and
the \PSPACE-completeness of the almost-sure case, refining the results of~\cite{EKVY-lmcs08}. We also prove that in the case of \emph{nested} target sets, the problem admits polynomial-time solution (Theorem~\ref{thm:monotonic-reach}), and we use it to solve some of the multi-constraint percentile problems.
\item[B)] For payoff functions $\inf$, $\sup$, $\liminf$ and $\limsup$, we establish a polynomial-time algorithm for the single-dimension case (Theorem~\ref{thm:quant_reg_single_dim}), and an algorithm that is only exponential in the size of the query for the general case (Theorem~\ref{thm:quant_reg_multi_dim}).
We prove the \PSPACE-hardness of the problem for $\sup$ (Theorem~\ref{thm:quant_reg_multi_dim_pspace}), and give a polynomial time algorithm for $\limsup$ 
(Theorem~\ref{lemma:limsup}).
\item[C)] In the mean-payoff case, we distinguish $\mpsup$ defined by the limsup of the average weights, and~$\mpinf$ by their liminf. For the former, we give a polynomial-time algorithm for the general case (Theorem~\ref{thm:mpsup}). For the latter, our algorithm is polynomial in the model size and exponential in the query size (Theorem~\ref{thm:mpinf}).
\item[D)] The truncated sum function computes the \emph{sum} of weights until a target is reached. It models \emph{shortest path} problems. We prove the multi-dimensional percentile problem to be undecidable when both negative and positive weights are allowed (Theorem~\ref{thm:truncated_undec}). Therefore, we concentrate on the case of non-negative weights, and establish an algorithm that is polynomial in the model size and exponential in the query size (Theorem~\ref{thm:sp_overview}). We derive from~\cite{HaaseK14} that even the single-constraint percentile problem is \PSPACE-hard.
\item[E)] The discounted sum case turns out to be difficult, and linked to a long-standing open problem, not known to be decidable (Lemma~\ref{lem:ds_precise}). Nevertheless, we give algorithms for an approximation of the problem, called $\varepsilon$-gap percentile problem. Our algorithm guarantees correct answers up to an arbitrarily small zone of uncertainty (Theorem~\ref{thm:ds_overview}). We also prove that this $\varepsilon$-gap problem is \PSPACE-hard in general, and already \NPTIME-hard for single-constraint queries (Lemma~\ref{lem:ds_pspace_hard} and Lemma~\ref{lem:ds_np_hard}). According to a very recent paper by Haase and Kiefer~\cite{HaasePP}, our reduction even proves \textsf{PP}-hardness of single-contraint queries, which suggests that the problem does not belong to $\NPTIME$ at all otherwise the polynomial hierarchy would collapse.
\end{itemize}

We systematically study the memory requirement of strategies. We build our
algorithms using different techniques. Here are a few of them. For $\inf$ and
$\sup$ payoff functions, we reduce percentile queries to multiple reachability
queries, and rely on the algorithm of \cite{EKVY-lmcs08}: those are the
easiest cases. For $\liminf$, $\limsup$ and $\mpsup$, we additionally need to
resort to maximal end-component decomposition of MDPs. For the following cases,
there is no simple reduction to existing problems and we need non-trivial
techniques to establish algorithms.
For $\mpinf$, we use linear programming techniques to characterize winning strategies, borrowing ideas from~\cite{EKVY-lmcs08,BBCFK-lmcs14}. For shortest path and discounted sum, we consider unfoldings of the MDP, with particular care to bound their sizes, and for the latter, to analyze the cumulative error due to necessary roundings.

\paragraph{{\bf Related work.}}
There are several works in the literature that study multi-dimensional MDPs: for discounted sum, see~\cite{CMH-stacs06}, and for mean-payoff, see~\cite{BBCFK-lmcs14,FKR-ieee95}.
In the latter papers, 
the following threshold problem is studied in multi-dimensional MDPs: given a threshold vector~$\vec{v}$ and a probability threshold $\nu$, does there exist a strategy $\sigma$ such that $\pr_s^\sigma[\vec{r} \geq \vec{v}] \geq \nu$, where $\vec{r}$ denotes the mean-payoff vector.
The work~\cite{FKR-ieee95} solves this problem for the single dimensional case, and the multi-dimensional for the \emph{non-degenerate} case (w.r.t.~the solutions of a linear program). 
A general algorithm was later given in \cite{BBCFK-lmcs14}.
This problem asks for a bound on the \emph{joint probability} of the thresholds, that is, the probability of satisfying \emph{all} constraints simultaneously. In contrast,
in our problem we bound the \emph{marginal probabilities} separately, which may allow for more modeling flexibility.
The problem of maximizing the \emph{expectation vector} was also solved in~\cite{BBCFK-lmcs14}. Recently, and independently from our work, the problem of bounding the marginal probabilities was considered in~\cite{CKK-lics15} for $\mpinf$.
The given algorithm consists in a single linear program
(while we use a two-step procedure), has the same ingredients as ours, and has the same complexity. In addition, the algorithm of~\cite{CKK-lics15} also allows one to add
a constraint on the expectation vector.

Multiple reachability objectives in MDPs were considered in~\cite{EKVY-lmcs08}: given an MDP and multiple targets~$T_i$, thresholds $\alpha_i$, decide if there exists a strategy that forces each $T_i$ with a probability larger than $\alpha_i$. This work is the closest to our work and we show here that their problem is inter-reducible with our problem for the sup measure. In~\cite{EKVY-lmcs08} the complexity results are given only for the size of the model and not for the size of the query: we refine those results here and answer questions that were left open in that paper.

Several works consider percentile queries but only for \textit{one} dimension and {\em one} constraint (while we consider multiple constraints and possibly multiple dimensions) and particular payoff functions. Single-constraint queries for $\limsup$ and $\liminf$ were studied in~\cite{CH-ilc09}. 
The threshold probability problem for truncated sum was studied in MDPs with either all non-negative or all non-positive weights in~\cite{Ohtsubo-amc2004,SO-jcta13}. \textit{Quantile queries} in the single-constraint case were studied for
the shortest path with non-negative weights 
in~\cite{DBLP:conf/fossacs/UmmelsB13}, and for
energy-utility objectives in~\cite{BDDKK-fm14}. They have been recently extended to \textit{cost problems}~\cite{HaaseK14}, in a direction orthogonal to ours. For fixed finite horizon, \cite{XM-ijcai11} considers the problem of ensuring a single-contraint percentile query for the discounted sum, and that of maximizing the expected value subject to a single percentile constraint. Still for the discounted case, there is a series of works studying \textit{threshold problems}~\cite{White1993634,WL99} and \textit{value-at-risk problems}~\cite{DBLP:conf/fsttcs/BrazdilCFNS13}. All can be related to single-constraint percentiles queries.

This paper extends previous work presented in a conference~\cite{RRS-cav15}: it gives a full presentation of the technical details, along with additional results.

\section{Preliminaries}

\smallskip\noindent\textbf{Markov decision processes.} A finite \textit{Markov decision process} (MDP) is a tuple $M = (S,A,\delta)$ where
$S$ is the finite set of \emph{states}, $A$~is the finite set of \emph{actions} and $\delta\colon S\times A \rightarrow \calD(S)$
is a partial function called the \emph{probabilistic transition function}, where~$\calD(S)$ denotes the set of rational probability distributions over~$S$.
The set of actions that are available in a state $\state \in \states$ is denoted by $A(s)$. 
We use $\delta(s,a,s')$ as a shorthand for $\delta(s,a)(s')$.
An \emph{absorbing state} $s$ is such that for all $a \in A(s)$, $\delta(s,a,s) = 1$. We assume w.l.o.g.~that MDPs are \textit{deadlock-free}: for all $s \in S$, $A(s) \neq \emptyset$ (if not the case, we simply replace the deadlock by an absorbing state with a unique action). An MDP where for all $s \in \states$, $\vert A(s)\vert = 1$ is a fully-stochastic process called a \textit{Markov chain}.

A \textit{weighted} MDP is a tuple $M = (S,A,\delta,\weight)$, where $w$ is a \emph{$d$-dimension weight function}~$w\colon A \rightarrow \mathbb{Z}^d$.
For any $l \in \{1,\ldots,d\}$, we denote $\weight_l \colon A \rightarrow \integ$ 
the projection of~$\weight$ to the~$l$-th dimension, i.e., the function mapping each action $a$ to the $l$-th element of vector $w(a)$.
A \emph{run} of~$M$ is an infinite sequence $s_1a_1 \ldots a_{n-1}
s_n\ldots{}$ of
states and actions such that $\delta(s_i,a_i,s_{i+1})>0$ for all~$i\geq 1$.
Finite prefixes of runs are called \emph{histories}.

Fix an MDP $M = (S,A,\delta)$.
An \emph{end-component} (EC) of $M$ is
an MDP $C = (S',A',\delta')$ with $S' \subseteq S$, $\emptyset \neq A'(s) \subseteq A(s)$ for all~$s \in S'$,
and~$\supp(\delta(s,a)) \subseteq S'$ for all~$s \in S', a \in A'(s)$
(here $\supp(\cdot)$ denotes the support), $\delta' = \restr{\delta}{S'\times A'}$ and such that $C$ is \textit{strongly connected}, i.e., there is a run between any pair of states in $S'$.
The union of two ECs with non-empty intersection is
an EC; one can thus define \emph{maximal} ECs.
We let $\mecs(M)$ denote the set of maximal ECs of~$M$, computable in
polynomial time~\cite{DeAlfaro-phd97}.

\smallskip\noindent\textbf{Strategies.} A~\emph{strategy}~$\sigma$ is a function $(SA)^*S\rightarrow \calD(A)$ such that
for all~$h \in (SA)^*S$ ending in~$s$, we have~$\supp(\sigma(h)) \subseteq A(s)$. The set of all strategies is $\strats$. A strategy is \textit{pure} if all histories are mapped to \textit{Dirac distributions}. A strategy~$\sigma$ can be encoded by a 
\emph{Moore machine}, $(\calM,\sigma_a,\sigma_u,\alpha)$ 
where~$\calM$ is a finite or infinite set of memory elements,~$\alpha$ the \emph{initial
distribution on~$\calM$}, 
$\sigma_u$ the \emph{memory update
function} $\sigma_u : A\times S \times \calM \rightarrow \calM$, 
and $\sigma_a : S \times \calM \rightarrow \calD(A)$ the \emph{next action
function} where $\supp(\sigma_a(s,m)) \subseteq A(s)$ for any~$s\in S$ and~$m \in \calM$.
We say that~$\sigma$ is \emph{finite-memory} if~$|\calM|<\infty$, and \emph{$K$-memory} if~$|\calM|=K$;
it is memoryless if~$K=1$, thus only depends on the last state of the history. 
We see such strategies as functions $s \mapsto \calD(A(s))$ for $s \in S$. 
A strategy is \emph{infinite-memory} if~$|\calM|$ is infinite.
For a class of problems, we say that strategies use linear (resp. polynomial, exponential) memory 
if there exist strategies for which $K$ is linear (resp. polynomial, exponential) in the size of~$M$. The entity choosing the strategy is often called the \textit{controller}.

An MDP~$M$, a strategy~$\sigma$ encoded by $(\calM,\sigma_a,\sigma_u,\alpha)$, and a state~$s$ determine a
Markov chain $M_s^\sigma$ defined on the state space $S\times \calM$ as follows.
The initial distribution is such that for any~$m \in \calM$,
state $(s,m)$ has probability $\alpha(m)$, and~$0$ for other states. For any pair of states
$(s,m)$ and~$(s',m')$, the probability of the transition $(s,m),a,(s',m')$ is
equal to $\sigma_a(s,m)(a) \cdot \delta(s,a,s')$
if $m' = \sigma_u(s,m,a)$, and to~$0$ otherwise.
A~\emph{run} of~$M_s^\sigma$ is an infinite sequence of the form
$(s_1,m_1),a_1,(s_2,m_2),a_2,\ldots$, where each
$(s_i,m_i),a_i,(s_{i+1},m_{i+1})$ is a transition with nonzero probability
in~$M_s^\sigma$, and~$s_1=s$. 
When considering the probabilities of events in $M_s^\sigma$, we
will often consider sets of runs of~$M$. Thus, given $E \subseteq
(SA)^\omega$, we denote by $\pr_{M,s}^\sigma[E]$ the probability of the runs 
of~$M_s^\sigma$ whose projection\footnote{The projection of a run
$(s_1,m_1),a_1,(s_2,m_2),a_2,\ldots$ in $M_s^\sigma$ to $M$ is simply the run $s_{1}a_{1}s_{2}a_{2}\ldots{}$ in $M$.} to~$M$ is in~$E$, i.e., the probability of event $E$ when $M$ is executed with initial state~$s$ and strategy $\sigma$. Note that every event has a uniquely defined probability \cite{vardi_FOCS85} (Carath\'eodory's extension theorem induces a unique probability measure on the Borel $\sigma$-algebra over $(SA)^\omega$).

\smallskip\noindent\textbf{Almost-sure reachability of ECs.} Let~$\Inf(\rho)$ denote the random variable representing the disjoint union of states and actions that occur
infinitely often in the run~$\rho$. By an abuse of notation,
we see $\Inf(\rho)$ as a sub-MDP~$M'$ if it contains
exactly the states and actions of~$M'$. 
It was shown that for any MDP~$M$, state~$s$, strategy~$\sigma$,
${\pr_{M,s}^\sigma[\Inf \text{ is an EC}]=1}$~\cite{DeAlfaro-phd97}.

\smallskip\noindent\textbf{Multiple reachability.} Given a subset~$T$ of states,
let $\Diamond T$ be the \emph{reachability objective w.r.t.~$T$}, defined as 
the set of runs visiting a state of~$T$ at least once.
The \emph{multiple reachability} problem consists, given MDP~$M$, state~$\initState$, target sets
$T_1,\ldots,T_{\queries}$, and probabilities~$\alpha_1,\ldots,\alpha_{\queries} \in [0,1] \cap \rat$, 
in deciding whether there exists a strategy~$\sigma \in \strats$
such that
$\bigwedge_{i = 1}^{q} \pr_{M,\initState}^\sigma[\Diamond T_i]\geq \alpha_i.$
The \emph{almost-sure multiple reachability} problem restricts to $\alpha_1=\ldots=\alpha_{\queries} = 1$.

\label{sec:percentilesProblem}
\smallskip\noindent\textbf{Percentile problems.} 
We consider \textit{payoff functions} among $\inf$, $\sup$, $\liminf$,
$\limsup$, mean-payoff, truncated sum (shortest path) and discounted sum. For
any run~$\rho=s_1a_1s_2a_2\ldots$, dimension~$l \in \{1,\ldots,d\}$, and weight
function $w$,
\vspace{-1mm}
\begin{itemize*}
\item
  $\inf_l(\rho) = \inf_{j\geq 1} w_l(a_j)$,
  $\sup_l(\rho) = \sup_{j\geq 1} {w_l(a_j)}$,
\item
  $\liminf_l(\rho) = \liminf_{j \rightarrow \infty} w_l(a_j)$,
  $\limsup_l(\rho) = \limsup_{j \rightarrow \infty} w_l(a_j)$,
\item 
  $\mpinf_l(\rho) = \liminf_{n \rightarrow \infty} \frac{1}{n} \sum_{j=1}^n w_l(a_j)$,   $\mpsup_l(\rho) = \limsup_{n \rightarrow \infty} \frac{1}{n} \sum_{j=1}^n w_l(a_j)$,
\item
  $\discSum{\discount_l}_l(\rho) = \sum_{j=1}^{\infty} \discount_l^{j}\cdot w_l(a_j)$, with $\discount_l \in \left] 0, 1\right[ \cap \rat$ a rational discount factor,
\item 
$\truncatedSum{\truncatedTarget}_l(\rho) = \sum_{j=1}^{n-1} w_l(a_j)$ with $s_{n}$ the first visit of a state in $\truncatedTarget \subseteq \states$. If $\truncatedTarget$ is never reached, then we assign $\truncatedSum{\truncatedTarget}_l(\rho) = \infty$.
\end{itemize*}
\vspace{-1mm}

For any payoff function~$f$, $f_l \geq v$ defines the runs~$\rho$ that satisfy $f_l(\rho)\geq v$.
A~\emph{percentile constraint} is of the form~$\pr_{M,\initState}^\sigma[ f_{l} \geq v] \geq \alpha$, where~$\sigma$ is to be synthesized given threshold value~$v$ and probability~$\alpha$. We study \emph{multi-constraint percentile queries} requiring to simultaneously satisfy~$q$ constraints each referring to a possibly different dimension.
Formally, given a $\dimension$-dimensional weighted MDP~$\markovProcess$, initial state $\initState \in \states$, payoff function~$f$,
dimensions $l_1,\ldots,l_q \in \{1,\ldots,\dimension\}$, value thresholds $v_1,\ldots,v_\queries \in \rat$ and probability thresholds $\alpha_1,\ldots, \alpha_\queries \in [0,1] \cap \rat$, the \emph{multi-constraint percentile problem} asks if there exists a strategy~$\strat \in \strats$ such that query
\[
\query \coloneqq \bigwedge_{i = 1}^{q}\; \pr_{M,\initState}^\strat\big[f_{l_{i}} \geq v_i\big] \geq
	\alpha_i
	\]
	holds. We can actually solve queries
$\exists?\, \sigma$, $\bigvee_{i=1}^m \bigwedge_{j=1}^{n_i} \pr_{M,\initState}^\strat\big[f_{l_{i,j}} \geq v_{i,j} \big]\geq \alpha_{i,j}$.
We present our results for conjunctions of constraints only since
the latter is equivalent to verifying the disjuncts independently:
in other terms, to
$\bigvee_{i=1}^m \exists \sigma \bigwedge_{j=1}^{n_i} \pr_{M,\initState}^\strat\big[f_{l_{i,j}} \geq v_{i,j} \big]\geq \alpha_{i,j}$.

We distinguish \textit{single-dimensional percentile problems} ($\dimension = 1$) from
\textit{multi-dimensional} ones ($\dimension > 1$).
We assume w.l.o.g.~that $\queries \geq \dimension$ otherwise one can simply neglect unused dimensions.
For some cases, we will consider the \emph{$\varepsilon$-relaxation} of the problem,
which consists in ensuring each value~$v_i-\varepsilon$ with probability~$\alpha_i$.

\smallskip\noindent\textbf{Complexity.} We assume binary encoding of constants,
and define the \emph{model size} $|M|$, a polynomial in $\vert \states \vert$ and the size of the \textit{encoding} of weights and probabilities (e.g., $\log_{2} W$ with $W$ the largest absolute weight), as the size of the representation of~$M$; and the \emph{query size} $|\query|$, a polynomial in the number of constraints $q$ and the encoding of thresholds, that of the query.
The \emph{problem size} refers to the sum of the two.

\smallskip\noindent\textbf{Memory and randomness.} 
Throughout the paper, we will study the memory requirements for strategies w.r.t.~different classes of percentile queries.
Here, we show, by a simple example, that randomness is always needed for all payoff functions.

\begin{lemma}
Randomized strategies are necessary for multi-dimensional percentile queries for any payoff function.
\end{lemma}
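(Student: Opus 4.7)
The plan is to exhibit a tiny two-dimensional MDP together with a two-constraint query that no pure strategy can satisfy while a randomized one can. I would take initial state $\initState$ with two available actions $a, b$ leading deterministically to distinct absorbing states $t_{1}, t_{2}$, each with a single self-looping action whose weight is that of the action that entered the state; concretely, $a$ and the self-loop at $t_{1}$ both have weight vector $(1,0) \in \integ^{2}$, and $b$ and the self-loop at $t_{2}$ both have weight $(0,1)$. Along any run, once an action out of $\initState$ is fixed, the weight sequence consists entirely of copies of $(1,0)$ or $(0,1)$.

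The key observation is that $\initState$ is visited exactly once on any run, so a \emph{pure} strategy collapses there to a single deterministic choice of $a$ or $b$; this choice then determines the whole run with probability one. Hence, for any payoff $f$ in $\{\inf, \sup, \liminf, \limsup, \mpinf, \mpsup\}$, a pure strategy yields first-dimension payoff either always $1$ (choice $a$) or always $0$ (choice $b$), and symmetrically for the second dimension. Taking the query $\pr_{M,\initState}^{\strat}[f_{1} \geq 1] \geq 1/2 \,\wedge\, \pr_{M,\initState}^{\strat}[f_{2} \geq 1] \geq 1/2$ therefore rules out every pure strategy, whereas the memoryless randomized strategy selecting each of $a, b$ with probability $1/2$ satisfies both constraints with equality.

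To cover the two remaining payoffs I would just tweak the thresholds. For truncated sum I take $\truncatedTarget = \{t_{1}, t_{2}\}$, so the run triggered by $a$ yields $\truncatedSum{\truncatedTarget}_{1} = 1$ and $\truncatedSum{\truncatedTarget}_{2} = 0$ (symmetrically for $b$), and the argument carries through with threshold $v = 1$. For discounted sum with rate $\lambda \in (0,1)$, reaching $t_{i}$ produces value $\lambda/(1-\lambda)$ on dimension $i$ and $0$ on the other, so any threshold strictly between $0$ and $\lambda/(1-\lambda)$ separates the two cases and the previous analysis still applies.

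There is no real obstacle: the point is simply that a one-shot nondeterministic choice at $\initState$ can be split into two positive-probability outcomes by randomization but not by any pure, even history-dependent, strategy; and selecting thresholds that work uniformly across the listed payoff functions is straightforward.
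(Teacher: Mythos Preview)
Your proposal is correct and follows essentially the same approach as the paper: a one-shot deterministic choice at the initial state between two absorbing branches with incomparable payoff vectors, together with a two-constraint query at probability threshold $1/2$. Your explicit choice of constant weight sequences is in fact slightly tidier than the paper's, which handles $\inf$ and $\liminf$ by swapping the weight vectors, and your single target set for $\truncatedSum{T}$ replaces the paper's per-constraint targets, but these are cosmetic differences.
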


\begin{proof}
Let $M$ be a 2-dim.~deterministic MDP with $\states = \{s_{0}, s_{1}, s_{2}\}$, $A = \{a, b\}$ and the transition function defined as $\delta(s_0, a, s_1) = 1$, $\delta(s_0, b, s_2) = 1$, $\delta(s_1, a, s_1) = 1$ and $\delta(s_2, b, s_2) = 1$. Essentially there are only two possible runs in this MDP: $\rho_{1} = s_{0} (a\,s_{1})^{\omega}$ and $\rho_{2} = s_{0} (b\,s_{2})^{\omega}$. Assume that the weight and the payoff functions are chosen such that $f(\rho_{1}) = (1,0)$ and $f(\rho_{2}) = (0,1)$: they are incomparable. Consider the query \[\query \coloneqq \pr_{M,s_{0}}^\strat \big[f_{1} \geq 1/2\big] \geq 1/2\quad \wedge \quad \pr_{M,s_{0}}^\strat \big[f_{2} \geq 1/2\big] \geq 1/2.\] It is easy to see that $\query$ can only be satisfied by a strategy that chooses between $a$ and $b$ with equal probability, hence no pure strategy satisfies the query. 
Note that here $f$ can be chosen anything among $\sup,\limsup,\mpinf,\mpsup$, $\discSum{\discount_l}$ with appropriate $\discount_l$,
and $\truncatedSum{T_l}$ with target sets~$T_1=\{s_1\}$ and~$T_2=\{s_2\}$ respectively for each query. For $\inf$, and~$\liminf$, we may switch the weight vectors to obtain the same result.
\end{proof}

\section{Multiple Reachability and Contraction of MECs}
\label{section:reachsafe}

\smallskip\noindent\textbf{Multiple reachability.} The multiple reachability  problem
was studied \cite{EKVY-lmcs08} where an algorithm based on a linear program (LP) of size polynomial in the model and exponential in the query was given.
As a particular case, it was proved that restricting the target sets to absorbing states yields a polynomial-size LP. We will use this LP later in Fig.~\ref{fig:mpinf-lp} in Section~\ref{section:mp}.

\begin{theorem}[\cite{EKVY-lmcs08}]
  \label{thm:absorbing-reachsafe}
  Memoryless strategies suffice for multiple reachability with absorbing target states, and
  can be decided and computed in polynomial time.
  With arbitrary targets, exponential-memory strategies (in query size) can be computed in time polynomial in the model and exponential in the query.
\end{theorem}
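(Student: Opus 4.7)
The plan is to handle the two parts of the statement separately: solve the absorbing-target case directly via a polynomial-size linear program, and reduce the arbitrary-target case to it through an exponential product MDP that records which targets have already been visited.

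For absorbing targets, I would set up a flow LP with variables $y_{s,a}$ for each non-target state $s$ and action $a \in A(s)$, representing the expected number of times $a$ is played from $s$, and variables $x_t$ for $t \in \bigcup_i T_i$ for the absorbing probability at $t$. The constraints are: flow conservation $\sum_{a \in A(s)} y_{s,a} = \mathbf{1}_{s = \initState} + \sum_{s',a'} y_{s',a'}\,\delta(s',a',s)$ at each non-target $s$; $x_t = \sum_{s',a'} y_{s',a'}\,\delta(s',a',t)$ at each target $t$; and the percentile constraints $\sum_{t \in T_i} x_t \geq \alpha_i$ for $i = 1, \dots, q$. Since targets are absorbing, the $y_{s,a}$ are finite whenever the system is feasible, and the LP has size polynomial in $|M|$ and $q$. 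From a feasible solution I extract a memoryless randomized strategy by $\sigma(s)(a) = y_{s,a}/\sum_{a'} y_{s,a'}$ (arbitrary when the denominator is zero); a standard balance argument on the induced Markov chain confirms that all percentile constraints are met, and conversely every memoryless strategy induces a feasible flow, so memoryless suffices.

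For arbitrary targets, I would build the product $M' = M \times 2^{\{1,\dots,q\}}$ with states $(s, I)$, where $I$ tracks the set of targets already visited, and transitions $(s,I) \xrightarrow{a} (s', I \cup \{i : s' \in T_i\})$ of probability $\delta(s,a,s')$. In $M'$, the event ``$T_i$ visited'' becomes reaching the trap-closed set $\widehat{T}_i = \{(s, I) : i \in I\}$. To recast this in the absorbing-target framework, for each subset $I$ reachable in $M'$, route the very first entry into the $I$-layer through a dedicated absorbing sink $\tau_I$ that still records $I$'s identity; the new percentile constraint for $i$ becomes $\sum_{I \ni i} \pr[\text{reach } \tau_I] \geq \alpha_i$. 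The resulting LP has size polynomial in $|M'| = |S| \cdot 2^q$, hence polynomial in $|M|$ and exponential in $q$. A memoryless solution on $M'$ unfolds to a strategy on $M$ that uses $I$ as memory, of size $2^q$.

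The main obstacle is justifying the layered reduction for arbitrary targets: one must verify that memoryless strategies on $M'$ exactly capture the information needed to jointly optimize the probabilities $\pr^{\sigma}_{M,\initState}[\Diamond T_i]$, so that $2^q$ memory suffices on $M$, and that rerouting through the $\tau_I$ sinks preserves these probabilities faithfully (in particular that splitting the entry flow by its terminal subset does not double-count). Once these points are settled, the complexity bounds follow immediately from the polynomial-time solvability of the LP and the size of the product.
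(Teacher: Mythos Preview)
The paper does not give its own proof of this theorem: it is quoted from~\cite{EKVY-lmcs08}, and the surrounding text only says that the cited work provides a polynomial-size LP for absorbing targets (later reused in Fig.~\ref{fig:mpinf-lp}) and a product-based exponential algorithm for arbitrary targets. Your treatment of the absorbing case is exactly that LP, so on that part you are aligned with the cited construction.

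For arbitrary targets, the product $M \times 2^{\{1,\ldots,q\}}$ is the right idea, but your reduction back to absorbing targets, as written, does not work. You propose to ``route the very first entry into the $I$-layer through a dedicated absorbing sink~$\tau_I$''. If the sink is genuinely absorbing and the routing is forced, then every run terminates as soon as it hits its \emph{first} nonempty layer~$I_1$, and $\pr[\text{reach } \tau_I]$ becomes the probability that $I$ is the first set of targets visited, not the probability that the run eventually visits at least the targets in~$I$. A run that first sees~$T_1$ and only later sees~$T_2$ would be sent to~$\tau_{\{1\}}$ and would never count toward the constraint for~$T_2$, so $\sum_{I \ni i}\pr[\tau_I]$ underestimates $\pr[\Diamond T_i]$.

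What the cited construction actually does (and what the paper mirrors in its own nested-reachability proof, Theorem~\ref{thm:monotonic-reach}) is to add, from every state $(s,I)$, an \emph{optional} action to an absorbing state~$\bot_I$, and then argue that an optimal strategy can be assumed to take this action eventually with total probability one. The relevant quantity is the \emph{limit} set $I_\infty$ of targets ever visited, not the first nonempty one; the optional-stop action lets the strategy declare~$I_\infty$ once no further target is reachable with positive probability. That argument---finite-memory strategies plus a uniform positive bound on further progress, exactly as in the proof of Theorem~\ref{thm:monotonic-reach}---is the piece your sketch is missing. Once you replace the forced first-entry routing by the optional-stop construction and supply this limit argument, the rest of your plan (memoryless on the product, hence $2^q$ memory on~$M$, LP polynomial in $|S|\cdot 2^q$) goes through.
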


In this section, we improve over this result by showing that the case of almost-sure multiple reachability is \PSPACE-complete,
with a recursive algorithm and a reduction from QBF satisfiability.
This also shows the \PSPACE-hardness of the general problem. Moreover, we show that exponential memory is required for strategies,
following a construction of \cite{DBLP:journals/acta/ChatterjeeRR14}.

\begin{theorem}
  \label{thm:asreach}
  The almost-sure multiple reachability problem is \PSPACE-complete,
  and strategies need exponential memory in the query size.
\end{theorem}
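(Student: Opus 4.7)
The statement combines three claims: PSPACE membership, PSPACE-hardness, and an exponential lower bound on memory. I sketch a plan for each.

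\textbf{PSPACE upper bound.} My plan is a recursive algorithm exploiting the MEC decomposition. The underlying fact is that $\Inf$ is almost surely an EC and that, within a MEC, a memoryless randomized strategy visits every state infinitely often with probability one. Thus, on every almost-sure winning play, every target intersected by the terminal MEC is automatically reached. The algorithm therefore guesses a MEC $C$ to be the terminal one; this splits the index set $\{1,\ldots,q\}$ into those targets covered by $C$ and those, call them $I$, that must be reached along the way. The problem reduces to a strictly smaller instance: almost-surely reach $C$ while visiting each $T_i$ for $i\in I$. Iterating the guess, the recursion has depth at most $q$, and each level only stores a sub-MDP identifier and a subset of indices, giving a polynomial-space procedure. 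The main obstacle will be pinning down the recursion's invariant so that every winning strategy's tail behaviour is captured by some guessed $C$, while the induced subproblem stays within the same framework.

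\textbf{PSPACE-hardness via QBF.} My plan is to reduce from the truth of a quantified Boolean formula $\Phi = Q_1 x_1 \cdots Q_n x_n\, \phi$, where $\phi = \bigwedge_{j=1}^m c_j$ is in CNF. I build an MDP whose backbone is a sequence of variable gadgets: at the gadget for $x_i$, the controller picks its value if $Q_i = \exists$, and a fair coin does so if $Q_i = \forall$, after which the run visits a record state $r_i^+$ or $r_i^-$ encoding the chosen value. For each clause $c_j$, I let $T_j = \{r_i^+ : x_i \in c_j\} \cup \{r_i^- : \neg x_i \in c_j\}$, so that $T_j$ is reached along a play iff the resulting assignment satisfies $c_j$. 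The conjunctive almost-sure query $\bigwedge_{j=1}^m \pr_{M,\initState}^\strat[\Diamond T_j] \geq 1$ then holds iff, for every outcome of the coin flips (every $\forall$-assignment), the controller's history-dependent $\exists$-choices yield an assignment satisfying $\phi$, i.e.\ iff $\Phi$ is true. The reduction is of polynomial size.

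\textbf{Exponential memory lower bound.} My plan is to adapt the construction of~\cite{DBLP:journals/acta/ChatterjeeRR14}, which builds an MDP in which the next correct action at a bottleneck state depends on the precise subset of the $q$ targets visited so far; any winning strategy must then distinguish the $2^q$ such subsets, forcing exponential memory in the query size.
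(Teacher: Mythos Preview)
Your PSPACE-hardness reduction from QBF and the exponential-memory lower bound (via the construction of~\cite{DBLP:journals/acta/ChatterjeeRR14}) match the paper's arguments.

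For the PSPACE upper bound, however, your MEC-based recursion has a gap. You guess a terminal MEC~$C$ and claim the problem reduces to the ``strictly smaller'' instance of almost-surely reaching~$C$ while visiting each~$T_i$ for $i\in I$, with recursion depth at most~$q$. But nothing forces~$C$ to intersect any target set---all targets may well consist of transient states---so $|I|$ can equal~$q$, and the subproblem then carries $q{+}1$ almost-sure reachability constraints (the~$T_i$'s together with~$C$) rather than fewer. The measure you rely on need not decrease, and the recursion as described does not terminate. You flag ``pinning down the recursion's invariant'' as the main obstacle, but the sketch does not supply one.

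The paper sidesteps this by recursing on the \emph{first} target state reached rather than on the terminal MEC. Concretely, it makes every state in $T=\bigcup_i T_i$ absorbing and, for each $x\in T$, recursively checks whether the remaining targets $\{T_i : x\notin T_i\}$ can be almost-surely reached from~$x$ in the original MDP; since $x$ lies in at least one~$T_i$, this set of indices is strictly smaller, bounding the recursion depth by~$q$. A final single-target almost-sure reachability check (to the set of $x$'s whose recursive call succeeded) finishes each level. This yields a clean polynomial-space procedure with transparent termination, whereas your tail-MEC idea, while natural, would need a different decreasing measure that you have not identified.
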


We first show the \PSPACE-completeness of the almost-sure multiple reachability problem.

\begin{lemma}
  The almost-sure multiple reachability problem is \PSPACE-complete.
\end{lemma}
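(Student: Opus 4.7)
My plan is to establish PSPACE-membership and PSPACE-hardness separately. For membership, I would design a recursive alternating-polynomial-space procedure that, given $s$ and a family $\mathcal{T} \subseteq \{T_1, \ldots, T_q\}$ of remaining targets, decides whether $s$ wins the almost-sure multi-reachability game for $\mathcal{T}$. The key observation is: if a winning strategy exists, then in particular it almost-surely reaches $\bigcup \mathcal{T}$, and at the stopping time of first visit to $\bigcup \mathcal{T}$ it almost surely lands on a state $s'$ that itself wins for the strictly smaller family $\mathcal{T}' = \mathcal{T} \setminus \{T_j : s' \in T_j\}$; conversely, concatenating a witness for the first visit with the recursive witnesses at each landing state yields a global witness. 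Crucially, the reachability subgoal ``almost-surely reach the good set $G_\mathcal{T} = \{s' \in \bigcup \mathcal{T} : s' \text{ wins for } \mathcal{T}'\}$ in the MDP with $\bigcup \mathcal{T}$ made absorbing'' is a single-target AS-reachability question, which is known to admit a pure memoryless witness. The procedure therefore existentially guesses a memoryless strategy $\sigma$ (polynomial bits), verifies in polynomial time that $\sigma$ almost-surely reaches $\bigcup \mathcal{T}$, and then universally branches over the polynomially many first-visit states, recursing on each with a strictly smaller family. Recursion depth is bounded by $q$, yielding alternating polynomial space and hence $\PSPACE$ by the Chandra--Kozen--Stockmeyer theorem.

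For hardness, I would reduce from TQBF. Given $\psi = Q_1 x_1 \ldots Q_n x_n\, \phi$ with $\phi = \bigwedge_{j=1}^m C_j$ in CNF, I construct a polynomial-size MDP with variable-level states $v_1, \ldots, v_{n+1}$ and, for each $i$, two witness states $w_{i,0}, w_{i,1}$: at $v_i$, if $Q_i = \exists$ the controller deterministically picks the successor witness, while if $Q_i = \forall$ nature chooses uniformly; each $w_{i,b}$ deterministically moves to $v_{i+1}$, and $v_{n+1}$ is absorbing. For each clause $C_j$ I set $T_j = \{w_{i,b} : \text{literal } x_i^b \text{ appears in } C_j\}$, using $b=1$ for a positive and $b=0$ for a negative occurrence. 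Since each run visits exactly one witness per variable, visiting some element of $T_j$ is equivalent to the induced assignment satisfying $C_j$; and since every sequence of nature's choices has positive probability, almost-sure reachability of every $T_j$ simultaneously is equivalent to the controller having, as its history-dependent strategy, a Skolem-function witness for $\psi$ - exactly the QBF truth condition.

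The main obstacle is verifying the recursion's correctness: I must argue both that any global winning strategy, truncated at first visit to $\bigcup \mathcal{T}$, induces a first-visit distribution supported in $G_\mathcal{T}$, and that such a truncated behavior can always be realized by a \emph{pure memoryless} strategy in the modified MDP. The first point follows by applying the strong Markov property at the first-visit stopping time and using the IH to identify $G_\mathcal{T}$ with the set of winning landing states. The second reduces to the standard fact that AS reachability of a distinguished subset in an MDP with absorbing obstacles admits pure memoryless witnesses, applied to the modified MDP where states of $\bigcup \mathcal{T} \setminus G_\mathcal{T}$ are treated as absorbing dead-ends. The QBF reduction is comparatively mechanical once this encoding is fixed, the only conceptual point being that multi-target AS reach is essential for hardness since single-target AS reach lies in $\PTIME$.
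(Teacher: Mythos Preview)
Your proposal is correct and takes essentially the same approach as the paper: depth-$q$ recursion that reduces each level to a single-target almost-sure reachability question for membership, and the identical QBF encoding for hardness (the paper computes the ``good'' landing set $\calT$ explicitly and then checks almost-sure reachability of it, whereas you equivalently interleave via an existential guess of a pure memoryless strategy followed by universal branching over its first-visit states). One small slip: your procedure runs in alternating polynomial \emph{time} (polynomial work per level, depth $q$), not alternating polynomial space, so the Chandra--Kozen--Stockmeyer invocation should read $\mathrm{APTIME}=\PSPACE$; citing alternating polynomial space would only yield \EXPTIME.
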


\begin{proof}
  We start by showing \PSPACE-membership. Let~$M$ be an MDP,~$s_0$ a state, and $T_1,\ldots,T_q$ target sets. We write
  $T=T_1\cup \ldots \cup T_q$. Note first that we know how to solve the problem in polynomial time for~$q=1$.
  Let~$M'$ be the MDP obtained by~$M$ by making all states in~$T$ absorbing. The procedure works as follows. For each state~$x \in T$,
  let us define $I = \{ 1\leq i\leq q \mid x \not \in T_i\}$; we clearly have $|I|< n$.
  We recursively verify whether there is a strategy almost-surely satisfying the multiple reachability objective $(T_i)_{i \in I}$.
  Let~$\calT$ denote all states of~$T$ for which the recursive call returned positively. We now check in polynomial time whether
  the set~$\calT$ can be reached almost-surely from~$s_0$.
  Note that the recursive call depth is linear, so the whole procedure uses polynomial space.

  We now prove the equivalence between~$M$ and~$M'$. Assume that there is a strategy~$\sigma$ almost-surely reaching~$\calT$ in~$M'$.
  This strategy can be followed in~$M$ until some state~$x$ of~$\calT$ is reached, which happens almost-surely. But we know, by the recursive callof our procedure, that from any such state~$x \in \calT$ there exists a strategy almost-surely satisfying the rest of the reachability objectives.
  Thus, by extending~$\sigma$ in each state~$x \in \calT$ by these strategies, we construct a solution to the multiple reachability problem in~$M$. Notice that the constructed strategy uses linear memory since it is ``memoryless'' between each switch.

  Conversely, assume that there is a strategy~$\sigma$ satisfying the multiple reachability query in~$M$ from~$s_0$.
  Towards a contradiction, assume that some state $x \in T \setminus \calT$ is reached with positive probability in~$M$ under~$\sigma$, 
  thus also in~$M'$ under the same strategy. We know by the recursive call of our procedure that the remaining targets cannot be satisfied 
  almost-surely by any strategy from state~$x$ in~$M$. It follows that strategy $\sigma$ fails to satisfy all targets almost-surely 
  from~$s_0$, a contradiction.

  To show \PSPACE-hardness, we reduce the truth value of a quantified Boolean formula (QBF) to our problem. An instance of QBF is a quantified Boolean formula over $X=\{x_1,x_2,\dots,x_n\}$

   $$\Psi \equiv \exists x_1 \forall x_2 \exists x_2 \dots \forall x_{n-1} \exists x_n \cdot C_1 \land C_2 \land \dots C_m$$
   \noindent
    where each clause $C_i$ is the disjunction of 3 literals taken in $\{ x , \neg x \mid x \in X\}$. 
    From $\Psi$, we construct an (acyclic) MDP as shown in Fig.~\ref{fig:qbf}. For each variable $x_i$, there are three states called $x_i$, $f_i$ and $t_i$ in the MDP. In a state $x_i$ that corresponds to an {\em existentially quantified} variable, there are two actions that are available: $\top$ and $\bot$. The action $\top$ visits (deterministically) the state $t_i$ while the action $\bot$ visits the state $f_i$, and then in the two cases, the run proceeds to the state for the next variable.  Intuitively, choosing $\top$ in $x_i$ corresponds to the choice of truth value {\sf true} for $x_i$, and $\bot$ to truth value {\sf false}. In a state $x_i$ that corresponds to an {\em universally quantified} variable, there is only the action $*$ available and the successor is chosen uniformly at random between $f_i$ and $t_i$. The targets are defined as follows: for each clause $C_j$, the target set $T_j=\{ t_i \mid x_i \in C_j \} \cup \{ f_i \mid \neg x_i \in C_j \}$ must be visited with probability one. Clearly, given the value assigned to a variable $x_i$, we visit exactly the set of target sets $T_j$ that correspond to the clauses that are made true by the valuation of $x_i$. It should be clear now that the histories in the MDP are in bijection with the valuation of the Booelan variables in $\Psi$ and that the set of valuations that satisfies $\Psi$ correspond exactly to the histories that visits all the sets $T_j$, $1 \leq j \leq n$ with probability one.

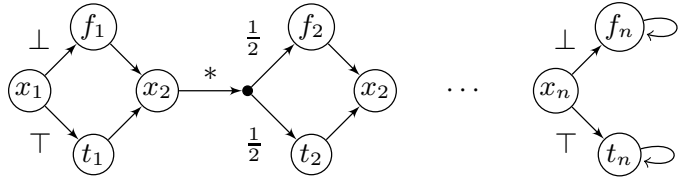
\begin{figure}[t]
  \centering  
  \scalebox{1.2}{\begin{tikzpicture}
    \tikzstyle{every state}=[node distance=1cm,minimum size=10pt, inner sep=1pt];
    \node[state] at (0,0) (x1){$x_1$};
    \node[state,below right of=x1] (t1) {$t_1$};
    \node[state,above right of=x1] (f1) {$f_1$};
    \node[state,below right of=f1] (x2) {$x_2$};
    \node[state,right of=x2, fill=black,minimum size=3pt] (x2b) {};
    \node[state,below right of=x2b] (t2) {$t_2$};
    \node[state,above right of=x2b] (f2) {$f_2$};
    \node[state,below right  of=f2] (x3) {$x_2$};
    \node[right of=x3] (etc) {$\cdots$};
    \node[state,right of=etc] (xn) {$x_n$};
    \node[state,below right of=xn] (tn) {$t_n$};
    \node[state,above right of=xn] (fn) {$f_n$};
    \path[-latex']
    (x1) edge node[below left] {$\top$} (t1) 
    (x1) edge node[above left] {$\bot$} (f1)
    (t1) edge (x2)
    (f1) edge (x2)
    (x2) edge node[above] {$*$} (x2b)
    (x2b) edge node[below left] {$\frac{1}{2}$} (t2)
    (x2b) edge node[above left] {$\frac{1}{2}$} (f2)
    (t2) edge (x3)
    (f2) edge (x3)
    (xn) edge node[below left] {$\top$} (tn)
    (xn) edge  node[above left] {$\bot$} (fn)
    (fn) edge[loop right] (fn)
    (tn) edge[loop right] (tn);
  \end{tikzpicture}}
  \caption{Reduction for the QBF formula
    $\exists x_1 \forall x_2 \ldots \exists x_n C_1 \land \ldots \land C_m$.
  The objectives are $T_j = \{t_i \mid x_i \in C_j\}  \cup \{f_i \mid \lnot x_i \in C_i\}$
for all $1\leq j\leq m$.}
  \label{fig:qbf}
\end{figure}

  Now, we claim that there is a strategy to reach each set $T_j$, $1 \leq j \leq n$, with probability one if and only if the formula $\Psi$ is true. Indeed, if $\Psi$ is true, we know that there exists for each existentially quantified variable $x_i$ a choice function $g_{x_i}$ which assign a truth value to $x_i$ given the truth values chosen for the variables that appears before $x_i$ in the quantification block. These choice functions naturally translate into a (deterministic memryfull) strategy that mimics the choices of truth values by choosing between $\bot$ and $\top$ accordingly. We get that if the formula is true (all closed are made true) then the associated strategy visits all the target sets with probability one.   
  
  For the other direction, we first note that it is not useful for the scheduler to play a randomised strategy. As the graph of the MDP is acyclic (except for the two states $t_n$ and $f_n$ that have a self loop), all the target sets are visited with probability one if and only if all the outcomes of the strategy visits all the target sets. So, if the scheduler plays randomly say in state $x_i$ then all the resulting outcomes for action $\bot$ and all the resulting outcomes for action $\top$ must visit all the target sets, so both choices need to be good and so there is no need for randomisation and the scheduler can safely choose one of the two arbitrarily. So, pure strategies are sufficient and but we have seen that pure strategies corresponds exactly to the choice functions in the QBF problem. So is clear that from a winning strategy for the scheduler, we can construct a choice function that makes the formula true.
  \end{proof}

We establish an exponential lower bound on the memory requirements based on a family of MDPs depicted in Fig.~\ref{fig:multiReach_exp_mem} and inspired from \cite[Lemma 8]{DBLP:journals/acta/ChatterjeeRR14}.

\begin{lemma}
  \label{lem:multiReach_expMemoryLB}
  Exponential-memory in the query size is necessary for almost-sure multiple reachability.
\end{lemma}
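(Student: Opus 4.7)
The plan is to exhibit, for each $n \geq 1$, an MDP $M_n$ together with an almost-sure multiple reachability query, both of size polynomial in $n$, and to show that every winning strategy requires at least $2^n$ memory states; combined with the \PSPACE\ algorithm above, this pins down the memory complexity tightly. Following \cite[Lemma~8]{DBLP:journals/acta/ChatterjeeRR14}, the MDP of Fig.~\ref{fig:multiReach_exp_mem} is built from $n$ consecutive ``choice gadgets'', each offering a binary action choice, interleaved with stochastic branches that can lead to target states. The $n$ target sets $T_1, \ldots, T_n$ are arranged so that reaching $T_i$ with positive probability during a single traversal of the $n$ gadgets requires the controller to play a specific $n$-bit pattern of choices characteristic of $T_i$.

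First I would verify that a winning strategy does exist: by cycling through all $2^n$ possible $n$-bit patterns, each played infinitely often along every infinite run, every target $T_i$ is visited infinitely often almost surely. This already uses exactly $2^n$ memory states, so the bound is tight.

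The crux is the lower bound. Suppose for contradiction that $\sigma$ is a finite-memory winning strategy with Moore machine $(\calM, \sigma_a, \sigma_u, \alpha)$ and $|\calM| < 2^n$. I would analyse the induced Markov chain $M_{\initState}^\sigma$ and its bottom strongly connected components. Since each $T_i$ must be visited with probability one, every BSCC reached from $\initState$ with positive probability must, from every one of its states, almost-surely visit each $T_i$. Within such a BSCC, each traversal of the $n$ gadgets induces a distribution over $n$-bit patterns determined solely by the memory state entering the traversal. A pigeonhole argument on the $|\calM| < 2^n$ possible entering memory states then yields some target $T_i$ that is not produced by any of these patterns with positive probability, contradicting almost-sure reachability of $T_i$ from every BSCC state.

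The main obstacle will be handling randomisation cleanly, since a single memory state can emit several patterns with positive probability, so the pigeonhole must be applied to \emph{supports} of the pattern distributions rather than to deterministic patterns. I would address this either via a Kuhn-style decomposition of $\sigma$ into a convex combination of pure finite-memory strategies, applying the pigeonhole bound to each pure component and then aggregating; or directly, by bounding for each $T_i$ the set of memory-state sequences along a gadget traversal that can emit the $T_i$-pattern with positive probability, and using the fact that these $2^n$ pattern-producing sequences must be pairwise distinct, forcing $|\calM| \geq 2^n$.
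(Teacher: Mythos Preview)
Your description of the construction does not match Fig.~\ref{fig:multiReach_exp_mem}, and as you have described it, it would not yield an exponential lower bound. You posit $n$ targets $T_1,\ldots,T_n$, each associated with one specific $n$-bit action pattern, together with a cycling structure in which the controller repeatedly traverses the gadgets. But then only $n$ distinct patterns are ever required---one per target---so a strategy with $O(n)$ memory that simply rotates through those $n$ patterns already wins almost surely. Your claim that one must ``cycle through all $2^n$ possible $n$-bit patterns'' is not supported by the construction you sketch, and the subsequent BSCC/pigeonhole argument inherits this defect: a pigeonhole on fewer than $2^n$ memory states gives nothing when only $n$ patterns matter.

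The paper's construction is structurally different and simpler. The MDP is essentially acyclic (a single pass, no recurrence): first $k$ \emph{stochastic} gadgets each pick uniformly between $s_{i,L}$ and $s_{i,R}$; only afterwards do $k$ \emph{controllable} gadgets let the controller pick $s'_{i,L}$ or $s'_{i,R}$. There are $q=2k$ targets, namely $\{s_{i,L},s'_{i,L}\}$ and $\{s_{i,R},s'_{i,R}\}$ for each $i$. To hit both targets for index $i$ almost surely, the controller must choose at $s'_i$ the branch \emph{opposite} to the one chosen stochastically at $s_i$. The memory-forcing mechanism is therefore \emph{remembering} the $k$ random bits seen in the first phase, not producing many patterns across repeated traversals. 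The lower bound is then a direct pigeonhole on the $2^k$ possible stochastic outcomes: with fewer than $2^k$ memory states, two distinct outcomes enter the controllable phase in the same memory state and hence induce the same (possibly randomised) behaviour at every $s'_i$; at any coordinate $i$ where the two outcomes differ, if the controller goes to $s'_{i,L}$ with probability $p$ then one of the two targets for index $i$ is missed with probability $\tfrac{1}{2}\cdot\min(p,1-p)$ or $\tfrac{1}{2}\cdot(1-p)$ respectively, and these cannot both vanish. No BSCC analysis and no Kuhn-style decomposition into pure strategies is needed.
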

\begin{proof}
Consider the unweighted MDP $\markovProcess$ depicted in Fig.~\ref{fig:multiReach_exp_mem}. The MDP is composed of $k$ gadgets where a state between $s_{i,L}$ and $s_{i,R}$ is stochastically chosen (they are equiprobable), followed by $k$ gadgets where the controller can decide to visit either $s'_{i,L}$ or $s'_{i,R}$. We define an almost-sure multiple reachability problem for target sets 
\begin{equation*}
T_{i} = \{s_{1,L}, s'_{1,L}\}, \{s_{1,R}, s'_{1,R}\}, \{s_{2,L}, s'_{2,L}\}, \ldots{}, \{s_{k,L}, s'_{k,L}\}, \{s_{k,R}, s'_{k,R}\}.
\end{equation*}
Hence, this problem requires $q = 2\cdot k$ constraints to be defined. We claim that a strategy satisfying this problem cannot be expressed by a Moore machine containing less than $2^{k} = 2^{\frac{q}{2}}$ memory states.

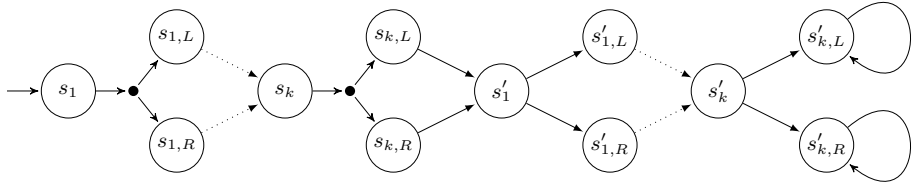
\begin{figure}[htb]
  \centering   
  \scalebox{0.9}{\begin{tikzpicture}[->,>=stealth',shorten >=1pt,auto,node
    distance=2.5cm,bend angle=45,scale=0.4, inner sep=0pt, font=\small]
    \tikzstyle{p1}=[draw,circle,text centered,minimum size=8mm]
    \tikzstyle{p2}=[fill,circle,text centered,minimum size=1.5mm]
    \node[p1]  (0)  at (0, 0) {$s_{1}$};
    \node[p2]  (0b)  at (2.4, 0) {};
    \node[p1]  (1) at (4, 2) {$s_{1,L}$};
    \node[p1]  (2) at (4, -2)  {$s_{1,R}$};
    \node[p1]  (3) at (8, 0)  {$s_{k}$};
    \node[p2]  (3b)  at (10.4, 0) {};
    \node[p1]  (4)  at (12, 2) {$s_{k,L}$};
    \node[p1]  (5)  at (12, -2) {$s_{k,R}$};
    \node[p1]  (6)  at (16, 0) {$s'_{1}$};
    \node[p1]  (7) at (20, 2) {$s'_{1,L}$};
    \node[p1]  (8) at (20, -2)  {$s'_{1,R}$};
    \node[p1]  (9) at (24, 0)  {$s'_{k}$};
    \node[p1]  (10)  at (28, 2) {$s'_{k,L}$};
    \node[p1]  (11)  at (28, -2) {$s'_{k,R}$};
    \coordinate[shift={(-5mm,0mm)}] (init) at (0.west);
    \path     
    (0) edge (0b)
    (0b) edge[shorten <=1pt] (1)
    (0b) edge[shorten <=1pt] (2)
    (3) edge (3b)
    (3b) edge[shorten <=1pt] (4)
    (3b) edge[shorten <=1pt] (5)
    (10) edge [loop right, out=40, in=320,looseness=3, distance=4cm] (10)
    (11) edge [loop right, out=40, in=320,looseness=3, distance=4cm] (11)
    (init) edge (0);
	\draw[dotted,->,>=latex] (1) to (3);
	\draw[dotted,->,>=latex] (2) to (3);
	\draw[->,>=latex] (4) to (6);
	\draw[->,>=latex] (5) to (6);
	\draw[->,>=latex] (6) to (7);
	\draw[->,>=latex] (6) to (8);
	\draw[dotted,->,>=latex] (7) to (9);
	\draw[dotted,->,>=latex] (8) to (9);
	\draw[->,>=latex] (9) to (10);
	\draw[->,>=latex] (9) to (11);
      \end{tikzpicture}}
      \vspace{-10mm}
      \caption{Family of multiple reachability problems requiring exponential memory.}
\label{fig:multiReach_exp_mem}
  \end{figure}

Indeed, it is clear that to ensure almost-sure reachability of all sets $T_{i}$, the controller has to chose in state $s'_{i}$ the exact opposite action of the one stochastically chosen in $s_{i}$. Remembering the $k$ choices made in states $s_{i}$ requires $k$ bits of encoding. Hence, a satisfying strategy requires a Moore machine with $2^{k}$ memory states to encode those choices.

It is easy to see that if the controller uses a - possibly randomized - strategy $\strat$ with less than $2^{k}$ memory states, then there exists $i \in \{1, \ldots{}, k\}$ such that $\strat(s_{1}\ldots{}s_{i}s_{i,L}\ldots{}s'_{i}) = \strat(s_{1}\ldots{}s_{i}s_{i,R}\ldots{}s'_{i})$, i.e., the controller chooses to go to $s'_{i,L}$ (resp. $s'_{i,R}$) with identical probability against both stochastic choices in~$s_{i}$. Assume that the controller chooses to go toward $s'_{i,L}$ with probability $p \in \left[ 0, 1\right] $ and toward $s'_{i,R}$ with probability $1-p$: this implies that the probability that the target set $\{s_{i,L},s'_{i,L}\}$ (resp. $\{s_{i,R},s'_{i,R}\}$) is never visited is equal to $\frac{1}{2} \cdot (1-p)$ (resp. $\frac{1}{2} \cdot p$). Clearly, it is impossible to have both those probabilities equal to zero simultaneously, which proves that such a strategy cannot satisfy the almost-sure multiple reachability problem defined above, and concludes our proof.
\end{proof}

Despite the above lower bounds, it turns out that the polynomial time algorithm for the case of absorbing targets can be extended:
we identify a subclass of the multiple reachability problem that admits a polynomial-time solution.
In the \emph{nested multiple reachability} problem, the target sets are nested, i.e., 
$T_1 \subseteq T_{2} \subseteq \ldots{} \subseteq T_q$.
The memory requirement for strategies is reduced as well to linear memory.

\begin{theorem}
  \label{thm:monotonic-reach}
  The nested multiple reachability problem can be solved in polynomial time.
  Strategies have memory linear in the query size, which is optimal.
\end{theorem}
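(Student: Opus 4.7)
The plan is to reduce nested multiple reachability to the absorbing-target version covered by Theorem~\ref{thm:absorbing-reachsafe}. I first simplify the thresholds. Since $T_1\subseteq\cdots\subseteq T_q$, every strategy satisfies $\pr[\Diamond T_1]\leq\cdots\leq\pr[\Diamond T_q]$, so any constraint $\pr[\Diamond T_i]\geq\alpha_i$ dominated by an earlier one (i.e.\ some $j<i$ with $\alpha_j\geq\alpha_i$) is redundant. Hence I may assume $\alpha_1\leq\alpha_2\leq\cdots\leq\alpha_q$.

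Next, I lift $M$ to a product MDP $M'$ tracking the smallest target index visited so far. Writing $\mu(s)=\min\{i:s\in T_i\}$ (set to $q{+}1$ if $s\notin T_q$), the state space of $M'$ is $S\times\{1,\ldots,q+1\}$, the initial state is $(s_0,\mu(s_0))$, and the transitions are $(s,k)\xrightarrow{a}(s',\min(k,\mu(s')))$ with probability $\delta(s,a,s')$. The lifted targets $\tilde T_i=\{(s,k):k\leq i\}$ are nested, closed (trap) subsets of $M'$, and the canonical bijection between strategies on $M$ and on $M'$ preserves reachability: $\pr_{M,s_0}^\sigma[\Diamond T_i]=\pr_{M',(s_0,\mu(s_0))}^\sigma[\Diamond\tilde T_i]$. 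Crucially, since the level coordinate is non-increasing along transitions, every MEC of $M'$ is confined to a single level.

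Now I collapse the MEC structure of $M'$ into absorbing sinks. For each MEC $C$ of $M'$ at some level $k\in\{1,\ldots,q\}$, introduce a fresh absorbing state $d_C$ reachable from every state of $C$ by an auxiliary action encoding ``stay in $C$ forever''; MECs at level $q{+}1$ require no sink, since staying there fulfils no constraint. Grouping the sinks by level gives the nested \emph{absorbing} target sets $D_i=\{d_C:C\text{ a MEC at level}\leq i\}$. This contraction is standard and polynomially sized, and by the classical MEC correspondence~\cite{DeAlfaro-phd97} a strategy on $M'$ achieving $\pr[\Diamond\tilde T_i]\geq\alpha_i$ exists iff one on $M''$ achieving $\pr[\Diamond D_i]\geq\alpha_i$ does. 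Applying Theorem~\ref{thm:absorbing-reachsafe} to $M''$ with these nested absorbing targets solves the problem in polynomial time and yields a memoryless $\sigma''$ on $M''$. Lifting $\sigma''$ back through the contraction produces a memoryless strategy on $M'$, which amounts to a strategy on $M$ using the $q{+}1$ levels as memory---linear in the query size.

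For the matching lower bound, I exhibit a family of MDPs in which the controller must, from a shared state $s$, take distinct actions depending on which proper subset of targets has already been visited, ruling out strategies with fewer than $\Omega(q)$ memory states. The main technical obstacle is the MEC-contraction step: correctly encoding the ``stay forever in $C$'' option as an absorbing sink and arguing that the memoryless strategy returned by Theorem~\ref{thm:absorbing-reachsafe} on the contraction can be mechanically turned into a level-memory strategy on $M$ realising the same reachability probabilities; once this is in place, the polynomial-time complexity and the linear memory bound follow.
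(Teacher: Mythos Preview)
Your reduction is correct and lands on the same high-level idea as the paper: take a product of $M$ with the $(q{+}1)$-valued ``smallest target index seen so far'' counter, observe that the lifted targets become trap sets, and reduce to multiple reachability with absorbing targets so that Theorem~\ref{thm:absorbing-reachsafe} applies. The linear-memory upper bound then falls out of memorylessness on the product.

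Where you diverge from the paper is in how you manufacture the absorbing targets. You compute the MECs of the product $M'$ and add a sink per MEC, invoking the standard MEC-contraction correspondence. The paper avoids MECs entirely: from \emph{every} state of copy $M_i$ it adds a single fresh action to an absorbing state $\bot_i$, and sets $T_i'=\{\bot_1,\ldots,\bot_i\}$. The equivalence argument is then direct (any satisfying strategy on $M$ can be made to fire $a^\bot$ once the probability of improving the level drops to zero; conversely, replace $a^\bot$ by any memoryless continuation). This is lighter than your route --- no MEC decomposition, no appeal to the $\Diamond\Box C$ correspondence --- though both are polynomial and yield the same memory bound. Your preliminary simplification of the $\alpha_i$ is harmless but unused; the paper's construction works for arbitrary thresholds. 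One small wrinkle in your contraction: omitting sinks for level-$(q{+}1)$ MECs breaks the hypothesis $\sum_C \pr[\Diamond s_C]=1$ needed for the backward direction of Lemma~\ref{lemma:mec-contract}; just add those sinks too (they don't enter any $D_i$).

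For the lower bound your sketch is too thin. The paper's construction (Fig.~\ref{fig:nested}) is delicate: from a single state $s$ with actions $a_1,\ldots,a_n$, each $a_i$ reaches $t_i$ with probability $1-\tfrac{1}{i+1}$ and otherwise dies; the thresholds $\alpha_i$ are tuned so that the \emph{unique} satisfying strategy must deterministically play $a_n,a_{n-1},\ldots,a_1$ in order, and the proof is an inductive probability calculation showing any deviation (including randomisation) strictly violates some $\alpha_i$. Your one-line description (``distinct actions depending on which subset has been visited'') does not yet supply such a family or the tightness argument; you will need a concrete construction along these lines to claim optimality.
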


 Intuitively, we use $q+1$ copies of the original MDP, one for each target set, plus one last copy. The idea is then to travel between those copies in a way that reflects the nesting of target sets whenever a target state is visited. The crux to obtain a polynomial-time algorithm is then to reduce the problem to a multiple reachability problem \textit{with absorbing states} over the MDP composed of the $q+1$ copies, and to benefit from the reduced complexity of this case.
 
\begin{proof}
  Assume an MDP~$M$, $s_0$ and the target sets $T_1 \subseteq \ldots \subseteq T_q$ are given.
  We make~$q+1$ copies of the MDP~$M$, namely, $M_1,\ldots,M_q,M_{q+1}$. We start $M_{q+1}$ at state~$\initState$.
  We redirect some of the edges as follows.
  For any~$M_i$, state~$s$, action~$a$, and $t \in \supp(\delta(s,a))$, if $t \in T_j$ for some $j<i$, then we direct this edge to state~$t$ in $M_{j'}$
  where $j'$ is the smallest index with $t \in T_{j'}$.
  Hence, along any run, we are in copy~$M_j$ if, and only if we have already satisfied all targets $T_j,\ldots,T_q$.
  Now, we add a fresh absorbing state~$\bot_i$ to each copy. From all states of~$M_i$ a fresh action~$a^\bot$ leads to~$\bot_i$.
  Let us call this new MDP~$M'$. Note that the size of~$M$ is $\mathcal{O}(q|M|)$.

  For each~$i=1\ldots q$, we define $T_i' = \{\bot_i,\bot_{i-1},\ldots,\bot_1\}$.
  We claim that the multiple reachability problem query $(T_i,\alpha_i)_{1\leq i\leq q}$ for~$M$
  is equivalent to~$(T_i',\alpha_i)_{1\leq i\leq q}$ for~$M'$. But the latter query has absorbing target states, thus the problem can be solved in polynomial time
  by~\cite{EKVY-lmcs08}.
  
  Consider a strategy~$\sigma$ for~$M$ achieving the objectives $(T_i,\alpha_i)_{1\leq i\leq q}$.
  We can assume w.l.o.g. that~$\sigma$ is finite-memory by~\cite{EKVY-lmcs08}.
  Let~$S_i$ denote the set of states of~$M_i$.
  We define strategy~$\sigma'$ for~$M'$ as follows. Let us define a mapping~$p(\cdot)$ from the histories of~$M'$ to those of~$M$, where a state of any copy
  is projected to the original state in~$M$. The mapping is actually a bijection from histories of~$M'$ that do not use the action~$a^\bot$
  to histories of~$M$.
  Now, for all histories~$h$ of~$M'$ that end in copy~$M_i$,
  if~$\pr_{M, p(h)}^\sigma[\Diamond \cup_{j <i} T_i ] = 0$, we set $\sigma'(h) = a^\bot$. Otherwise, we let~$\sigma'(h) = \sigma(p(h))$.

  We prove that for all~$i=1\ldots q$, $\pr_{M,s_0}^\sigma[\Diamond T_i] \leq \pr_{M',s_0'}^{\sigma'}[\Diamond T_i']$.
  Let~$\iota(h)$ denote the copy in which~$h$ ends in~$M'$.
  For all histories~$h$ of~$M$ from which the probability of satisfying $\Diamond \cup_{j < \iota(p^{-1}(h))}T_i$ is nonzero,
  we have $\pr_{M,s_0}^\sigma[h] = \pr_{M',s_0'}^{\sigma'}[p^{-1}(h)]$ by definition.
  Define $H_i = \{h \mid \forall i=1\ldots |h|-1, h_i \not \in T_i, h_{|h|} \in T_i, \pr_{M,s_0}^\sigma[h]>0\}$,
  that is, the histories that visit~$T_i$ for the first time at their last state.
  Clearly $\pr_{M,s_0}^\sigma[H_i] = \pr_{M,s_0}^{\sigma}[\Diamond T_i]$.
  But the probability of reaching  $T_i$ is always nonzero along these histories, so 
  we also have $H_i = H_i' := \{h \in H_i \mid \forall i=1\ldots|h|, \pr_{M,h_{1\ldots i}}^\sigma[\Diamond T_i]>0\}$,
  and we get $\pr_{M',s_0'}^{\sigma'}[H_i'] = \pr_{M,s_0}^{\sigma}[H_i']$.
  In other words, $\pr_{M',s_0'}^{\sigma'}[\Diamond \cup_{j \leq i} S_j] \geq \pr_{M,s_0}^\sigma[\Diamond T_i]$,
  that is, the target sets $T_1,\ldots,T_q$ are reached in~$M'$ with at least the same probabilities as in~$M$.
  We now need to show that from any history ending in copy~$M_i$, some state~$\bot_j$ with~$j\leq i$
  is reached almost-surely in~$M'$ under~$\sigma'$. It will follow that
  $\pr_{M',\initState'}^{\sigma'}[\Diamond T_i'] \geq \pr_{M,\initState}^\sigma[\Diamond T_i]$.
  To see this, notice that strategy~$\sigma$ is finite-memory, and so is~$\sigma'$. 
  So there exists $\nu>0$ such that for any state~$s$, and memory element~$m$
  if the probability of satisfying $\Diamond \cup_{j < i} S_j$ is nonzero from~$s$ and~$m$, 
  then it is at least~$\nu$. Note that the probability of never satisfying
  $\Diamond \cup_{j < i} S_j$ while staying in such states is~$0$.
  So, whenever the run reaches copy~$M_i$, almost-surely,
  either some copy~$M_j$ with~$j<i$ is reached, or we reach a history~$h$ such that $\pr_{M,p(h)}^{\sigma}[\Diamond \cup_{j < i} T_j] = 0$, in which case we end in~$\bot_i$.
  The inequality follows.

  Conversely, consider any strategy $\sigma'$ for~$M'$ achieving the reachability objectives $(T_i',\alpha_i)_{1\leq i\leq q}$.
  We assume $\sigma'$ to be memoryless by~\cite{EKVY-lmcs08}.
  We define~$\sigma(h) = \sigma'(p^{-1}(h))$ whenever the action~$\sigma'$ prescribes is different than~$a^\bot$, and otherwise
  $\sigma'$ switches to an arbitrary memoryless strategy. Since all histories of~$M'$ that end in~$\bot_i$ satisfy the objectives
  $T_i\cup\ldots\cup T_q$, strategy~$\sigma$ achieves the objectives $(T_i,\alpha_i)_{1\leq i\leq q}$.
  The memory of $\sigma$ is $\mathcal{O}(q)$ since $\sigma'$ is memoryless in~$M'$ which is made of~$q$ copies of~$M$.

  We now show that linear memory is necessary.
  Consider an MDP~$M$ with states~$s,t_1,\ldots,t_n,\bot$. State~$s$ has~$n$ actions~$a_1,\ldots,a_n$.
  For each~$1\leq i\leq n$, action~$a_i$ leads from~$s$ to~$t_i$ with probability~$1-\frac{1}{i+1}$,
  and with probability~$\frac{1}{i+1}$ leads to absorbing state~$\bot$. From all states~$t_i$ with $i>1$, $s$ is reachable by a deterministic action,
  but from~$t_1$ one can only reach $\bot$.
  The MDP is depicted in Fig.~\ref{fig:nested} ($\bot$ is not shown).

\vspace{-4mm}
\begin{figure}[h]
\begin{center}
\scalebox{0.9}{
  \begin{tikzpicture}
    \tikzstyle{every state}=[node distance=1.5cm,minimum size=20pt, inner sep=1pt];
    \node[state] at (0,0) (s){$s$};
    \node[state,minimum size=1pt, fill=black] at (-2.5, 1) (sn) {};
    \node[state, minimum size=1pt, fill=black] at (0.4,1) (sn1) {};
    \node[state, minimum size=1pt, fill=black] at (2.5, 1) (s1) {};    
    \node[state] at (-3, 2) (tn) {$t_n$};
    \node[state] at (0, 2) (tn1) {$t_{n-1}$};
    \node[right of=tn1] (etc) {$\cdots$};
    \node[state] at (3, 2) (t1) {$t_1$};
    \path[-latex']
    (s) edge[bend left] node[left]{$a_n$} (sn)
    (sn) edge node[left]{\tiny $1-\frac{1}{n+1}$} (tn)
    (sn) edge node[below]{\tiny $\frac{1}{n+1}$} ($(sn)+(-1,0)$)
    (s) edge node[right]{$a_{n-1}$} (sn1)
    (sn1) edge node[left]{\tiny $1-\frac{1}{n}$} (tn1)
    (sn1) edge node[below]{\tiny $\frac{1}{n}$} ($(sn1)+(-0.5,0)$)
    (s) edge[bend right]  node[right]{$a_{1}$} (s1)
    (s1) edge node[left]{\tiny $\frac{1}{2}$} (t1)
    (s1) edge node[left]{\tiny $\frac{1}{2}$} ($(s1)+(-0.5,0)$)
    (tn) edge[gray,bend left=-10]  (s)
    (tn1) edge[gray,bend right=60]  (s)
    (t1) edge[gray]  ($(t1)+(1,0)$);
  \end{tikzpicture}}
  \caption{Linear memory is required for the nested multiple reachability problem.}
  \label{fig:nested}
\end{center}
\vspace{-6mm}
\end{figure}
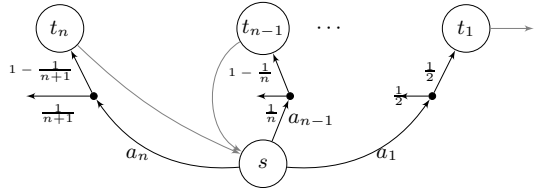

We consider the nested multiple reachability targets $T_1,\ldots,T_n$
with $T_i = \{t_i,\ldots,t_1\}$ for each~$i$, and consider threshold probabilities
$\alpha_1,\ldots,\alpha_n$ defined by
$\alpha_n = 1-\frac{1}{n+1}$, and $\alpha_i = \alpha_{i+1} (1-\frac{1}{i+1})$
for $1\leq i\leq n-1$.

Let us first describe a strategy that satisfies these constraints.
Define a strategy that deterministically
chooses, at each visit to state~$s$, the following actions: $a_n,a_{n-1},\ldots,a_1$. A simple calculation
shows that the constraints are satisfied: the probability of satisfying $T_n$ is at least $1-\frac{1}{n+1}$ by the first action,
that of $T_{n-1}$ is $\frac{1}{n+1}(1-\frac{1}{n})$ by the sequence $a_na_{n-1}$ of actions, and so on.
We argue that this is the only strategy that satisfies these reachability queries showing that $\mathcal{O}(n)$ memory is necessary.

Consider any strategy~$\sigma$ satisfying the multiple reachability queries. We show that $\sigma$ must deterministically choose~$a_n$ in the first step.
In fact, assume that some action~$a_i$ with $i\neq n$ is chosen with probability~$\eta>0$. The probability of moving to $\bot$ under any such action is at least
$\frac{1}{n}$. Thus the probability of going to $\bot$ in the first step (without seeing any $t_i$) is at least
$\eta\frac{1}{n} + (1-\eta) \frac{1}{n+1} > \frac{1}{n+1}$, which is a contradiction.
Now, assume that $\sigma$ deterministically chooses $a_na_{n-1}\ldots a_{n-i+1}$ in the first~$i$ steps. The probability of reaching $T_{n-i}$ in the first $i$ steps is thus $0$,
while the probability of being in~$s$ (and not in~$\bot$) after $i$ steps is $\gamma = (1-\frac{1}{n+1})\cdots (1-\frac{1}{n-i+2})$.
Assume that $\sigma$ does not deterministically choose $a_{n+i}$.
Target~$T_{n-i}$ is reached by histories that eventually choose some action $a_{n-i},\ldots, a_1$. Let~$H$ denote the set of histories
stopping at the first action from this set, i.e.,
$H = s((a_n+\ldots+a_{n-i+1})s)^*(a_{n-i}+\ldots+a_1)$.
Note that at these histories, either we satisfy $T_{n-i}$ or we end in $\bot$, so the probability of satisfying $T_{n-i}$ under~$\sigma$ can be written as
$\gamma \sum_{h \in H} \alpha_h p_h$, where $\alpha_h$ is the probability of~$\sigma$ of choosing the actions of $h$ from the current history,
and $p_h$ is the probability of the resulting run.
We have, for all~$h \in H$, $p_h \leq (1-\frac{1}{n+1})^{\frac{|h|-1}{2}}(1-\frac{1}{n-i+1})$ since 
$h$ contains $\frac{|h|-1}{2}$ actions outside $a_1,\ldots,a_{n-i}$ and  after each such action 
we must come back to~$s$.
For all $h \in H$ with $\frac{|h|-1}{2}>1$, we must have $\alpha_h = 0$ since otherwise we would get $\gamma \sum_{h \in H} \alpha_h p_h< \gamma(1-\frac{1}{n-i+1})$.
Furthermore, if $\sigma$ chooses an action some action~$a_j$ with $1\leq j \leq n-i$ in the first step, the probability of going to $\bot$
is $\frac{1}{j+1} > 1-\frac{1}{n-i+1}$. It follows that $\alpha_h=1$ for the unique history that chooses action $a_{n-i}$.
\end{proof}

\smallskip\noindent\textbf{Contraction of MECs.} In order to solve percentile queries, we sometimes reduce our problems to multiple reachability by first contracting MECs of given MDPs,
which is a known technique~\cite{DeAlfaro-phd97}. We define a transformation of MDP~$M$ to represent the events $\Inf(\rho) \subseteq C$ for~$C \in \mecs(M)$
as fresh states. Intuitively, all states of a MEC will now lead to an absorbing state that will abstract the behavior of the MEC.

Consider~$M$ with~$\mecs(M)=\{C_1,\ldots,C_m\}$. We define MDP~$M'$ from~$M$ as follows. For each~$C_i$, we add state~$s_{C_i}$ and action~$a^*$ from each state~$s \in C_i$ to~$s_{C_i}$.
All states~$s_{C_i}$ are absorbing, and $A(s_{C_i}) = \{a^*\}$. 
The probabilities of events $\Inf(\rho) \subseteq C_i$ in~$M$ are captured by the reachability of states $s_{C_i}$ in~$M'$, as follows. We use the classical temporal logic symbols $\Diamond$ and $\Box$ to represent the \textit{eventually} and \textit{always} operators respectively.

\begin{lemma}
  \label{lemma:mec-contract}
	Let~$M$ be an MDP and $\mecs(M)=\{C_1,\ldots,C_m\}$. For any strategy~$\sigma$ for $M$, there exists a strategy~$\tau$ for~$M'$ such that
  for all $i\in\{1,\ldots,m\}$, $\pr_{M,\initState}^\sigma[\Diamond\Box C_i] =
	\pr_{M',\initState}^\tau[\Diamond s_{C_i}]$. 
  Conversely, for any strategy~$\tau$ for~$M'$ such that $\sum_{i=1}^m
	\pr_{M',\initState}^\tau[\Diamond s_{C_i}]=1$, there exists~$\sigma$
  such that for all
  $i$, $\pr_{M,\initState}^\sigma[\Diamond\Box C_i] = \pr_{M',s}^\tau[\Diamond s_{C_i}]$. 
\end{lemma}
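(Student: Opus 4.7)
The plan is to establish the two implications separately, using in both directions the key fact recalled earlier that $\pr_{M,\initState}^\sigma[\Inf \text{ is an EC}] = 1$, so that $\sum_{i=1}^m \pr_{M,\initState}^\sigma[\Diamond \Box C_i] = 1$ and almost every run eventually stays forever in a unique MEC of $M$.

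The backward direction is the easier one. Given $\tau$ on $M'$ with $\sum_{i=1}^m \pr_{M',\initState}^\tau[\Diamond s_{C_i}] = 1$, for each MEC $C_i$ I first fix a memoryless strategy $\pi_i$ on $M$ that, from every $s \in C_i$, plays uniformly over the actions $a \in A'(s)$ of the EC. Because $C_i$ is strongly connected and finite, $\pi_i$ keeps the run inside $C_i$ and visits every state infinitely often with probability one, so $\pr_{M,s}^{\pi_i}[\Box C_i] = 1$ for every $s \in C_i$. I then define $\sigma$ by coupling: along any history on which $\tau$ has not yet played $a^*$, $\sigma$ plays as $\tau$ does; as soon as $\tau$ prescribes $a^*$ from some $s \in C_i$, $\sigma$ switches to $\pi_i$ and forgets the rest of $\tau$. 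Under the hypothesis that the $s_{C_i}$'s are reached almost surely by $\tau$, the switch happens after a finite prefix almost surely, and pre-switch histories of $M$ under $\sigma$ and of $M'$ under $\tau$ are in measure-preserving bijection, yielding $\pr_{M,\initState}^\sigma[\Diamond\Box C_i] = \pr_{M',\initState}^\tau[\Diamond s_{C_i}]$ for every $i$.

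For the forward direction, let $\sigma$ on $M$ be given. I design $\tau$ on $M'$ to mimic $\sigma$ step by step, with the additional possibility, at each history $h$ ending in a state $s \in C_i$, of taking the action $a^*$ (thus committing irrevocably to $s_{C_i}$) with a carefully chosen probability $q_i(h) \in [0,1]$, and playing $\sigma(h)$ otherwise. The equalities $\pr_{M',\initState}^\tau[\Diamond s_{C_i}] = \pr_{M,\initState}^\sigma[\Diamond\Box C_i]$ are then achieved by calibrating the $q_i(h)$ via the bounded martingale $Y_i(h) := \pr_{M,\initState}^\sigma[\Diamond\Box C_i \mid h]$, which by L\'evy's $0$-$1$ law converges $\pr^\sigma$-almost surely to the indicator of $\Diamond\Box C_i$. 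A natural option is to draw, once and for all at the beginning of the play, a uniform variable $U \in [0,1]$ and commit to $C_i$ at the first step where $s \in C_i$ and $Y_i(h) > U$; a standard optional-stopping argument then gives the required probability matching. The resulting $\tau$ uses infinite memory in general.

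The main obstacle lies in the forward direction: naive committing rules (for instance, committing to $C_i$ after staying in $C_i$ for sufficiently many steps) match the target probabilities only asymptotically and typically overshoot on the complementary event. The martingale-based construction sketched above repairs this, but setting it up requires arguing that the commitment events for different MECs are pairwise almost-surely disjoint and that their total mass is $1$, both of which follow from the facts that the $\Diamond\Box C_i$ are a.s.~disjoint and that $\sum_{i=1}^m Y_i(h)$ converges a.s.~to $1$.
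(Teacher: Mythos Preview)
Your backward direction is correct and essentially identical to the paper's: imitate~$\tau$, and when $\tau$ plays~$a^*$ inside some~$C_i$, switch to a strategy that stays in~$C_i$ forever. The paper does not even specify the uniform strategy; any strategy surely confined to~$C_i$ works.

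For the forward direction, the paper gives no argument of its own and simply cites~\cite[Lemma~4.6]{BBCFK-lmcs14}, so you are attempting strictly more than the paper does. Unfortunately the specific rule you propose---draw a uniform~$U$ and commit to~$C_i$ at the first history~$h$ with $\laststate(h)\in C_i$ and $Y_i(h)>U$---does not give the claimed equalities. Consider an MDP with two MECs $C_1,C_2$, initial state $\initState\in C_1$, and a strategy~$\sigma$ that from~$\initState$ plays an action staying in~$C_1$ forever with probability~$\tfrac12$ and an action leaving to~$C_2$ forever with probability~$\tfrac12$. Then $Y_1(\initState)=\tfrac12$, so with probability~$\tfrac12$ (namely when $U<\tfrac12$) your~$\tau$ commits to~$C_1$ immediately at~$\initState$, \emph{before} $\sigma$'s coin is even flipped. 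Conditioned on $U\ge\tfrac12$, $\tau$ follows~$\sigma$ one step; if~$\sigma$ stays in~$C_1$ then $Y_1$ jumps to~$1>U$ and~$\tau$ commits to~$C_1$, otherwise it commits to~$C_2$. Hence $\pr_{M',\initState}^\tau[\Diamond s_{C_1}]=\tfrac12+\tfrac12\cdot\tfrac12=\tfrac34\neq\tfrac12=\pr_{M,\initState}^\sigma[\Diamond\Box C_1]$. The rule overshoots for exactly the reason you warned about in the naive case: being \emph{inside}~$C_i$ with $Y_i(h)>U$ says nothing about whether the run will ultimately stay there, and the optional-stopping theorem applied to~$Y_i$ only controls $\expect[Y_i(h_T)]$, not the probability of the event ``$T$ is triggered by~$C_i$''.

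A construction that does work has to couple the commitment probability at each step to the \emph{increment} of the conditional probability rather than to its level---for instance, committing to~$C_i$ at history~$h$ (ending in~$C_i$) with fresh probability proportional to $\max\{0,\,Y_i(h)-\sup_{h'\prec h,\ \laststate(h')\in C_i} Y_i(h')\}$, suitably renormalised by the mass not yet committed. Alternatively one can bypass the explicit construction entirely and argue via the LP for multiple reachability with absorbing targets that any probability vector $(p_i)_i$ with $\sum_i p_i=1$ and $p_i\le \sup_\rho \pr_{M',\initState}^\rho[\Diamond s_{C_i}]$ is achievable; this is closer in spirit to what~\cite{BBCFK-lmcs14} does. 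Either route requires substantially more than the threshold rule you sketched.
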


\begin{proof}
  Consider any strategy~$\tau$ in~$M'$ with $\sum_{i=1}^m \pr_{M',s}^\tau[\Diamond s_{C_i}]=1$.
  We define strategy~$\sigma$ for~$M$ by imitating~$\tau$ except that whenever it chooses action~$a^*$ from some state~$s \in C_i$, we switch 
  to a memoryless strategy that surely stays inside~$C_i$. The desired equality follows. The other direction was proved in \cite[Lemma 4.6]{BBCFK-lmcs14}.
\end{proof}

Under some hypotheses, solving multi-constraint percentile problems on ECs
yield the result for all MDPs, by  the transformation of Lemma~\ref{lemma:mec-contract}.
We prove a general theorem and then derive particular results as corollaries.

\begin{theorem}
  \label{thm:general}
  Consider all prefix-independent payoffs~$f$ such that for all strongly connected MDPs~$M$, 
  and all $(l_i,v_i)_{1\leq i\leq q} \in \{1,\ldots,\dimension\}\times \mathbb{Q}$, 
there exists a strategy~$\sigma$ such that
  \[\forall i \in \{1,\ldots,\dimension\}, \pr_{M,\initState}^\sigma[f_{l_i} \geq
	v_i] \geq \sup_{\tau} \pr_{M,\initState}^\tau[f_{l_i}\geq v_i].\]
  If the value $\sup_{\tau}$ is computable in polynomial time for strongly connected MDPs, then the multi-constraint percentile problem for~$f$ is decidable in polynomial time.
  Moreover, if strategies achieving $\sup_\tau$ for strongly connected MDPs use $\mathcal{O}(g(M,q))$ memory,
  then the overall strategy use $\mathcal{O}(g(M,q))$ memory.
\end{theorem}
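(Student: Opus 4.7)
My plan is to reduce the problem to multiple reachability with absorbing targets on the MEC-contracted MDP $M'$ from Lemma~\ref{lemma:mec-contract}, augmented with a few linear constraints that encode the best achievable probabilities inside each MEC. First, I would compute $\mecs(M)=\{C_1,\ldots,C_m\}$ in polynomial time. For each MEC $C_j$ and each pair $(l_i,v_i)$, I would compute the optimal value $p_{j,i} := \sup_{\tau} \pr_{C_j,s}^{\tau}[f_{l_i} \geq v_i]$ using the assumed polynomial-time algorithm for strongly connected MDPs. By the hypothesis of the theorem, for each MEC $C_j$ there exists a single strategy $\tau_j$, using $\mathcal{O}(g(M,q))$ memory, that achieves all the values $p_{j,1},\ldots,p_{j,q}$ simultaneously.

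Next, I would set up the polynomial-size LP of \cite{EKVY-lmcs08} for multiple reachability with absorbing targets on $M'$, whose variables $x_j$ represent the probability of reaching the absorbing state $s_{C_j}$ (with $\sum_j x_j = 1$), and add the $q$ extra linear constraints
\[
\sum_{j=1}^{m} x_j \cdot p_{j,i} \;\geq\; \alpha_i, \qquad i=1,\ldots,q.
\]
Since each $p_{j,i}$ is a precomputed constant, the resulting LP still has polynomial size and is solvable in polynomial time. The overall strategy is obtained by composing the memoryless reachability strategy extracted from the LP (which lives in $M'$, and lifts to a memoryless strategy in $M$ outside MECs via Lemma~\ref{lemma:mec-contract}) with the strategies $\tau_j$ triggered upon ``committing'' to MEC $C_j$, yielding total memory $\mathcal{O}(g(M,q))$.

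The correctness argument has two directions. For soundness, given a solution $(x_j)_j$ of the LP, the converse direction of Lemma~\ref{lemma:mec-contract} provides a strategy $\sigma$ reaching each $C_j$ with probability $x_j$ and then staying in $C_j$ forever; switching to $\tau_j$ on entry gives $\pr_{M,\initState}^\sigma[f_{l_i}\geq v_i] \geq \sum_j x_j p_{j,i}\geq \alpha_i$ by prefix-independence. For completeness, given any strategy $\sigma$ satisfying the query, the fact that $\Inf(\rho)$ is almost surely an EC combined with prefix-independence of $f$ implies
\[
\pr_{M,\initState}^\sigma[f_{l_i}\geq v_i] \;=\; \sum_{j=1}^m \pr_{M,\initState}^\sigma[\Diamond\Box C_j] \cdot \pr_{M,\initState}^\sigma[f_{l_i}\geq v_i \mid \Diamond\Box C_j],
\]
and the conditional probability is bounded above by $p_{j,i}$ since inside $C_j$ no strategy can exceed this optimum; setting $x_j := \pr_{M,\initState}^\sigma[\Diamond\Box C_j]$ and applying Lemma~\ref{lemma:mec-contract} yields a feasible LP solution.

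The main obstacle is to justify rigorously that the upper bound $\pr_{M,\initState}^\sigma[f_{l_i}\geq v_i \mid \Diamond\Box C_j] \leq p_{j,i}$ holds for an arbitrary (possibly history-dependent) strategy $\sigma$ on $M$ conditioned on trapping in $C_j$. This is where prefix-independence is essential: the conditional tail behavior of $\sigma$ induces a (conditional) strategy on the sub-MDP $C_j$ achieving the same conditional probability, which cannot exceed the supremum $p_{j,i}$ over strategies on $C_j$. Everything else is routine bookkeeping on the LP and on strategy composition.
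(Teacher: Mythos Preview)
Your approach is correct and follows the same high-level architecture as the paper: compute the MECs, evaluate the optimal probabilities inside each MEC, pass to the contracted MDP~$M'$ of Lemma~\ref{lemma:mec-contract}, and solve a polynomial-size LP capturing the reachability distribution over the absorbing states~$s_{C_j}$. The soundness and completeness directions are organized exactly as in the paper.

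The one substantive difference is that the paper exploits a zero--one law you never invoke: for a prefix-independent payoff in a strongly connected MDP, $\sup_\tau \pr^\tau[f_{l_i}\geq v_i]\in\{0,1\}$ (the paper cites~\cite{Chatterjee-tcs07}). This collapses your constants $p_{j,i}$ to $\{0,1\}$, so your weighted constraints $\sum_j x_j p_{j,i}\geq\alpha_i$ become ordinary reachability constraints $\sum_{j: p_{j,i}=1} x_j\geq\alpha_i$, and the problem reduces \emph{verbatim} to multiple reachability with absorbing targets (Theorem~\ref{thm:absorbing-reachsafe}) rather than to an augmented LP. This also short-circuits the ``main obstacle'' you flag: the bound $\pr^\sigma[f_{l_i}\geq v_i \mid \Diamond\Box C_j]\leq p_{j,i}$ is immediate once $p_{j,i}\in\{0,1\}$, since a positive conditional probability already forces $p_{j,i}=1$. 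Your route avoids the zero--one law at the price of a slightly heavier completeness argument (extracting a conditional tail strategy on $C_j$); the paper's route is cleaner but leans on the extra structural fact. Both are valid, and under the zero--one law the two LPs coincide.
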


The hypotheses are crucial. Essentially, we require payoff functions that are prefix-independent and for which strategies can be combined easily inside MECs (in the sense that if two constraints can be satisfied independently, they can be satisfied simultaneously). Prefix-independence also implies that we can forget about what happens before a MEC is reached. Hence, by using the MEC contraction, we can reduce the percentile problem to multiple reachability for absorbing target states.

\begin{proof}
  Consider an MDP~$M$, an initial state~$\initState$, and an instance of the multi-constraint percentile problem $(l_i,v_i,\alpha_i)_{1\leq i\leq q}$ for payoff function~$f$.
  
  Let~$C_1,\ldots,C_m$ denote the MECs of~$M$.
  Consider the MDP~$M'$ of Lemma~\ref{lemma:mec-contract}.
  For each~$1\leq j\leq m$, let~$\vec{u(j)}$ denote the component-wise optimal value vector achievable inside~$C_j$ and~$\sigma_j$ a witness strategy,
  which can be computed by hypothesis in polynomial time. Note that because $f$ is prefix-independent and each~$C_j$ strongly connected,
  it follows by~\cite{Chatterjee-tcs07} that $\sup_{\tau} \pr_{M,\initState}^\tau[f_{l_i}\geq v_i] \in \{0,1\}$.
  In fact, for strongly connected MDPs, if a prefix-independent measure can be satisfied with nonzero probability, then there exists a state from which 
  the threshold can be satisfied with probability~$1$.
  Moreover, because the MDP is strongly connected, such a state is reachable almost-surely from any other state.

  Now, for each~$1\leq i\leq q$, we define~$T_i = \{ s_{C_j} \mid 1\leq j\leq m, \sup_{\tau} \pr_{M,\initState}^\tau [f_{l_i} \geq u(j)_i] = 1\}$,
  where states $s_{C_i}$ were defined in Lemma~\ref{lemma:mec-contract}.
  We solve the multiple reachability with absorbing targets~$T_1,\ldots,T_m$ in~$M'$, 
  with probabilities~$\alpha_1,\ldots,\alpha_q$, by Theorem~\ref{thm:absorbing-reachsafe}.
  All computations are in polynomial time. We now establish the connection with the multi-constraint percentile problem.

  Assume there is a strategy~$\tau$ in~$M'$ witnessing the multiple reachability problem. 
  Recall that the strategy~$\sigma$ for~$M$ of Lemma~\ref{lemma:mec-contract} derived from~$\tau$ consists in following~$\tau$ until an action~$a^*$ is taken, upon which 
  one switches to an arbitrary strategy inside the current MEC. Let us define strategy~$\sigma'$ in this manner,
  by switching to the optimal strategy~$\sigma_j$, where~$C_j$ is the current MEC. It follows that, for each~$1\leq i\leq q$, the probability of switching to~$\sigma_j$ for~$j$ such that
  $s_{C_j} \in T_i$ is at least~$\alpha_i$. But such~$\sigma_j$ satisfy $f_{l_i}\geq v_i$ almost-surely in~$C_j$. Because~$f$ is prefix-independent,
  we get $\pr_{M,\initState}^{\sigma'}[f_{l_i}\geq v_i] \geq \alpha_i$.
  Strategy~$\sigma'$ thus just needs one additional bit compared to~$\sigma$ to remember whether it has switched to a strategy inside a MEC.

  Conversely, consider any strategy~$\sigma$ satisfying the multi-constraint percentile problem for~$f$. Let~$\tau$ be the strategy 
  for~$M'$ given by Lemma~\ref{lemma:mec-contract}.
  We have,
  \[\begin{array}{ll}
  \pr_{M,\initState}^\sigma[ f_{l_i} \geq v_i ] &= \sum_{j=1}^m \pr_{M,s}^\sigma[ f_{l_i} \geq v_i \mid \Diamond\Box C_j] \pr_{M,\initState}^\sigma[\Diamond\Box C_j]\\
  &= \sum_{j=1}^m \pr_{M,s}^\sigma[ f_{l_i} \geq v_i \mid \Diamond\Box C_j] \pr_{M',\initState}^\tau[\Diamond s_{C_j}]
\end{array}
\]
Furthermore, we $\pr_{M,s}^\sigma[ f(w_i) \geq v_i \mid \Diamond\Box C_j]>0$ implies that
$\sup_{\sigma'} \pr_{C_j,s}^{\sigma'}[f(w_i)\geq v_i] = 1$ as observed above. It follows that,
$\pr_{M,s}^\sigma[ f(w_i) \geq v_i \mid \Diamond\Box C_j] \leq \sup_{\sigma'}\pr_{C_j,s}^{\sigma'}[f(w_i)\geq v_i)]$.
We obtain 
\[\alpha_i \leq \pr_{M,\initState}^\sigma[ f(w_i)\geq v_i ]\leq \sum_{j : C_j \in T_i} \pr_{M',\initState}^\tau[\Diamond s_{C_j}],\]
which concludes the proof.
\end{proof}

\section{Inf, Sup, LimInf, LimSup Payoff Functions}
\label{section:quant-reg}
\label{subsection:single-simple}

\smallskip\noindent\textbf{Single-dimensional queries.} We give polynomial-time algorithms for the \emph{single}-dimensional multi-cons\-traint percentile problems.
For $\inf$ and $\sup$ we reduce the problem to nested multiple reachability, while $\liminf$ and $\limsup$ are solved by applying Theorem~\ref{thm:general}. 

\begin{theorem}
\label{thm:quant_reg_single_dim}
  The single-dimensional multi-constraint percentile problems can be solved in polynomial time in the problem size
  for $\inf$, $\sup$, $\liminf$, and $\limsup$ functions. Computed strategies use memory linear in the query size for~$\inf$ and $\sup$,
  and constant memory for $\liminf$ and~$\limsup$.
\end{theorem}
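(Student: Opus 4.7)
The plan splits into two cases according to whether the payoff is prefix-independent. For $\sup$ and $\inf$ I would reduce to the nested multiple reachability problem and invoke Theorem~\ref{thm:monotonic-reach}, obtaining polynomial time and linear memory; for $\liminf$ and $\limsup$, I would apply Theorem~\ref{thm:general} directly.

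For $\sup$, sort the constraints so $v_1 \leq v_2 \leq \ldots \leq v_q$ and build an MDP $M'$ with state space $\states \times \{0, 1, \ldots, q\}$, where the second coordinate $m$ records the largest index such that an action of weight $\geq v_m$ has already occurred. Taking action $a$ deterministically updates $m$ to $\max(m, k(w(a)))$, where $k(w) = \max\{i : v_i \leq w\}$ (or $0$). With targets $T_i = \{(s, m) : m \geq i\}$ the nesting $T_q \subseteq \ldots \subseteq T_1$ holds, and an obvious bijection of strategies gives $\pr^\sigma[\sup \geq v_i] = \pr^{\sigma'}[\Diamond T_i]$; Theorem~\ref{thm:monotonic-reach} then closes this case.

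For $\inf$, writing $A_i = \{a : w(a) \geq v_i\}$, first compute in polynomial time (as a greatest fixed point) the nested sets $\Safe_q \subseteq \ldots \subseteq \Safe_1$, where $\Safe_i$ consists of states from which a strategy restricted to $A_i$ can remain in $\states$ forever. Build $M'$ over $\states \times \{0, 1, \ldots, q\}$ with $m$ now tracking the largest index still \emph{alive}, initialized to $q$ and updated to $\min(m, k(w(a)))$ after action $a$; add absorbing states $\top_j$ together with a \emph{commit} action $(s, m) \to \top_j$ available whenever $j \leq m$ and $s \in \Safe_j$. Use nested targets $T_i = \{\top_i, \ldots, \top_q\}$. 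One direction of correctness is easy: a commitment to level $j$ in $M'$ lifts to an $M$-strategy that simulates $M'$ until the commit and then switches to a surviving strategy inside $\Safe_j$, yielding $\inf \geq v_j$ surely on that event. The other direction is what I expect to be the main obstacle: given $\sigma$ on $M$ meeting the query, lift it to $M'$ by playing $\sigma$ and committing at the maximum feasible level as soon as possible; correctness rests on the observation from the preliminaries that almost every run under $\sigma$ reaches some end component, which on the event $\inf \geq v_i$ must lie in the sub-MDP restricted to $A_i$ and hence in $\Safe_i$, so $m$ stays $\geq i$ throughout and commitment almost surely occurs at some level $\geq i$.

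For $\liminf$ and $\limsup$, prefix-independence lets us invoke Theorem~\ref{thm:general}; it only remains to verify its hypothesis in an arbitrary strongly connected MDP $C$. For $\liminf$, each single-constraint optimum is $0$ or $1$, and every achievable constraint is witnessed by some EC of $C$ using only actions of weight $\geq v_i$; a memoryless strategy that eventually stays inside a sub-EC with weights $\geq v^* = \max_{i\ \text{achievable}} v_i$ satisfies all achievable constraints at once. For $\limsup$, the uniform memoryless strategy in $C$ visits every action infinitely often almost surely, so all achievable constraints are met with probability $1$. Combined with the memoryless MEC contraction of Theorem~\ref{thm:absorbing-reachsafe}, this yields constant-memory strategies, in contrast to the $\Theta(q)$ memory needed for $\inf$ and $\sup$.
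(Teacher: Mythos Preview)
Your proposal is correct and follows essentially the same route as the paper: reduce $\sup$ and $\inf$ to nested multiple reachability (Theorem~\ref{thm:monotonic-reach}) via a product/copy construction, and handle $\liminf$, $\limsup$ through Theorem~\ref{thm:general} using memoryless witnesses inside each end-component. The only cosmetic differences are that for $\sup$ the paper pushes weights onto states and takes $T_i$ directly in $M$, whereas you build the level-tracking product explicitly; and for $\inf$ the paper phrases the levels as $q{+}1$ copies of $M$ rather than a pair $(s,m)$, but the two constructions coincide. Your end-component argument for the converse direction in the $\inf$ case (a run satisfying $\inf\geq v_i$ almost surely settles in an EC using only $A_i$-actions, hence inside $\Safe_i$) is in fact more explicit than the paper's one-line appeal to $\inf\geq v_i$ being a safety property, and is the right way to justify that step.
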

\begin{proof}
  Let us fix MDP~$M$, and a starting state~$\initState$. We start with $\sup$. The result will be derived by Theorem~\ref{thm:monotonic-reach}.
  Consider an instance $(v_i,\alpha_i)_{1\leq i\leq q}$ of the problem, where we assume w.l.o.g. that $v_1\leq \ldots\leq v_q$.
  To simplify the argument, let us assume that weights are assigned to states rather than edges;
  one can always transform the given MDP (in polynomial time) to ensure this.
  We define $T_i$ as the set of states whose weights are at least~$v_i$. The problem
  of ensuring that $\pr_{M,\initState}^\sigma[\sup \geq v_i]\geq \alpha_i$ by some strategy~$\sigma$ 
  is then equivalent to the nested reachability problem with targets~$T_1 \supseteq T_2 \supseteq \ldots \supseteq T_q$.
  The problem can thus be solved in polynomial time by Theorem~\ref{thm:monotonic-reach}.
  The resulting strategies use linear memory by this theorem.

  For $\Inf$, consider an instance $(v_i,\alpha_i)_{1\leq i\leq q}$ of the problem with~$v_1\leq \ldots\leq v_q$.
  We make $q+1$ copies of~$M$, each named~$M_i$. For any state~$s$ of~$M$, we refer as~$s(i)$ to the corresponding copy in~$M_i$.
  The starting state is~$\initState(q+1)$.
  In each~$M_i$, any edge from~$s(i)$ to~$t(i)$ 
  of weight~$w < v_i$ is redirected to~$t(j)$, where~$j\leq i$ is the least index such that~$w < v_j$.
  Intuitively, if the run is in~$M_i$, this means that the current history~$h$ violates all constraints $\inf\geq v_j$ for all~$j=i\ldots q$.  
  For each $1\leq i\leq q$, let $\Safe_M^i$ denote the set of states of $M_{i+1}$ from which $\inf\geq v_i$ can be surely satisfied.
  These sets can be computed in polynomial time.
  Now, we add an absorbing state $\top_i$ for each copy $M_i$, and a fresh action $a^\top$ deterministically leads to~$\top_i$ from all states $\Safe_M^i$.
  Note that~$M'$ has size $\mathcal{O}(q|M|)$. Define $T_i = \{\top_i,\ldots,\top_q\}$.
  Now the multiple reachability instance $(T_i,\alpha_i)_{1\leq i\leq q}$ on~$M'$ (with absorbing target states) is equivalent to 
  the multi-constraint percentile problem for inf. In fact, from any strategy~$\sigma$ satisfying the reachability probabilities in~$M'$, one can clearly construct
  a strategy for $M$ by following~$\sigma$ until some state~$\top_i$ is reached, and then switching to a strategy that is surely safe for the objective $\inf\geq v_i$.
  Conversely, given a strategy $\sigma$ for~$M$ satisfying the multi-constraint percentile query, we define strategy~$\sigma'$ for~$M'$ by following~$\sigma$, and as soon as
  some state~$\Safe_M^i$ is reached, going to~$\top_i$. We argue that for each $1\leq i\leq q$, the probability of reaching $\cup_{j=i}^q \Safe_M^j$ is at least~$\alpha_i$
  in~$M$ under~$\sigma$.
  In fact, otherwise, with probability more than $1-\alpha_i$, the play always stays outside this set. Because $\inf\geq v_i$ is a safety property, this means that 
  the property is violated with probability more than $1-\alpha_i$, which is a contradiction.
  The resulting strategy uses linear memory since $M'$ is made of~$q+1$ copies of~$M$.

  For liminf and limsup, consider an instance $(v_i,\alpha_i)_{1\leq i\leq q}$ of the problem, where we assume~$v_1\leq \ldots\leq v_q$.
  We are going to use Theorem~\ref{thm:general}.

  The problem is easy to solve for an end-component~$C$: for each~$i=1\ldots q$, one removes all edges with weight smaller than~$v_i$, and checks 
  if there is an end-component~$C'$ included in~$C$. Consider the largest~$i$ with this property. We know that from any state of~$C$,
  $C'$ can be reached almost-surely, and one can stay inside~$C'$ surely. Then by such a strategy, all constraints
  $\liminf\geq v_j$ for $j=1\ldots i$ are satisfied almost-surely, 
  while other constraints are violated almost-surely by any strategy that stays inside~$C$.
  Optimal strategies inside strongly connected MDPs are thus memoryless.
  We satisfy the hypotheses of Theorem~\ref{thm:general},
  which yields a polynomial-time algorithm.

  The limsup case is solved similarly: In each end-component~$C$, if~$i_0$ denotes the largest~$v_{i_0}$ such that some edge of~$C$ has weight~$\geq v_{i_0}$, then all constraints $\limsup\geq v_j$ 
  for~$j=1\ldots i_0$ can be satisfied almost-surely, and no other constraint is satisfied by any strategy.
  
  The memory usage follows from Theorem~\ref{thm:general}.
\end{proof}

\smallskip\noindent\textbf{Multi-dimensional queries.}
\label{subsection:multi-simple}
We show that all multi-dimensional cases can be solved in time polynomial in the model size and exponential in the query size by a reduction to multiple LTL objectives studied in~\cite{EKVY-lmcs08}. 
Our algorithm actually solves a more general class of queries, where the payoff function can be different for each query.

\begin{theorem}
\label{thm:quant_reg_multi_dim}
  The multi-dimensional percentile problems for $\sup$, $\inf$, $\limsup$ and $\liminf$ can be solved in time polynomial in the model size and exponential in the query size,
  yielding strategies with memory exponential in the query.
\end{theorem}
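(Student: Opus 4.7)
The plan is to reduce the multi-dimensional percentile problem to the multiple LTL objectives problem on MDPs, solved in~\cite{EKVY-lmcs08} in time polynomial in the model and exponential in the size of the automata for the objectives.

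First, I would observe that each payoff function in $\{\sup, \inf, \limsup, \liminf\}$ is defined by a \emph{threshold event} on individual action weights: the constraint $f_{l_i} \geq v_i$ only depends on the sequence of Boolean values $(b_j)_{j \geq 1}$ with $b_j = 1$ iff $w_{l_i}(a_j) \geq v_i$. Concretely, for the $i$-th constraint of the query, introduce an atomic proposition $p_i$ that holds on an action $a$ precisely when $w_{l_i}(a) \geq v_i$. Under this labelling, the set of runs $\{\rho \mid f_{l_i}(\rho) \geq v_i\}$ is exactly the set of runs satisfying an LTL formula~$\varphi_i$ of constant size, namely
\[
\varphi_i \;=\; \Diamond p_i \text{ for } \sup, \quad \Box p_i \text{ for } \inf, \quad \Box\Diamond p_i \text{ for } \limsup, \quad \Diamond\Box p_i \text{ for } \liminf.
\]
Note that the labelling may differ for each constraint (each $p_i$ depends on the specific pair $(l_i,v_i)$), and in fact the formulas $\varphi_i$ may use different temporal patterns, which is exactly why the approach extends to queries mixing different payoff functions.

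Next, I would invoke the multi-objective LTL model checking algorithm of~\cite{EKVY-lmcs08}: given an MDP, initial state, and LTL formulas $\varphi_1,\ldots,\varphi_q$ with probability thresholds $\alpha_1,\ldots,\alpha_q$, deciding the existence of a strategy~$\sigma$ with $\pr_{M,\initState}^\sigma[\varphi_i] \geq \alpha_i$ for all $i$ can be done in time polynomial in $|M|$ and in the sizes of the deterministic Rabin automata $\mathcal{A}_i$ for the $\varphi_i$, and strategies of memory size polynomial in $\prod_i |\mathcal{A}_i|$ suffice. Since each $\varphi_i$ has constant size, each $\mathcal{A}_i$ has constant size, and the product $\prod_i |\mathcal{A}_i|$ is at most exponential in~$q$. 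Applying this algorithm to $(\varphi_i,\alpha_i)_{1 \leq i \leq q}$ with the labelling described above yields the announced polynomial-in-model, exponential-in-query complexity, and the constructed strategy has memory exponential in~$q$.

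The bulk of the work is actually done by~\cite{EKVY-lmcs08}; the only step that requires care is the correctness of the labelling and the observation that the propositions $p_i$ are labels on \emph{actions} (equivalently, on transitions), which the framework supports after a syntactic transformation pushing action weights onto fresh intermediate states. I do not see a serious obstacle beyond verifying that the translation of each threshold event into the corresponding LTL formula is faithful run by run, which is immediate from the definitions of $\sup,\inf,\limsup,\liminf$.
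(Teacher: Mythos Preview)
Your proposal is correct and follows essentially the same approach as the paper: encode each constraint $f_{l_i}\geq v_i$ by a constant-size LTL formula over the action-label ``weight at least $v_i$ on dimension $l_i$'' (namely $\Diamond$, $\Box$, $\Box\Diamond$, $\Diamond\Box$ for $\sup$, $\inf$, $\limsup$, $\liminf$ respectively), then invoke the multi-objective LTL algorithm of~\cite{EKVY-lmcs08}, whose complexity becomes polynomial in $|M|$ and exponential in $q$ because each automaton has constant size. The only cosmetic difference is that the paper phrases the complexity argument via size-$2$ B\"uchi automata and the subset construction rather than deterministic Rabin automata.
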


Given an MDP~$M$, for all~$i\in\{1\ldots q\}$ and value~$v_i$, we denote~$A_{l_i}^{\geq v_i}$ the set of actions of~$M$ whose rewards are at least~$v_i$.  We fix an MDP~$M$. 
 For any constraint $\phi_i \equiv f(w_{l_i})\geq v_i$, we define an LTL formula denoted~$\Phi_i$ as follows. For $f_{l_i}=\inf$, $\Phi_i = \Box A_{l_i}^{\geq v_i}$, for $f_{l_i}=\sup$, $\Phi_i = \Diamond A_{l_i}^{\geq v_i}$, for $f_{l_i} = \liminf$, $\Phi_i = \Diamond \Box A_{l_i}^{\geq v_i}$, and for $f_{l_i} = \limsup$, $\Phi_i = \Box \Diamond A_{l_i}^{\geq v_i}$.
The percentile problem is then reduced to queries of the form $\wedge_{i=1}^q \pr_{M,\initState}^\sigma[\Phi_i]\geq \alpha_i$, 
for which an algorithm was given in~\cite{EKVY-lmcs08} that takes time polynomial in $|M|$ and doubly exponential in~$q$. We improve this complexity since our formulae have bounded sizes.

\begin{lemma}
  For all constraints $\phi_1,\ldots,\phi_q$, and probabilities~$\alpha_1,\ldots,\alpha_q$,
  there exists a strategy~$\sigma$ such that $\bigwedge_{1\leq i\leq q} \pr_{M,\initState}^\sigma[\phi_i]\geq \alpha_i$
  if, and only if, there exists a strategy~$\tau$ such that $\bigwedge_{1\leq i \leq q} \pr_{M,\initState}^\tau[\Phi_i]\geq \alpha_i.$
  This can be decided in time polynomial in the model and exponential in the query, and computed strategies use exponential memory in the query.  
\end{lemma}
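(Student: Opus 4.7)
The plan is to establish the equivalence at the level of sets of runs: I would show that for every run $\rho$, $f_{l_i}(\rho) \geq v_i$ holds if and only if $\rho \models \Phi_i$, where the modalities are interpreted over the action alphabet. This run-by-run identification immediately gives, for every strategy $\sigma$, the equality $\pr_{M,\initState}^\sigma[\phi_i] = \pr_{M,\initState}^\sigma[\Phi_i]$, which yields the desired biconditional. The algorithmic and memory bounds are then obtained by plugging the formulas $\Phi_i$ into the multi-objective LTL model checking algorithm of \cite{EKVY-lmcs08} and exploiting the fact that each $\Phi_i$ has constant size.

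For the equivalence, I would handle the cases separately. For $\inf$, $f_{l_i}(\rho) \geq v_i$ means $w_{l_i}(a_j) \geq v_i$ for all $j$, which is $\rho \models \Box A_{l_i}^{\geq v_i}$; the case of $\sup$ with $\Diamond A_{l_i}^{\geq v_i}$ is symmetric. For $\liminf$ and $\limsup$ there is one small subtlety: $v_i$ is rational whereas weights are integer. I would observe that along any run the weight sequence takes values in the finite set $w_{l_i}(A)$, so the liminf is actually attained and coincides with the smallest weight occurring infinitely often; hence $\liminf_{l_i}(\rho) \geq v_i$ is equivalent to eventually always playing an action from $A_{l_i}^{\geq v_i}$, i.e.\ $\rho \models \Diamond\Box A_{l_i}^{\geq v_i}$. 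The same finiteness argument gives the $\limsup$ case with $\Box\Diamond A_{l_i}^{\geq v_i}$. All these sets are Borel, so the probabilities coincide.

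For the complexity, I would invoke the algorithm of \cite{EKVY-lmcs08} that solves conjunctive LTL percentile queries by translating each formula to a deterministic Rabin automaton, forming the product with $M$, and solving a polynomial-size LP on the product. The point here is that each $\Phi_i$ admits a constant-size deterministic automaton (one-state safety for $\inf$, two-state reachability for $\sup$, two-state coB\"uchi for $\liminf$, one-state B\"uchi condition for $\limsup$), so the product with $M$ has size $\mathcal{O}(|M| \cdot 2^{\mathcal{O}(q)})$, giving the claimed time polynomial in $|M|$ and singly exponential in~$q$, as opposed to the doubly exponential dependence on formula size stated in \cite{EKVY-lmcs08} for arbitrary LTL. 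Strategies can be chosen memoryless in the product MDP, which translates back to strategies of memory exponential in $q$ on~$M$.

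The main obstacle is not conceptual but consists of being careful with the integer-versus-rational distinction in the $\liminf$/$\limsup$ equivalences and with bookkeeping the memory translation through the product construction; once those are settled, the lemma reduces to a direct application of the result from \cite{EKVY-lmcs08}.
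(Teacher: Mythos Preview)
Your proposal is correct and follows essentially the same route as the paper: the paper declares the run-level equivalence ``clear by construction'' and then invokes the algorithm of~\cite{EKVY-lmcs08} on the product of~$M$ with constant-size deterministic automata for each~$\Phi_i$, obtaining the single-exponential bound and the memory estimate exactly as you describe. If anything, your treatment is more careful---the paper does not spell out the integer-versus-rational point for $\liminf$/$\limsup$, nor the individual automaton sizes per payoff type.
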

\begin{proof}
  The correspondence between LTL formulae and weighted objectives are clear by construction.
  The complexity follows from \cite{EKVY-lmcs08}. In fact, 
  if~$D_{i}$ denotes the subset construction applied to B\"uchi automata recognizing~$\Phi_i$, 
  then the multiple objective LTL problem can be solved in time polynomial in the size of
  the product of~$M$ with $D_1,\ldots,D_q$. But for each formula, a B\"uchi automaton of size~$2$ can be constructed; it follows that the algorithm of~\cite{EKVY-lmcs08} has complexity polynomial in~$|M|$ and exponential in~$q$. 
  The computed strategy is memoryless on the product of~$M$ and~$D_1,\ldots,D_q$, thus the corresponding strategy for~$M$ has memory $D_1\times\ldots\times D_q$ which is a single exponential in~$q$.
\end{proof}

The exponential dependency on the query size cannot be avoided in general unless $\PTIME = \PSPACE$, as shown in the following theorem.

\begin{theorem}
\label{thm:quant_reg_multi_dim_pspace}
  The multi-dimensional percentile problem is \PSPACE-hard for $\sup$.
\end{theorem}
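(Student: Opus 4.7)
The plan is to reduce from the almost-sure multiple reachability problem, which we just showed to be PSPACE-hard. Given an instance consisting of an MDP $M = (S, A, \delta)$, initial state $\initState$, and target sets $T_1, \ldots, T_q \subseteq S$, the goal is to construct in polynomial time a $q$-dimensional weighted MDP $M'$ on which the multi-dimensional $\sup$-percentile query with value thresholds $v_i = 1$ and probability thresholds $\alpha_i = 1$ (one per dimension) is equivalent to the almost-sure multiple reachability of $T_1, \ldots, T_q$ in $M$.

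For the construction of $M'$, I would insert a fresh entry state $s_{\mathit{in}}$ in front of each state $s \in S$: every transition $(s, a, s')$ of $M$ is rerouted so that action $a$, when played at $s$, lands in $s'_{\mathit{in}}$ with the same probability as it originally put on $s'$, and then a unique deterministic action $\tau_{s'}$ leads from $s'_{\mathit{in}}$ to $s'$. The weight function assigns $\vec{0}$ to every action inherited from $M$, while the weight of $\tau_{s'}$ in dimension $i$ is $1$ if $s' \in T_i$ and $0$ otherwise. Intuitively, a run visits $T_i$ at least once in $M$ if and only if its lifted version in $M'$ executes some action of positive weight in dimension $i$, which is precisely the predicate $\sup_i \geq 1$.

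Correctness then follows by a direct projection/lifting argument between strategies of $M$ and of $M'$. Since each inserted entry state has a unique deterministic action, histories of $M'$ are in bijection with histories of $M$, and the induced probability measures agree. Hence any $\sigma$ for $M$ achieving $\pr_{M,\initState}^\sigma[\Diamond T_i] = 1$ for all $i$ lifts to a strategy $\sigma'$ for $M'$ with $\pr_{M',\initState}^{\sigma'}[\sup_i \geq 1] = 1$ for all $i$, and conversely every strategy for the $\sup$-percentile query on $M'$ projects to a strategy on $M$ with the same reachability probabilities. The size of $M'$ is linear in $|M|$ with $q$-dimensional weight vectors, so the whole reduction is polynomial and PSPACE-hardness transfers.

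The only delicate point I anticipate is the careful bookkeeping of the probabilistic behaviour through the entry-state construction: one has to verify that the bijection on histories really preserves the strategy space and the induced probability measure, and that the fresh action $\tau_{s'}$ cannot be bypassed (it is the unique action available at $s'_{\mathit{in}}$), so the weight-$1$ transition is guaranteed upon every entry into a target state. Once this is formalised, the equivalence between the two queries is immediate from the definition of $\sup$ as the supremum of action weights along a run, and the theorem follows.
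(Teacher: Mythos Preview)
Your proposal is correct and follows essentially the same approach as the paper: the paper also reduces from (almost-sure) multiple reachability by duplicating each state into a pair connected by a fresh deterministic action whose weight in dimension~$i$ is~$1$ iff the state belongs to~$T_i$, so that $\sup_i \geq 1$ is equivalent to $\Diamond T_i$. The only cosmetic difference is that the paper places the indicator action \emph{after} the original state ($s \to s^{\bis}$) whereas you place it \emph{before} ($s_{\mathit{in}} \to s$); the bijection-of-histories argument and the resulting reduction are identical.
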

\begin{proof}
  Multiple reachability with arbitrary target sets can be encoded as the multi-dimensional multi-constraint percentile problem for $\sup$ with weights from $\{0,1\}$, as we show now.
  Given MDP~$M$ and targets~$T_1,\ldots,T_q$, we define~$M'$ by duplicating states as follows.
  For each state~$s$, we create a new state~$s^\bis$. All actions leaving~$s$ now leave from~$s^\bis$,
  and a single action~$a^s$ deterministically leads from~$s$ to~$s^\bis$.
  It is clear that there is a bijection between the strategies of~$M$ and those of~$M'$ and that they induce the same reachability probabilities for any subset of states of~$M$.
  We define a $q$-dimensional weight function on~$M'$ that takes values in $\{0,1\}$. At any state~$s$, 
  $w_i(a^s) = 1$ if, and only if~$s \in T_i$. All other actions have value~$0$.
  In other terms, the weight function assigns~$1$ to dimension~$i$ if the target set~$T_i$ is seen. Since the payoff function is $\sup$, along any history
  the dimensions that have the value~$1$ are exactly the target sets that have been satisfied.
  For any probabilities~$\alpha_1,\ldots,\alpha_q$, $\exists \sigma, \forall i=1\ldots q, \pr_{M',\initState}^\sigma[\sup_i\geq 1]\geq \alpha_i$,
  if, and only if $\exists \sigma, \pr_{M,s}^\sigma[\Diamond T_i]\geq \alpha_i]$.
  \PSPACE-hardness follows from Theorem~\ref{thm:asreach}.
\end{proof}

Nevertheless, the complexity can be improved for $\limsup$ functions, for which we give a polynomial-time algorithm by an application of Theorem~\ref{thm:general}.

\begin{theorem}
  \label{lemma:limsup}
  The multi-dimensional percentile problem for $\limsup$ is solvable in polynomial time.
  Computed strategies use constant-memory.
\end{theorem}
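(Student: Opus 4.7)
The plan is to apply Theorem~\ref{thm:general}. First, I note that $\limsup$ is prefix-independent: for any run $\rho$ and finite prefix, $\limsup_l$ depends only on the tail behaviour of the weight sequence, so dropping or prepending finitely many actions does not change its value. It thus suffices to verify the two structural hypotheses of Theorem~\ref{thm:general} for strongly connected MDPs, namely (a) a polynomial-time procedure to compute the one-dimensional optimal values $\beta_i := \sup_{\tau} \pr_{M,\initState}^\tau[\limsup_{l_i}\geq v_i]$, and (b) the existence of a single strategy that simultaneously attains all these $\beta_i$'s.

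For (a), I would observe that in a strongly connected MDP~$M$ one has $\beta_i \in \{0,1\}$ (as already recalled in the proof of Theorem~\ref{thm:general} via \cite{Chatterjee-tcs07}), and moreover $\beta_i = 1$ if and only if there exists at least one action $a$ available in~$M$ with $w_{l_i}(a) \geq v_i$. The ``only if'' direction is obvious: if every action in~$M$ has weight strictly below $v_i$ on dimension $l_i$, then every run of~$M$ has $\limsup_{l_i} < v_i$. The ``if'' direction follows from the next step. In either case, this test is done in linear time per constraint, hence in polynomial time overall.

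For (b), I propose the memoryless randomised strategy~$\sigma^\star$ that in every state plays uniformly at random over the available actions. Since $M$ is strongly connected and $\sigma^\star$ gives positive probability to every action at every state, the induced Markov chain is irreducible on the state space of~$M$, so almost-surely every state is visited infinitely often and every action is taken infinitely often. Consequently, for every~$i$ such that some action~$a$ satisfies $w_{l_i}(a)\geq v_i$, that action is almost-surely taken infinitely often under~$\sigma^\star$, which yields $\limsup_{l_i} \geq v_i$ almost-surely. Thus $\pr_{M,\initState}^{\sigma^\star}[\limsup_{l_i}\geq v_i] \geq \beta_i$ for all~$i$ simultaneously, as required.

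Both hypotheses being established, Theorem~\ref{thm:general} yields a polynomial-time decision procedure. For the memory bound, the strategies~$\sigma_j$ used inside each MEC are memoryless, and the reduction of Theorem~\ref{thm:general} calls a multiple reachability problem with absorbing targets on the contracted MDP, which by Theorem~\ref{thm:absorbing-reachsafe} is also solved with memoryless strategies. As noted in the proof of Theorem~\ref{thm:general}, a single additional bit suffices to record whether the strategy has already switched to an in-MEC behaviour, so the overall strategy uses constant memory. I expect no real obstacle beyond carefully justifying that the uniform memoryless strategy indeed simultaneously witnesses all achievable $\beta_i$'s, which is the main content of step~(b) above.
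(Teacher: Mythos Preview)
Your proposal is correct and follows essentially the same approach as the paper: both reduce to Theorem~\ref{thm:general} by observing that in a strongly connected MDP the optimal probability for each constraint is $0$ or $1$ according to whether an action of sufficient weight exists, and both exhibit the uniform random memoryless strategy as the simultaneous witness. Your treatment is in fact slightly more explicit about the memory accounting than the paper's own proof.
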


\begin{proof}
  The problem is easy to solve if~$M$ is strongly connected.
  In fact, if for some~$i$, $M$ contains no action whose weight at dimension~$l_i$ is at least~$v_i$, 
  then no strategy satisfies~$\limsup_{l_i} \geq v_i$ with positive probability. Conversely, let~$I \subseteq \{1,\ldots,q\}$ such that for each~$i \in I$,
  $M$ contains an edge~$e$ with~$w_{l_i} (e) \geq v_i$. Then, there is a strategy~$\sigma$ satisfying
  $\wedge_{i \in I} \pr_{M,s_0}^\sigma[ \limsup_{l_i} \geq v_i ] = 1$. In fact, because~$M$ is strongly connected
  each state and action can be eventually reached almost-surely from any state. 
  In particular, the strategy which assigns uniform probabilities to all available actions visits all states infinitely often almost-surely.
  
  Thus, we satisfy the hypotheses of Theorem~\ref{thm:general}, and a polynomial-time algorithm follows.
\end{proof}

\noindent
The exact query complexity of the $\liminf$ and $\inf$ cases are left open. 

\section{Mean-payoff}
\label{section:mp}
We consider the multi-constraint percentile problem both for~$\mpinf$ and~$\mpsup$.
We will see that strategies require infinite memory in both cases, in which case
it is known that the two payoff functions differ.
The \emph{single-constraint} percentile problem was first solved in~\cite{FKR-ieee95}.
The case of multiple dimensions was mentioned as a challenging problem but left open.
We solve this problem thus generalizing the previous work. 

\subsection{The Single-Dimensional Case}

We start with a polynomial-time algorithm for the single-dimensional case
obtained via Theorem~\ref{thm:general}, thus extending the results of~\cite{FKR-ieee95} to multi-constraint percentile queries.

\begin{theorem}
  The single dimensional multi-constraint percentile problems for payoffs~$\mpinf$ and~$\mpsup$ are equivalent and solvable in polynomial time.
  Computed strategies use constant memory.
\end{theorem}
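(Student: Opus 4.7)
The plan is to invoke Theorem~\ref{thm:general} with $f \in \{\mpinf,\mpsup\}$. Both payoffs are prefix-independent, so the only substantive work is to verify the intra-MEC hypothesis; the equivalence of the two payoffs, the polynomial time bound, and the constant-memory claim will all drop out of that framework.

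Fix a strongly connected MDP $C$ and an instance $(v_i)_{1\le i\le q}$ of the single-dimensional problem. I would rely on the classical fact that the optimal expected mean-payoff value $\nu^*(C)$ is computable in polynomial time by the standard linear program and is attained as an almost-sure guarantee by a memoryless strategy $\sigma^*_C$: the induced finite Markov chain reaches one of its bottom strongly connected components almost-surely, and inside each such BSCC the Cesàro averages converge almost-surely to $\nu^*(C)$ by the strong law of large numbers for finite ergodic chains. Consequently
\[
\pr_{C,s}^{\sigma^*_C}\bigl[\mpinf = \mpsup = \nu^*(C)\bigr] = 1
\]
from every $s \in C$. Hence for any threshold $v \le \nu^*(C)$ both $\sup_\tau\pr_{C,s}^\tau[\mpinf\ge v]$ and $\sup_\tau\pr_{C,s}^\tau[\mpsup\ge v]$ equal $1$, and the single witness $\sigma^*_C$ matches the supremum probability for every constraint $v_i$ simultaneously.

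For the complementary case $v > \nu^*(C)$ I would argue that these suprema are both $0$: by the $0$-$1$ law for prefix-independent measures in strongly connected MDPs (already invoked in the proof of Theorem~\ref{thm:general} via \cite{Chatterjee-tcs07}), a positive value would force the supremum to be $1$, hence the expectation of $\mpinf$ (respectively $\mpsup$) under some strategy to be at least $v > \nu^*(C)$, contradicting the definition of $\nu^*(C)$. Thus $\mpinf$ and $\mpsup$ induce identical single-dimensional percentile problems on every strongly connected MDP, which transfers to arbitrary MDPs through the MEC contraction.

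The hypotheses of Theorem~\ref{thm:general} are therefore satisfied with $g(M,q) = \mathcal{O}(1)$. The theorem reduces the global percentile query to a multiple reachability instance with absorbing targets on the MEC quotient, solved in polynomial time with memoryless strategies by Theorem~\ref{thm:absorbing-reachsafe}; composed with the memoryless $\sigma^*_{C_j}$ inside each reached MEC $C_j$ (plus the single switching bit noted in the proof of Theorem~\ref{thm:general}), this yields a strategy of constant total memory. The main obstacle is the almost-sure equality $\mpinf = \mpsup = \nu^*(C)$ under $\sigma^*_C$; this is the classical ergodic fact about memoryless strategies in strongly connected MDPs, and is precisely what makes the two payoff functions collapse to the same query and what lets a single in-MEC strategy serve every threshold at once.
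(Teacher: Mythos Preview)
Your proposal is correct and follows essentially the same approach as the paper: both invoke Theorem~\ref{thm:general} after observing that in each strongly connected MEC the optimal expected mean-payoff value is achieved almost-surely by a polynomial-time computable memoryless strategy (so the single witness $\sigma^*_C$ meets every threshold simultaneously), and that no threshold above this value is achievable with positive probability. Your write-up is slightly more explicit than the paper's about why the supremum is $0$ when $v>\nu^*(C)$ (via the $0$-$1$ law and the expectation bound), but the structure, the key lemma, and the memory analysis are the same.
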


\begin{proof}
  Let~$C_1,\ldots,C_m$ be the MECs of a given MDP~$M$.
  If we define $v^*(C_i) = \sup_{\sigma \in \Sigma} \expect_{C_i,s}^\sigma[\mpinf] = \sup_{\sigma \in \Sigma}\expect_{C_i,s}^\sigma[\mpsup]$,
  then for each~$1\leq i\leq m$, there exists a strategy~$\sigma_i$, computable in polynomial time, with the property
  $\pr_{M,s}^{\sigma_i}[\mpinf = v^*(C_i)] = 1$~\cite{Puterman-wiley94}. In other terms, optimal strategies exist for single dimensional mean-payoff,
  and the optimal value can be achieved almost-surely inside strongly connected MDPs.
  In contrast,  no value greater than the optimal value can be achieved with positive probability. 
  The polynomial-time algorithm then follows from Theorem~\ref{thm:general}.

  The equivalence between~$\mpinf$ and~$\mpsup$ follows from the fact that they are equivalent inside MECs since
  memoryless strategies exist, and that the strategy of Theorem~\ref{thm:general} almost-surely eventually switches to 
  an optimal strategy for a MEC.
\end{proof}

\subsection{Percentiles on Multi-Dimensional $\mpsup$}

Let $\expect_{M,\initState}^\sigma[\mpsup_i]$ be the \emph{expectation} of~$\mpsup_i$ under strategy~$\sigma$,
and $\Val_{M,\initState}^*(\mpsup_i)=\sup_{\sigma} \expect_{M,\initState}^\sigma[\mpsup_i]$, computable in polynomial time~\cite{Puterman-wiley94}.
We solve the problem inside ECs, then apply Theorem~\ref{thm:general}.
It is known that for strongly connected MDPs, for each~$i$, some strategy~$\sigma$ satisfies $\pr_{M,\initState}^\sigma[\mpsup_i = \Val^*_{M,\initState}(\mpsup_i)]=1$,
and that for all strategies~$\tau$, $\pr_{M,\initState}^\tau[\mpsup_i>v]=0$ for all $v>\Val^*_{M,\initState}(\mpsup_i)$.
By switching between these optimal strategies for each dimension, with growing
intervals, we prove that
for strongly connected MDPs, a single strategy can simultaneously optimize $\mpsup_i$ on \emph{all} dimensions.

We first recall the following result on the convergence speed of optimal memoryless strategies
in MDPs.

\begin{lemma}[\cite{Tracol09}]
  \label{lemma:boundk}
  Let~$M$ be any single-dimensional weighted MDP, $v^* = \sup_{\sigma} \expect_{M,\initState}^\sigma[\mpsup]$,
  and $\sigma$ an optimal memoryless strategy with $v^* = \expect_{M,\initState}^\sigma[\mpsup]$.
  For all~$\varepsilon>0$ and~$\eta>0$, there exists $K_0>0$ such that  for all $K\geq K_0$,
  \(
  \pr_{M,\initState}^\sigma[\{s_1a_1s_2a_2\ldots \mid \frac{1}{K} \sum_{i=1}^K w(a_i) \geq v^* - \varepsilon\}] \geq 1 - \eta.
  \)
\end{lemma}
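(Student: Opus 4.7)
The plan is to reduce to the ergodic theorem for finite Markov chains and then upgrade almost-sure convergence to a quantitative bound via a concentration inequality.

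First, under the memoryless strategy $\sigma$, $M$ becomes a finite Markov chain $M^\sigma$, whose state space decomposes into transient states together with finitely many bottom strongly connected components (BSCCs) $B_1, \ldots, B_m$. Starting from $\initState$, the hitting time $\tau$ to $\bigcup_j B_j$ is almost-surely finite with an exponentially decaying tail, by a standard finite-chain argument. Within each $B_j$, the induced chain is irreducible, so by the strong law of large numbers for ergodic Markov chains, conditional on reaching $B_j$, $\frac{1}{K}\sum_{i=1}^{K} w(a_i) \to v_j$ almost surely, where $v_j$ is the stationary mean reward in $B_j$. Consequently $\mpsup = v_j$ almost surely on that event, and $v^{*} = \expect_{M,\initState}^{\sigma}[\mpsup] = \sum_j p_j\, v_j$, with $p_j = \pr_{M,\initState}^{\sigma}[\text{reach } B_j]$. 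In the setting where this lemma is used (strongly connected MDPs, or more generally MDPs where $v^{*}$ is attainable almost-surely, as in the preceding discussion), optimality forces $v_j = v^{*}$ for every reachable $B_j$, so the partial averages converge almost surely to $v^{*}$.

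To obtain an explicit $K_0$, I would apply a Hoeffding-type concentration inequality for finite Markov chains (e.g.~the bounds of Gillman or Lezaud) inside each BSCC. For any $\varepsilon, \eta > 0$, pick $K_0$ large enough that: (i) the tail bound gives $\pr_{M,\initState}^{\sigma}[\tau > \sqrt{K_0}] \leq \eta/2$; (ii) the contribution of the transient prefix of length at most $\sqrt{K_0}$ to the empirical average is at most $\varepsilon/2$, using that weights are bounded integers and $\sqrt{K_0}/K_0 \to 0$; and (iii) conditional on entering some $B_j$ by time $\sqrt{K_0}$, the concentration inequality yields $\pr[|\frac{1}{K-\tau}\sum_{i=\tau+1}^{K} w(a_i) - v^{*}| > \varepsilon/2] \leq \eta/(2m)$ for all $K \geq K_0$. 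A union bound over the $m$ BSCCs and the two error sources gives the claimed inequality.

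The main obstacle is step (iii): one needs a uniform-in-$K$ concentration rate inside each BSCC, and must couple it with the hitting-time tail so that a single $K_0$ works globally. Finiteness of $M^{\sigma}$ and boundedness of the weights make every ingredient exponentially decaying, so the argument is essentially bookkeeping; the delicate part is balancing the cutoff $\sqrt{K_0}$ so that it is small enough relative to $K_0$ to render the transient prefix negligible in the average, yet large enough for the hitting-time tail to be dominated by $\eta/2$.
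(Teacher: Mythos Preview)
The paper does not prove this lemma; it is imported from \cite{Tracol09} and used as a black box in the proof of the next lemma. There is therefore no proof in the paper to compare your attempt against.

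Your sketch is essentially correct for the setting in which the lemma is actually invoked (strongly connected MDPs), and you are right to flag the scope issue: as literally stated, the claim fails for arbitrary MDPs. A two-line counterexample is a Markov chain where $\initState$ branches with equal probability to two absorbing loops of reward $0$ and $2$; here $v^{*}=1$, yet $\pr_{M,\initState}^{\sigma}[\frac{1}{K}\sum_{i=1}^{K} w(a_i)\geq 1-\varepsilon]\to 1/2$ for small $\varepsilon$. What rescues the lemma in the paper's application is exactly the observation you make: in a strongly connected MDP the optimal memoryless strategy achieves $\mpsup=v^{*}$ almost surely, so every reachable BSCC of $M^{\sigma}$ has stationary mean $v^{*}$, and the obstruction disappears.

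Under that restriction your argument (exponential tail for the hitting time of the BSCCs, ergodic averages inside each BSCC, Hoeffding--Gillman--Lezaud concentration, union bound) is standard and goes through. One remark: the quantitative concentration machinery is more than the lemma asks for. Once you know the partial averages converge almost surely to $v^{*}$, you get $\pr_{M,\initState}^{\sigma}[\frac{1}{K}\sum_{i=1}^{K} w(a_i)\geq v^{*}-\varepsilon]\to 1$ directly, and it suffices to pick $K_0$ so that this probability exceeds $1-\eta$; no explicit rate is needed. The concentration bounds would only be required if one wanted $K_0$ to be effective in the data of~$M$, which neither the lemma nor its use in the paper demands.
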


We now show that strongly connected multi-dimensional MDPs, a single strategy can simultaneously optimize $\mpsup$, on \textit{all} dimensions.

\begin{lemma}
  For any strongly connected MDP~$M$,  there is an infinite-memory strategy~$\sigma$ such that $\forall i\in\{1,\ldots,d\}$, $\pr_{M,\initState}^\sigma[\mpsup_i \geq \Val_{M,\initState}^*(\mpsup_i)]=1$.
\end{lemma}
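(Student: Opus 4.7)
The plan is to construct $\sigma$ as an infinite concatenation of \emph{phases}, cycling through the $d$ dimensions in rounds. For each dimension $i$, let $\sigma_i$ be an optimal memoryless strategy for $\mpsup_i$ viewed as a single-dimensional problem, so that $\pr_{M,s}^{\sigma_i}[\mpsup_i = v_i^*] = 1$ from every state $s$, where $v_i^* = \Val_{M,\initState}^*(\mpsup_i)$; such a strategy exists in strongly connected MDPs. Pick two sequences $(\varepsilon_n)_{n\geq 1}$ and $(\eta_n)_{n \geq 1}$ of positive rationals with $\varepsilon_n \to 0$ and $\sum_n \eta_n < \infty$ (for instance $\varepsilon_n = 1/n$ and $\eta_n = 2^{-n}$). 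Then, by Lemma~\ref{lemma:boundk} applied to each dimension $i$ and each $n$, and using finiteness of the state space to take a uniform bound over starting states, I can choose $K_{i,n}$ large enough so that, from any state $s$ and under $\sigma_i$, the partial $K_{i,n}$-average on dimension $i$ is at least $v_i^* - \varepsilon_n$ with probability at least $1 - \eta_n$.

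Now let $W$ denote the maximum absolute weight occurring in $M$. I enlarge each $K_{i,n}$, if necessary, so that the length $K_{i,n}$ of the $i$-th phase in round $n$ dominates the total number of steps played in all earlier phases; concretely, writing $L_{i,n}$ for the total number of steps before the beginning of the $i$-th phase of round $n$, I require $K_{i,n} \geq n \cdot W \cdot L_{i,n} / \varepsilon_n$ (this is possible since $L_{i,n}$ depends only on $K_{j,m}$ for earlier $(j,m)$, and we can define $K_{i,n}$ inductively). The strategy $\sigma$ then plays: $\sigma_1$ for $K_{1,1}$ steps, then $\sigma_2$ for $K_{2,1}$ steps, \ldots, then $\sigma_d$ for $K_{d,1}$ steps, then $\sigma_1$ for $K_{1,2}$ steps, and so on. This is clearly implementable with infinite memory (a counter for the current phase).

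For each pair $(i,n)$, let $B_{i,n}$ be the event ``the average of the weights on dimension $i$ during the $(i,n)$-phase is less than $v_i^* - \varepsilon_n$''. By the Markov property, $\pr_{M,\initState}^\sigma[B_{i,n}] \leq \eta_n$ regardless of the state reached at the start of that phase, so by the first Borel-Cantelli lemma almost-surely only finitely many $B_{i,n}$ occur. Fix a run outside $\limsup_{i,n} B_{i,n}$ and fix a dimension $i$. For all sufficiently large $n$, the average of the dimension-$i$ weights during the $(i,n)$-phase is at least $v_i^* - \varepsilon_n$; combined with the domination bound $W \cdot L_{i,n} \leq \varepsilon_n K_{i,n}/n$, the average over the whole prefix ending at the close of that phase is at least $v_i^* - \varepsilon_n - \varepsilon_n/n - W \varepsilon_n /n$, which tends to $v_i^*$. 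Taking $\limsup$ along this subsequence of prefix lengths yields $\mpsup_i(\rho) \geq v_i^*$, establishing the almost-sure bound on every dimension simultaneously.

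The main obstacle I anticipate is choosing the $K_{i,n}$ correctly: they must grow fast enough that (a) each phase's own average is close to optimal with the prescribed failure probability $\eta_n$, and (b) the contribution of all prior phases to the running average is negligible at the end of each new phase. The inductive definition above handles both requirements simultaneously, but one must be careful that Lemma~\ref{lemma:boundk} yields a bound uniform in the starting state of the phase, which follows from finiteness of $M$ (only finitely many possible starting states per phase).
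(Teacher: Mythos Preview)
Your construction is essentially the same as the paper's: both build~$\sigma$ by cycling through the optimal single-dimension strategies~$\sigma_i$ in phases of rapidly growing length, so that at the end of each phase devoted to dimension~$i$ the running average on that dimension is close to~$v_i^*$. The bookkeeping inequality you derive (earlier phases contribute at most~$\varepsilon_n K_{i,n}/n$ to the sum, hence are negligible in the average) matches the paper's computation with~$t_i \geq i^2 \sum_{j<i} t_j$.

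The one genuine difference is the probabilistic step. The paper keeps~$\eta \in (0,1)$ \emph{fixed} and obtains, for infinitely many prefix lengths~$\alpha_i$, $\pr\big[\tfrac{1}{\alpha_i}\sum w(A_j)\geq v_k^*-\varepsilon\big]\geq\eta$; it then concludes $\pr[\mpsup_k\geq v_k^*-\varepsilon]=1$, implicitly invoking that $\{\mpsup_k\geq v_k^*-\varepsilon\}$ is a tail (prefix-independent) event and hence has probability~$0$ or~$1$. You instead take~$\eta_n$ summable and apply the first Borel--Cantelli lemma, which is more elementary and makes the almost-sure conclusion immediate without any $0$--$1$ law. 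Your remark about taking the bound of Lemma~\ref{lemma:boundk} uniform over the (finitely many) possible starting states of each phase is also a point the paper leaves implicit. Overall your argument is correct and, if anything, slightly more self-contained than the paper's.
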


\begin{proof}
  Let us write $v_i^*=\Val_{M,\initState}^*(\mpsup_i)$, and let~$\sigma_i$ be a memoryless 
  optimal strategy for this dimension.
  We define a strategy that switches between these strategies~$\sigma_i$ with growing time intervals. 
  We fix $\eta \in (0,1)$, and define the sequence $\varepsilon_i = \frac{1}{i}$.
  Let $t_1 = 1$. For~$i\geq 2$, if~$K_0$ the bound given by Lemma~\ref{lemma:boundk}
  for~$\varepsilon_i$ and~$\eta$, we choose~$t_i\geq K_0$ such that ${t_i}\geq i^2 \sum_{j=1}^{i-1}t_{j}$.
  Strategy~$\sigma$ is defined by running~$\sigma_j$ during~$t_i$ steps where $j = (i \mod d) + 1$.
  Let us define $\alpha_i = \sum_{j=1}^i t_j$.
  
  We now prove that~$\sigma$ achieves the optimal value at each dimension with probability~$1$.
  Let~$A_i$ denote the random variable of the $i$-th action of an execution for a given MDP, initial state, and strategy.
  Fix any dimension $k \in \{1,\ldots,d\}$. For any~$i$ such that $(i \mod d) + 1 = k$,
  between steps $\alpha_{i-1}+1$ and~$\alpha_{i}$, strategy~$\sigma_k$ is memoryless, and by Lemma~\ref{lemma:boundk},
  we have
  \[
  \pr_{M,\initState}^\sigma[\frac{1}{t_i} \sum_{j=\alpha_{i-1}+1}^{\alpha_i} w(A_j) \geq v_k^* - \varepsilon_i] \geq \eta.
  \]
  Observe that $\frac{t_i}{\alpha_i} = \frac{i^2}{i^2+1}$, and $\frac{\alpha_{i-1}}{\alpha_i} = \frac{1}{i^2+1}$.
  So, with probability~$\eta$, we get
  \[
  \begin{array}{ll}
    \frac{1}{\alpha_i} \sum_{j=1}^{\alpha_i} w(A_j) &=\frac{1}{\alpha_i}\sum_{j=1}^{\alpha_{i-1}} w(A_j) + \frac{1}{\alpha_i}\sum_{j=\alpha_{i-1}+1}^{\alpha_i} w(A_j)\\
    &\geq \frac{\alpha_{i-1}}{\alpha_i} \min_{a \in A} w(a) + \frac{t_i}{\alpha_i} (v_i^* - \varepsilon_i).\\
    &\geq \frac{1}{i^2+1} \min_{a \in A} w(a) + \frac{i^2}{i^2+1} (v_i^* - \varepsilon_i).\\
  \end{array}
  \]
  This means that for any~$\varepsilon>0$, there exists~$i_0$ such that for all~$i\geq i_0$ with $(i \mod d) + 1  =k$, we have
  \[
  \pr_{M,\initState}^\sigma[\frac{1}{\alpha_i} \sum_{j=1}^{\alpha_i} w(A_j) \geq v_i^* - \varepsilon] \geq \eta,
  \]
  so $\pr_{M,\initState}^\sigma[\mpsup_k \geq v_k^*-\varepsilon] = 1$ for all~$\varepsilon>0$.
  It follows that $\pr_{M,\initState}^\sigma[\mpsup_k \geq v_k^*] = 1$ for all dimensions~$k$.
\end{proof}

Thanks to the above lemma, we fulfill the hypotheses of Theorem~\ref{thm:general}, and we obtain the following theorem.

\begin{theorem}
\label{thm:mpsup}
  The multi-dimensional percentile problem for~$\mpsup$
  is solvable in polynomial time. Strategies use infinite-memory, which is necessary.
\end{theorem}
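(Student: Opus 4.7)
The plan is to split the theorem into two parts: polynomial-time decidability, and necessity of infinite memory. For the first part, I would invoke Theorem~\ref{thm:general}, whose hypotheses are precisely designed for situations like this. Prefix-independence of $\mpsup$ is immediate from the definition. The hypothesis that, inside a strongly connected MDP, one strategy can simultaneously enforce all constraints $f_{l_i} \geq v_i$ above their individual suprema follows directly from the lemma just proved: take $v_i = \Val_{M,\initState}^*(\mpsup_{l_i})$ to be the per-dimension optimum, and the constructed infinite-memory strategy achieves $\pr[\mpsup_{l_i} \geq v_i]=1$ for every $i$ simultaneously. Since the classical result ensures that $\pr^\tau[\mpsup_{l_i} > v_i]=0$ for every strategy $\tau$, this strategy is componentwise at least as good as any other, so the hypothesis of Theorem~\ref{thm:general} is met. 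The per-dimension value $\Val_{M,\initState}^*(\mpsup_{l_i})$ is computable in polynomial time using standard LP techniques for single-dimensional mean-payoff MDPs \cite{Puterman-wiley94}, so the ``$\sup_\tau$ is polynomial-time computable'' hypothesis is also satisfied. Thus Theorem~\ref{thm:general} yields a polynomial-time algorithm.

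For the lower bound on memory, I would give a simple two-dimensional witness. Take a one-state MDP with two self-loop actions $a,b$ and weights $w(a)=(1,0)$, $w(b)=(0,1)$, and consider the query $\pr[\mpsup_1 \geq 1] \geq 1 \wedge \pr[\mpsup_2 \geq 1] \geq 1$. By the previous lemma this query is satisfiable with an infinite-memory strategy that alternates between playing $a$ and $b$ on time-windows of suitably growing lengths. I would then argue that no finite-memory randomized strategy can satisfy the query. Any finite-memory strategy $\sigma$ induces a finite Markov chain on $S\times\calM$; by the ergodic theorem for finite Markov chains, almost every run visits each bottom strongly connected component with the empirical frequency of $a$ converging to some constant $\lambda \in [0,1]$, so that $\mpsup_1 = \mpinf_1 = \lambda$ and $\mpsup_2 = \mpinf_2 = 1-\lambda$ almost surely on that component. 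Since $\lambda$ and $1-\lambda$ cannot both equal $1$, at least one of the two constraints fails with positive probability on every bottom component, so the query cannot hold under any finite-memory strategy.

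The technical heart of the result is already contained in the preceding lemma, so the main obstacle in this theorem is not really a new calculation but a careful matching of notation to the hypotheses of Theorem~\ref{thm:general} (in particular noticing that the $v_i^*$ play the role of the ``$\sup_\tau$'' threshold, and that no higher threshold is achievable with positive probability, which is what lets Theorem~\ref{thm:general} conclude). The memory lower bound is conceptually the classical observation that on finite Markov chains $\mpinf = \mpsup$ almost surely, upgraded to the multi-dimensional setting; I expect this to be the cleanest part of the argument.
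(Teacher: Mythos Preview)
Your proposal is correct and follows essentially the same approach as the paper: invoke Theorem~\ref{thm:general} using the preceding lemma to verify the simultaneous-optimum hypothesis, and exhibit a small example for the infinite-memory lower bound via the ergodic-theorem argument that finite-memory strategies force $\mpinf=\mpsup$ almost surely. The only cosmetic difference is that the paper uses a two-state MDP (Fig.~\ref{fig:mpsup-memory}) rather than your one-state example; your one-state variant works just as well and is arguably cleaner, while the paper defers the lower-bound argument to \cite[Lemma~7]{CDHR-fsttcs10} rather than spelling out the ergodic reasoning you give.
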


To see that infinite-memory strategies are necessary for $\mpsup$, consider the MDP of Fig.~\ref{fig:mpsup-memory} where thresholds $v_1=v_2=1$ can be achieved almost-surely by the above theorem, but not by any finite-memory strategy. The proof is identical to the case of maximizing the expectation in~\cite[Lemma 7]{CDHR-fsttcs10} where it is proved for the case of deterministic MDPs (i.e., automata).

\begin{figure}[ht]
  \centering
  \scalebox{1.2}{
  \begin{tikzpicture}
    \tikzstyle{every state}=[node distance=2cm,minimum size=15pt, inner sep=1pt];
    \node[state] at (0,0) (s){$s$};
    \node[state, right of=s] (t) {$t$};
    \path[-latex'] 
    (s) edge[bend left] (t)
    (t) edge[bend left] (s)
    (s) edge[loop left, looseness = 2, distance = 6mm] node{$(1,0)$} (s)
    (t) edge[loop right, looseness = 2, distance = 6mm] node{$(0,1)$} (t);
  \end{tikzpicture}}
  \caption{Infinite-memory strategies are necessary for $\mpsup$.}
  \label{fig:mpsup-memory}
\end{figure}

\subsection{Percentiles on Multi-Dimensional $\mpinf$}
\label{section:multidim-mpinf}

In contrast with the $\mpsup$ case, our algorithm for $\mpinf$ is more involved, and requires new techniques.
In fact, the case of end-components is already non-trivial for $\mpinf$, since there is no single strategy 
that satisfies all percentile constraints in general, and one cannot hope to apply Theorem~\ref{thm:general} as we did in previous sections.
We rather need to consider the set of strategies~$\sigma_I$ satisfying \emph{maximal} subsets of percentile constraints; these are called \emph{maximal strategies}.
We then prove that any strategy satisfying all percentile queries can be written as a \emph{linear combination} of maximal strategies, that is,
there exists a strategy which chooses and executes each~$\sigma_I$ following a probability distribution.

For general MDPs, 
we first consider each MEC separately and write down the linear combination with unknown coefficients.
We know that any strategy in a MDP eventually stays forever in a MEC. Thus, we adapt the linear program
of~\cite{EKVY-lmcs08} that encodes the reachability probabilities with multiple targets, which are the MECs here.
We combine these reachability probabilities with the unknown linear combination coefficients, and obtain a linear program 
(Figure~\ref{fig:mpinf-lp}), which we prove to be equivalent to our problem.

\smallskip\noindent\textbf{Single EC.}
Fix a strongly connected $d$-dimensional MDP~$M$ and 
pairs of thresholds $(v_i,\alpha_i)_{1\leq i\leq q}$. We denote each event by~$A_i \equiv \mpinf_i \geq v_i$.
In \cite{BBCFK-lmcs14}, the problem of maximizing the \emph{joint} probability of the events~$A_i$ was solved in polynomial time.
In particular, we have the following for strongly connected MDPs.
\begin{lemma}[{\cite{BBCFK-lmcs14}}]
  \label{lemma:joint-bounds}
	If~$M$ is strongly connected, then there exists~$\sigma$ such that ${\pr_{M,s}^\sigma[\wedge_{1\leq i\leq q} A_i]>0}$ if, and only if
    there exists~$\sigma'$ such that~$\pr_{M,s}^{\sigma'}[\wedge_{1\leq i\leq q} A_i]=1$.
    Moreover, this can be decided in polynomial time, and for positive instances, for any~$\varepsilon>0$,
    a memoryless strategy~$\tau$ can be computed in polynonomial time in $M$, $\log(v_i)$ and~$\log(\frac{1}{\varepsilon})$, such that
    \(
    \pr_{M,s}^\tau[\wedge_{1\leq i\leq q} \mpinf_i \geq v_i - \varepsilon] = 1.
    \)
\end{lemma}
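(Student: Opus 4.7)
The $(\Leftarrow)$ direction is trivial; my plan targets $(\Rightarrow)$, the polynomial-time decision, and the $\varepsilon$-memoryless strategy through the classical characterisation of long-run state-action frequencies in strongly connected MDPs.

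First, I would set up a feasibility LP over variables $x(s,a)\geq 0$ constrained by flow conservation $\sum_{s',a'} x(s',a')\,\delta(s',a',s) = \sum_a x(s,a)$ for each $s$, normalisation $\sum_{s,a} x(s,a)=1$, and thresholds $\sum_{s,a} x(s,a)\,w_i(a)\geq v_i$ for each $i$. This LP is polynomial in the model and the $\log v_i$, hence solvable in polynomial time. From a feasible solution $x$ I define a memoryless randomised strategy $\sigma_x(s)(a)=x(s,a)/\sum_{a'} x(s,a')$ on $\supp(x)$; by the ergodic theorem applied to the induced ergodic chain on $\supp(x)$, $\mpinf_i = \sum_{s,a} x(s,a)\,w_i(a) \geq v_i$ holds almost-surely. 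Strong connectivity of $M$ lets us reach $\supp(x)$ almost-surely from any state, and prefix-independence of $\mpinf$ then guarantees that all thresholds are simultaneously met with probability one.

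For the converse, I would show that if some strategy $\sigma$ satisfies $\bigwedge_i A_i$ with positive probability, then the LP above is feasible. Using the fact that every run almost-surely eventually remains forever in some EC~\cite{DeAlfaro-phd97}, I would restrict to the ECs reached with positive probability on the success event; in each such EC the normalised empirical state-action counts converge almost-surely, and their limit vector lies in the flow polytope and satisfies the thresholds by definition of the event $\bigwedge_i A_i$. A convex combination of these limit vectors, weighted by the conditional probabilities of landing in each EC on the success event, produces a feasible $x$, establishing both the zero-one law and the polynomial-time decidability. For the $\varepsilon$-relaxed memoryless strategy, I would perturb $\sigma_x$ by mixing in a uniform exploration of weight $\eta>0$, defining $\tau(s)(a)=(1-\eta)\sigma_x(s)(a)+\eta/|A(s)|$. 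The resulting chain is irreducible on all of $M$, and standard perturbation bounds on stochastic matrices show that its stationary mean-payoff in dimension $i$ differs from $\sum_{s,a} x(s,a) w_i(a)$ by at most $O(\eta W)$, for $W$ the largest absolute weight. Choosing $\eta$ polynomially small in $\log(1/\varepsilon)$ and the input size yields a memoryless $\tau$ with $\mpinf_i \geq v_i - \varepsilon$ almost-surely.

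The main obstacle is the reverse direction of the LP correspondence: extracting a deterministic flow $x$ from a strategy $\sigma$ that may be heavily history-dependent and randomised and only achieves the conjunction with positive probability. The delicate point is to show that the sample-path empirical measures do converge on the success event, despite $\sigma$ not being stationary; this requires combining the end-component theorem of~\cite{DeAlfaro-phd97} with the ergodic theorem applied inside each recurrent EC, then a careful convex-combination step to stay in the flow polytope. A secondary, purely quantitative point is the perturbation analysis in the last step, where one must obtain an explicit polynomial dependence of the mean-payoff deviation on~$\eta$ via sensitivity bounds on stationary distributions of irreducible chains.
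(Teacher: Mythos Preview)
The paper does not prove this lemma at all: it is stated with an explicit citation to~\cite{BBCFK-lmcs14} and used as a black box. So there is no ``paper's own proof'' to compare against; your proposal is a reconstruction of the cited result.

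On the substance of your reconstruction, the LP over state-action frequencies is indeed the right framework and matches what~\cite{BBCFK-lmcs14} does. However, two steps in your plan are genuine gaps rather than merely delicate. First, in the reverse direction you assert that ``the normalised empirical state-action counts converge almost-surely'' inside each EC reached on the success event. This is false for arbitrary history-dependent strategies: the ergodic theorem requires stationarity, and a non-stationary~$\sigma$ can make the empirical frequencies oscillate forever even while keeping each $\mpinf_i \geq v_i$. The actual argument in~\cite{BBCFK-lmcs14} does not rely on sample-path convergence; it works with \emph{expected} occupation measures (or accumulation points thereof) and exploits the specific structure of the $\liminf$ event, which is considerably more involved than invoking an ergodic theorem. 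Second, your $\varepsilon$-memoryless construction by mixing in uniform exploration with weight~$\eta$ does not give the claimed bound. If $\supp(x)$ is a strict subset of~$S$, then~$\sigma_x$ is undefined outside it and the chain under~$\sigma_x$ is not irreducible; adding a uniform $\eta$-perturbation creates an irreducible chain whose stationary distribution can differ from~$x$ by an amount governed by the chain's condition number, which is in general not~$O(\eta)$ but can be exponential in~$|S|$. You would then need~$\eta$ exponentially small, contradicting the polynomial-time claim. The construction in~\cite{BBCFK-lmcs14} instead decomposes the LP solution into strongly connected pieces and builds a multi-phase memoryless-like strategy, which is where the $\log(1/\varepsilon)$ dependence actually arises.
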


We give an overview of our algorithm.
Using Lemma~\ref{lemma:joint-bounds}, we define strategy $\sigma_I$ achieving $\pr_{M,s}^{\sigma_I}[\wedge_{i \in I} A_i]=1$
for any maximal subset~$I \subseteq \{1,\ldots,q\}$ for which such a strategy exists.
Then, to build a strategy for the multi-constraint problem, we look for a linear combination of these~$\sigma_I$: given $\sigma_{I_1},\ldots, \sigma_{I_m}$, we choose each~$i_0 \in \{1,\ldots,m\}$ following a probability distribution to be computed, and we run~$\sigma_{I_{i_0}}$.

We now formalize this idea. Let~$\calI$ be the set of maximal~$I$ (for set inclusion) such that some $\sigma_I$
satisfies $\pr_{M,s}^{\sigma_I}[\wedge_{i \in I}A_i]=1$. 
Note that for all $I \in \calI$, and~$j \not\in~I$,
$\pr_{M,s}^{\sigma_I}[\wedge_{i \in I} A_i \land A_j] = 0$. Assuming otherwise would contradict the maximality of~$I$, by Lemma~\ref{lemma:joint-bounds}.
We consider the events $\calA_I = \wedge_{i \in I} A_i \wedge_{i \not \in I}\lnot A_i$
for maximal~$I$.

We are looking for a non-negative family 
$(\lambda_I)_{I \in \calI}$ whose sum equals~$1$ and such that $\forall i\in\{1,\ldots,q\}$, $\sum_{I \in \calI \text{ s.t. } i \in I} \lambda_I \geq \alpha_i$.
This will ensure that if each~$\sigma_I$ is chosen with probability~$\lambda_I$ (among the set $\{\sigma_I\}_{I \in \calI}$); with probability at least~$\alpha_i$,
some strategy satisfying~$A_i$ with probability~$1$ is chosen. So each~$A_i$ is satisfied with probability at least~$\alpha_i$.
This can be written in the matrix notation as
\begin{equation}
  \label{eqn:mpinf-scc-lp}
    \calM \vec{\lambda} \geq \vec{\alpha},\quad 0\leq \vec{\lambda},\quad \mathbf{1}\cdot
		\vec{\lambda} =1,
\end{equation}
where~$\calM$ is a $q \times |\calI|$ matrix with~$\calM_{i,I} = 1$ if~$i \in I$, and~$0$ otherwise.

\begin{lemma}
  For any strongly connected MDP~$M$,
  and an instance $(v_i,\alpha_i)_{1\leq i\leq q}$ of the multi-constraint percentile problem for~$\mpinf$,
  \eqref{eqn:mpinf-scc-lp} has a solution if, and only if there exists a strategy~$\sigma$ satisfying
  the multi-constraint percentile problem.
\end{lemma}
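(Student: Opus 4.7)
The plan is to prove the two implications separately, using Lemma~\ref{lemma:joint-bounds} and the strategies $\sigma_I$ defined in the preceding paragraph as the only non-trivial ingredients.

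\medskip\noindent\textbf{From LP to strategy.}
Given a feasible $(\lambda_I)_{I\in\calI}$, I would define $\sigma$ as the randomized combination that, at the initial step, samples an index $I\in\calI$ according to the distribution $\lambda$ and then plays $\sigma_I$ forever. Since $\mpinf_i$ is prefix-independent and $\pr_{M,s}^{\sigma_I}[\wedge_{i\in I}A_i]=1$, the marginal satisfies $\pr_{M,s}^{\sigma_I}[A_i]=1$ whenever $i\in I$, so
\[
\pr_{M,s}^\sigma[A_i] \;=\; \sum_{I\in\calI}\lambda_I\cdot\pr_{M,s}^{\sigma_I}[A_i]
\;\geq\;\sum_{I\in\calI,\; i\in I}\lambda_I \;\geq\; \alpha_i.
\]
Operationally, $\sigma$ is implemented by extending each $\sigma_I$'s Moore machine with one extra memory element sampled from $\lambda$ at the first step.

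\medskip\noindent\textbf{From strategy to LP.}
Fix any $\sigma$ satisfying the percentile query. For each $I\subseteq\{1,\ldots,q\}$, set $p_I = \pr_{M,s}^\sigma[\calA_I]$ with $\calA_I = \bigwedge_{i\in I}A_i \wedge \bigwedge_{i\notin I}\lnot A_i$. The $\calA_I$'s partition the sample space, so $\sum_I p_I = 1$ and $\pr_{M,s}^\sigma[A_i] = \sum_{I\ni i}p_I$. The key observation is that whenever $p_I>0$, the set $I$ is contained in some $I'\in\calI$: indeed, $p_I>0$ yields $\pr_{M,s}^\sigma[\wedge_{i\in I}A_i]>0$, and Lemma~\ref{lemma:joint-bounds}, applied in the strongly connected MDP $M$, then produces a strategy achieving $\wedge_{i\in I}A_i$ almost-surely, so $I$ lies below some maximal almost-surely satisfiable family $\phi(I)\in\calI$. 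Choosing such a $\phi$ arbitrarily, I set $\lambda_{I'} = \sum_{I\,:\,\phi(I)=I'}p_I$ for $I'\in\calI$. Then $\lambda\geq 0$, $\sum_{I'\in\calI}\lambda_{I'} = \sum_{I\,:\,p_I>0}p_I = 1$, and for every constraint $i$, using $I\subseteq\phi(I)$,
\[
\sum_{I'\in\calI,\; i\in I'}\lambda_{I'} \;=\; \sum_{I\,:\,i\in\phi(I)}p_I
\;\geq\; \sum_{I\,:\,i\in I}p_I \;=\; \pr_{M,s}^\sigma[A_i] \;\geq\; \alpha_i,
\]
so $\lambda$ is feasible for~\eqref{eqn:mpinf-scc-lp}.

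\medskip\noindent\textbf{Main obstacle.} The less routine direction is the second one, and the delicate step is the lifting of the raw partition $(p_I)_I$ (ranging over \emph{all} subsets of $\{1,\ldots,q\}$) to a distribution on the \emph{maximal} almost-surely satisfiable families collected in $\calI$. This is precisely where the zero--one law embodied by Lemma~\ref{lemma:joint-bounds} is indispensable: without it, some $I$'s with $p_I>0$ could fail to lie below any $I'\in\calI$, and the mapping $\phi$ witnessing the aggregation would not exist.
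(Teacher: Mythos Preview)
Your proposal is correct and follows essentially the same approach as the paper: in both directions the argument is identical in spirit (randomize over the $\sigma_I$'s using $\lambda$; conversely, decompose $\pr^\sigma$ along the partition $(\calA_I)_I$ and aggregate the mass onto maximal sets in~$\calI$). If anything, your treatment of the converse is slightly more careful than the paper's: you restrict the aggregation map $\phi$ to subsets~$I$ with $p_I>0$ and explicitly invoke Lemma~\ref{lemma:joint-bounds} to justify that such~$I$ lie below some element of~$\calI$, whereas the paper tacitly partitions all of $2^{\{1,\ldots,q\}}$ without addressing subsets not dominated by any maximal family.
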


\begin{proof}
  Assume~\eqref{eqn:mpinf-scc-lp} and consider the strategy $\sum_{I \in \calI} \lambda_I \sigma_I$, which means that
  at the beginning of the run, we choose each set~$I$ with probability~$\lambda_I$, and run
  $\sigma_I$. Clearly, the probability of satisfying $A_i$ is at least the probability of running
  a strategy $\sigma_I$ such that~$i \in I$, which is $\sum_{I \in \calI : i \in I} \lambda_I$.
  The result follows.

  Conversely, let~$\sigma$ denote a strategy satisfying $\pr_{M,\initState}^\sigma[A_i]\geq \alpha_i$ for all~$i$.
  Let us consider all events~$\calA_I$ including non-maximal~$I$.
  The events~$\calA_I$ are disjoint and we have $A_i = \cup_{I : i \in I} \calA_I$.
  It follows that $\pr_{M,\initState}^\sigma[A_i] = \sum_{I: i \in I} \pr_{M,\initState}^\sigma[\calA_I]
  = \sum_{I: i \in I}\pr_{M,\initState}^\sigma[\calA_I] \cdot \pr_{M,\initState}^{\sigma_I}[\calA_I]$
  since $\pr_{M,\initState}^{\sigma_I}[\calA_I] = 1$ by definition.
  In order to derive a probability distribution on \emph{maximal} subsets only, 
  we define a partition of~$2^{\{1,\ldots,q\}}$ by assigning each non-maximal~$J$ to a maximal~$I \in \calI$
  with~$J\subseteq I$. Formally,
  we consider sets~$\alpha(I) \subseteq 2^{\{1,\ldots,q\}}$ with~$I \in \alpha(I)$, 
   such that for all~$J \in \alpha(I)$, $J \subseteq I$, and $\{\alpha(I)\}_{I \in \calI}$ defines a partition of~$2^{\{1,\ldots,q\}}$.
  For any~$I \in \calI$, we set
  $\lambda_I = \sum_{J \in \alpha(I)} \pr_{M,\initState}^{\sigma}[\calA_J]$.
  This yields a solution of~\eqref{eqn:mpinf-scc-lp}.
\end{proof}

Now \eqref{eqn:mpinf-scc-lp} has size $O(q\cdot 2^q)$, and each subset~$I$ can be checked in time polynomial in the model size.
The computation of $\calI$, the set of maximal subsets, can be carried out in a top-down fashion;
one might thus avoid enumerating all subsets in practice.
We get the following result.

\begin{lemma}
  For strongly connected MDPs, the multi-dimensional percentile problem for $\mpinf$  can be solved in time polynomial 
  in~$M$ and exponential in~$q$. Strategies require infinite-memory in general.
  On positive instances, $2^q$-memory randomized strategies can be computed for the~$\varepsilon$-relaxation of the problem
  in time polynomial in~$|M|, 2^q, \max_i\big(\log(v_i), \log(\alpha_i)\big), \log(\frac{1}{\varepsilon})$.
\end{lemma}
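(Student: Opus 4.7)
\smallskip\noindent\textbf{Proof proposal.}
The plan is to directly exploit the characterization of feasibility by the linear program \eqref{eqn:mpinf-scc-lp} that has just been established. The algorithm proceeds in three phases. First, I compute the family $\calI$ of maximal index sets $I \subseteq \{1,\ldots,q\}$ for which there exists a strategy $\sigma_I$ with $\pr_{M,\initState}^{\sigma_I}[\bigwedge_{i \in I} A_i]=1$; for each of the at most $2^q$ candidate subsets, the test is in polynomial time by Lemma~\ref{lemma:joint-bounds}, so the total cost is $2^q \cdot \mathrm{poly}(|M|)$. Second, I set up \eqref{eqn:mpinf-scc-lp}, which has $|\calI|\leq 2^q$ variables and $q+1$ linear constraints with rational coefficients of size polynomial in the encoding of the $\alpha_i$, and solve it in time polynomial in its own size. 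By the preceding lemma, the instance is positive if and only if this LP is feasible, which yields the first claim.

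For the $\varepsilon$-relaxation, I turn any feasible solution $(\lambda_I)_{I\in\calI}$ into an actual strategy as follows. For each $I \in \calI$, invoke Lemma~\ref{lemma:joint-bounds} to obtain, in time polynomial in $|M|$, $\log(v_i)$ and $\log(1/\varepsilon)$, a memoryless strategy $\tau_I$ such that $\pr_{M,\initState}^{\tau_I}[\bigwedge_{i \in I}\mpinf_i \geq v_i - \varepsilon]=1$. The overall strategy~$\sigma$ samples an element $I$ following the distribution $\lambda$ at the first step, records the outcome in a single memory cell (so at most $|\calI|\leq 2^q$ memory states), and then plays the memoryless $\tau_I$ forever. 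Correctness follows dimension by dimension: for each~$i$, with probability $\sum_{I \ni i}\lambda_I \geq \alpha_i$ the strategy $\sigma$ enters a branch $\tau_I$ with $i\in I$, under which $\mpinf_i \geq v_i - \varepsilon$ holds almost-surely; other branches contribute nonnegatively. Aggregating the three phases yields the complexity stated.

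The infinite-memory lower bound is inherited from prior work on multi-dimensional $\mpinf$: already in a two-dimensional strongly connected MDP one can force thresholds that are met almost-surely but only by alternating between two incompatible memoryless optimal strategies with growing time intervals, in the same spirit as Figure~\ref{fig:mpsup-memory} and the constructions of \cite{CDHR-fsttcs10,BBCFK-lmcs14}. Any finite-memory randomized strategy induces a finite Markov chain whose limit averages are almost-surely a single point in the convex hull of recurrent-class cycle means, which in general cannot simultaneously dominate the required vector $(v_1,\ldots,v_q)$.

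The main obstacle is not the LP encoding itself but ensuring that replacing the exact almost-sure strategies by their $\varepsilon$-approximations preserves the linear-combination argument. This is precisely why the relaxation shifts each threshold to $v_i - \varepsilon$: Lemma~\ref{lemma:joint-bounds} provides memoryless strategies whose guarantee is still almost-sure (only the threshold is weakened), so the same disjoint-events reasoning used in the feasibility proof of~\eqref{eqn:mpinf-scc-lp} goes through without any additional probabilistic loss, and the memory is bounded by the outer randomization alone.
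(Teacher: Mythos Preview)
Your proposal is essentially the paper's proof for the algorithmic and $\varepsilon$-relaxation parts: enumerate the at most $2^q$ subsets, test each with Lemma~\ref{lemma:joint-bounds}, solve the LP~\eqref{eqn:mpinf-scc-lp}, and on positive instances combine the memoryless $\varepsilon$-approximate strategies via an initial random draw over~$\calI$. This matches the paper line by line.

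The infinite-memory argument, however, diverges from the paper and contains an imprecision. You argue directly that a finite-memory strategy induces a finite Markov chain whose limit average is ``almost-surely a single point''; this is not correct in general, since the limit average takes one value per bottom SCC reached, and there may be several. More importantly, showing that no single recurrent class dominates $(v_1,\ldots,v_q)$ does not by itself refute the \emph{separate} percentile constraints, because different constraints could be satisfied in different recurrent classes. The paper closes this gap differently: it picks $\alpha_1=\alpha_2=0.6$ and uses the elementary bound $\pr[A_1\land A_2]\geq \pr[A_1]+\pr[A_2]-1\geq 0.2$, so any strategy for the percentile query satisfies the \emph{joint} event with positive probability, and then invokes \cite[Section~5]{BBCFK-lmcs14}, where it is shown that on the MDP of Fig.~\ref{fig:mpsup-memory} any finite-memory strategy has $\pr[\mpinf_1\geq 0.5\land\mpinf_2\geq 0.5]=0$. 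Your sketch can be repaired along similar lines (arguing that in that MDP the events $\{\mpinf_1\geq 0.5\}$ and $\{\mpinf_2\geq 0.5\}$ are disjoint under any finite-memory strategy, hence their probabilities sum to at most~$1$), but as written the step is incomplete.
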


\begin{proof}
  The first statement is clear from the two previous lemmas, since 
  \eqref{eqn:mpinf-scc-lp} can be solved in time polynomial in~$M$ and exponential in~$q$.
  For the $\varepsilon$-relaxation problem, notice that
  once we compute the set~$\calI$ and solve \eqref{eqn:mpinf-scc-lp}, for any set~$I \in \calI$, we compute in polynomial time a randomized strategy $\sigma_I$ ensuring $A_i^\varepsilon = \wedge_{i \in I} \mpinf_i \geq v_i -\varepsilon$. This can be done 
  as in~\cite{BBCFK-lmcs14}. Then the strategy choosing randomly each $\sigma_I$ with probability $\sigma_I$  ensures all bounds up to $\varepsilon$ (i.e., $v_i-\varepsilon$).

  The need for infinite memory was proved in~\cite[Section 5]{BBCFK-lmcs14} for the problem of
  ensuring thresholds $\pr_{M,\initState}^\sigma[ \mpinf_1 \geq v_1 \land \ldots \mpinf_2 \geq v_2 ] \geq \alpha$ for thresholds $v_1,v_2$ and probability~$\alpha$.
  It was proved that on the MDP of Fig.~\ref{fig:mpsup-memory}, $v_1=v_2=0.5$ and~$\alpha=1$ can be ensured by an infinite-memory strategy and that
  finite-memory strategies can only achieve these thresholds with probability~$0$.
  Now, if the multi-constraint percentile query $\pr_{M,\initState}^\sigma[ \mpinf_1\geq v_1] \geq 0.6 \land \pr_{M,\initState}^\sigma[\mpinf_2\geq v_2]\geq 0.6$ 
  has a solution by a strategy~$\sigma$, then we must have $\pr_{M,\initState}^\sigma[\mpinf_1\geq v_1\land \mpinf_2\geq v_2]\geq 0.2$ (this simply follows from the fact that
  $0.6+0.6=1.2$). Therefore~$\sigma$ must use infinite-memory.  
\end{proof}

\smallskip\noindent\textbf{General MDPs.} Given MDP~$M$, let us consider~$M'$ given by Lemma~\ref{lemma:mec-contract}.
We start by analyzing each maximal EC~$C$ of~$M$ as above, and compute the sets $\calI^C$ of maximal subsets.
We define a variable $\lambda_I^C$ for each~$I \in \calI^C$, and also $y_{s,a}$ for each state~$s$ and action~$a \in A'(s)$. Recall that
$A'(s) = A(s) \cup\{a^*\}$ for states~$s$ that are inside a MEC, and~$A'(s) = A(s)$ otherwise.
Let $S_{\mecs}$ be the set of states of~$M$ that belong to a MEC.
We consider the linear program (L) of Fig.~\ref{fig:mpinf-lp}.

\vspace{-5mm}
\begin{figure}[th]
	\small
  \begin{align}
    \label{eqn:general-lp}
      \mathbf{1}_{\initState}(s) + \sum_{s' \in S, a \in A(s')}y_{s',a}\delta(s',a,s) = \sum_{a \in A'(s)}y_{s,a}, 
      \qquad \forall s \in S,\\
      \sum_{s \in S_{\text{MEC}}}y_{s,a^*} = 1,\\
      \sum_{s \in C}y_{s,a^*} = \sum_{I \in \calI^C} \lambda_I^C,  \quad \forall C \in \mecs(M), \label{eqn:thirdline}\\
      {\lambda^C_I}\geq 0, \quad \forall C \in \mecs(M), \forall I \in \calI^C, \label{eqn:fourthline}\\
      \sum_{C \in \mecs(M)} \sum_{I \in \calI^C : i \in I} \lambda_I^C \geq \alpha_i, \qquad \forall i=1\ldots d.
      \label{eqn:fifthline}
  \end{align}
\vspace{-2mm}
  \caption{Linear program~(L) for the multi-constraint percentiles for~$\mpinf$.}
  \label{fig:mpinf-lp}
\end{figure}
\vspace{-3mm}

We prove the following main lemma in this section.

\begin{lemma}
  \label{lemma:mpinf-lp}
  The LP~(L) has a solution if, and only if the multi-constraint percentiles problem for $\mpinf$ has a solution. Moreover, the equation has size polynomial in~$M$ and exponential in~$q$.
  From any solution of~(L) randomized finite memory strategies can be computed for the $\varepsilon$-relaxation problem.
\end{lemma}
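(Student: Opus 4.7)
The plan is to combine the single-EC analysis already established with the multiple-reachability LP of~\cite{EKVY-lmcs08} via the MEC contraction of Lemma~\ref{lemma:mec-contract}. Intuitively, the $y_{s,a}$ variables encode the expected-visit flow of a memoryless reachability strategy that drives the run into some MEC, while the $\lambda_I^C$ variables describe, conditionally on ending up trapped in MEC $C$, how to split the probability mass across the maximal subsets $I \in \calI^C$ of percentile constraints that can be satisfied almost-surely inside $C$.

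\textbf{From LP solution to strategy.} Given a feasible $(y_{s,a},\lambda_I^C)$, I would first interpret $y_{s,a}$ as expected visit frequencies of a memoryless reachability strategy $\sigma_{\mathrm{reach}}$, defined by $\sigma_{\mathrm{reach}}(s)(a) = y_{s,a}/\sum_{a'\in A'(s)} y_{s,a'}$ whenever the denominator is positive, exactly as in the EKVY LP for multiple reachability with absorbing targets. Equation~(\ref{eqn:general-lp}) is the Kirchhoff-type flow equation ensuring $\sigma_{\mathrm{reach}}$ is well defined; the second equation enforces that $a^*$ is eventually played almost-surely from some MEC state; equation~(\ref{eqn:thirdline}) fixes the total probability of being absorbed into $C$ via $a^*$ to $\sum_{I\in\calI^C}\lambda_I^C$. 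Upon entering $C$ through $a^*$, the strategy switches to the mixture of the previous lemma: sample $I\in\calI^C$ with probability $\lambda_I^C/\sum_{J\in\calI^C}\lambda_J^C$ and play $\sigma_I^C$ forever. The probability of event $A_i$ is then at least $\sum_{C\in\mecs(M)}\sum_{I\in\calI^C\,:\,i\in I}\lambda_I^C \geq \alpha_i$ by~(\ref{eqn:fifthline}).

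\textbf{From strategy to LP solution.} Conversely, given a witness $\sigma$, I use that under any strategy the run almost-surely settles in some MEC. Set $y_{s,a^*} = \pr_{M,\initState}^\sigma[\Diamond\Box C \wedge s\text{ is the entry state of }C]$ for $s\in S_{\mathrm{MEC}}$, and define the remaining $y_{s,a}$ from expected visits to transient states under $\sigma$, so~(\ref{eqn:general-lp}) and the second equation hold. Conditioning $\sigma$ on $\Diamond\Box C$, the conditional probabilities of the disjoint events $\calA_J = \bigwedge_{i\in J}A_i \wedge \bigwedge_{i\notin J}\lnot A_i$ inside $C$ are well defined; by Lemma~\ref{lemma:joint-bounds}, if $\pr[\calA_J\mid \Diamond\Box C] > 0$ then $\bigwedge_{i\in J}A_i$ is achievable almost-surely inside $C$, so $J$ is contained in some maximal $I\in\calI^C$. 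Pick any partition $\{\alpha(I)\}_{I\in\calI^C}$ of the attainable subsets with $I\in\alpha(I)$ and $J\subseteq I$ for $J\in\alpha(I)$, and define $\lambda_I^C = \pr_{M,\initState}^\sigma[\Diamond\Box C]\cdot \sum_{J\in\alpha(I)}\pr_{M,\initState}^\sigma[\calA_J\mid \Diamond\Box C]$. Equations~(\ref{eqn:thirdline})--(\ref{eqn:fifthline}) then follow by decomposing $\pr_{M,\initState}^\sigma[A_i] = \sum_C \sum_J \pr[\Diamond\Box C]\pr[\calA_J\mid\Diamond\Box C]$ over MECs and maximal subsets.

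\textbf{Size and $\varepsilon$-relaxation.} The LP has one flow equation per state, one per MEC, one per dimension, plus $|\calI^C|\leq 2^q$ variables per MEC, giving total size polynomial in $|M|$ and exponential in $q$. To turn a solution into a finite-memory strategy for the $\varepsilon$-relaxation, one replaces each exact in-MEC strategy $\sigma_I^C$ by its $2^q$-memory randomized $\varepsilon$-approximation from Lemma~\ref{lemma:joint-bounds}; combined with the memoryless $\sigma_{\mathrm{reach}}$ and one randomized switch on entering a MEC, the resulting strategy is randomized and finite-memory, computable in time polynomial in $|M|$, $2^q$, $\max_i(\log v_i,\log\alpha_i)$ and $\log(1/\varepsilon)$.

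\textbf{Main obstacle.} The delicate point is the converse direction: rigorously justifying that under $\sigma$, conditionally on $\Diamond\Box C$, every positive-probability event $\calA_J$ satisfies $J\subseteq I$ for some maximal $I\in\calI^C$. This uses that the tail behaviour of $\sigma$ inside $C$ can itself be regarded as a strategy on the sub-MDP $C$, together with the dichotomy of Lemma~\ref{lemma:joint-bounds} (positive probability implies almost-sure achievability inside a strongly connected component). Once this is in place, everything else is linear-algebraic bookkeeping matching the single-EC argument.
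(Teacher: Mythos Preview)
Your plan is essentially the paper's own proof: the paper also splits into the two directions, obtaining the reachability strategy from the $y_{s,a}$ via the EKVY flow LP, mixing the maximal in-MEC strategies $\sigma_I^C$ with weights proportional to $\lambda_I^C$, and for the converse passing through Lemma~\ref{lemma:mec-contract} to extract expected-visit values and then reallocating mass on non-maximal $J$'s to enclosing maximal $I$'s exactly as in your partition $\alpha(I)$. One small imprecision to fix: defining $y_{s,a^*}$ via an ``entry state of~$C$'' is ill-posed (runs may enter and leave $C$ many times before being trapped); the paper instead invokes Lemma~\ref{lemma:mec-contract} to obtain a strategy $\rho$ on $M'$ and sets $y_{s,a}$ to expected visit counts under $\rho$, which is what makes the flow equations and \cite[Lemma~3.3]{EKVY-lmcs08} apply cleanly.
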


  The linear program follows the ideas of~\cite{EKVY-lmcs08,BBCFK-lmcs14}. 
  Note that the first two lines of~(L) corresponds to the multiple reachability LP of~\cite{EKVY-lmcs08} for absorbing target states.
The equations encode strategies that work in two phases.
  Variables~$y_{s,a}$ correspond to
  the expected number of visits of state-action~$s,a$ in the first phase. Variable~$y_{s,a^*}$ describes the probability 
  of switching to the second phase at state~$s$.
  The second phase consists in surely staying in the current MEC, so we require $\sum_{s \in S_{\text{MEC}}} y_{s,a^*} = 1$
  (and we will have $y_{s,a^*}=0$ if $s$ does not belong to a MEC).
  In the second phase, we immediately switch to some
	strategy~$\sigma_I^C$ where $C$ denotes the current MEC. Thus, variable~$\lambda_I^C$ corresponds to the probability
  with which we enter the second phase in~$C$ and switch to strategy~$\sigma_I^C$ (see~\eqref{eqn:thirdline}).
  Intuitively, given a solution $(\lambda_I)_I$ computed for one EC by \eqref{eqn:mpinf-scc-lp}, we have the correspondence
  $\lambda_I^C = \sum_{s \in C}y_{s,a^*} \cdot \lambda_I$.
  The interpretation of~\eqref{eqn:fifthline} is that each event~$A_i$ is satisfied with probability at least~$\alpha_i$.

The two following lemmas prove Lemma~\ref{lemma:mpinf-lp}. 

\begin{lemma}
  If (L) has a solution then there exists a strategy for the multi-constraint percentile problem.
  Moreover, from any solution of (L) one can derive in time polynomial in~$M$, $\log(\frac{1}{\varepsilon})$, and exponential in~$q$, a $\mathcal{O}(2^q)$-memory randomized strategy
  solving the $\varepsilon$-relaxation of the multi-constraint percentile problem.
\end{lemma}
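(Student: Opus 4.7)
The plan is to convert any solution $\bigl((y_{s,a}), (\lambda_I^C)\bigr)$ of (L) into a two-phase randomised strategy: phase one follows the flow $y$ in order to reach some MEC, and phase two, triggered whenever the ``commit'' action $a^*$ is prescribed, runs forever a jointly-optimal inner strategy inside the current MEC.

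For phase one, observe that the flow equation (\ref{eqn:general-lp}) together with the normalisation $\sum_{s \in S_{\mecs}} y_{s,a^*} = 1$ are precisely the Etessami et al.\ LP for multiple reachability with absorbing target states $\{s_C\}_{C \in \mecs(M)}$ on the MEC-contracted MDP $M'$ of Lemma~\ref{lemma:mec-contract}. Standard flow-extraction produces a memoryless strategy $\sigma_1$ on $M'$: at each state $s$ with $\sum_{a' \in A'(s)} y_{s,a'} > 0$ play action $a$ with probability $y_{s,a}/\sum_{a'} y_{s,a'}$, and at unreachable states play arbitrarily. Under $\sigma_1$ absorption happens almost surely and the probability of ending in $s_C$ equals $\sum_{s \in C} y_{s,a^*}$, which by (\ref{eqn:thirdline}) equals $\sum_{I \in \calI^C} \lambda_I^C$.

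For phase two, for every MEC $C$ and every $I \in \calI^C$ with $\lambda_I^C > 0$ I invoke Lemma~\ref{lemma:joint-bounds} to compute, in time polynomial in $|M|$, $\max_i\log(v_i)$ and $\log(1/\varepsilon)$, a memoryless randomised strategy $\tau_I^C$ that stays inside $C$ and satisfies
\[
\pr_{C,s}^{\tau_I^C}\bigl[\,\textstyle\bigwedge_{i \in I} \mpinf_i \geq v_i - \varepsilon\,\bigr] = 1.
\]
The overall strategy $\sigma$ on $M$ is obtained by gluing: simulate $\sigma_1$ on $M$ (identifying non-$a^*$ actions with their $M$-counterparts), and whenever $\sigma_1$ would play $a^*$ at some $s \in C$, independently sample $I \in \calI^C$ with probability $\lambda_I^C/\sum_{I' \in \calI^C} \lambda_{I'}^C$ and switch permanently to $\tau_I^C$. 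By the phase-one analysis together with the independence of the sampling, the total probability that $\sigma$ commits to the pair $(C,I)$ is exactly $\lambda_I^C$.

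To verify the constraints, note that each $\tau_I^C$ achieves $\mpinf_i \geq v_i - \varepsilon$ almost surely for every $i \in I$, and since $\mpinf_i$ is prefix-independent the phase-one prefix does not matter. Hence, using (\ref{eqn:fifthline}),
\[
\pr_{M,\initState}^\sigma\bigl[\mpinf_i \geq v_i - \varepsilon\bigr] \;\geq\; \sum_{C \in \mecs(M)}\, \sum_{I \in \calI^C\, :\, i \in I} \lambda_I^C \;\geq\; \alpha_i,
\]
which is the $\varepsilon$-relaxed percentile query. Regarding resources, enumerating the maximal subsets $\calI^C$ costs time exponential in $q$, while the rest is polynomial in $|M|$ and $\log(1/\varepsilon)$. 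The strategy $\sigma$ needs one bit for the phase and a pointer to the chosen pair $(C,I)$ of range $\sum_C |\calI^C| \le |\mecs(M)| \cdot 2^q$; since $\sigma_1$ and each $\tau_I^C$ are memoryless on $M$, the total memory is $\mathcal{O}(2^q)$.

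The main obstacle will be rigorously justifying the gluing step: one must check that the flow-to-strategy extraction is well-defined at every state reachable under $\sigma_1$, and that the joint probability of committing to $(C,I)$ decomposes multiplicatively as $\sum_{s \in C} y_{s,a^*} \cdot \lambda_I^C / \sum_{I'} \lambda_{I'}^C = \lambda_I^C$. This ultimately boils down to the correctness of the EKVY absorbing-target LP combined with (\ref{eqn:thirdline}), but care is needed to keep the phase-one trajectory independent of the phase-two coin flip.
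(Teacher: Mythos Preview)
Your proposal is correct and follows essentially the same two-phase construction as the paper: extract a memoryless reachability strategy on~$M'$ from the flow variables $(y_{s,a})$ via the EKVY absorbing-target LP, then upon playing~$a^*$ inside a MEC~$C$ sample $I \in \calI^C$ proportionally to~$\lambda_I^C$ and switch to the inner strategy for~$I$. One small remark: you address only the $\varepsilon$-relaxation, whereas the lemma's first sentence also asserts the existence of an \emph{exact} (possibly infinite-memory) strategy; the paper handles this by running the infinite-memory~$\sigma_I^C$ of Lemma~\ref{lemma:joint-bounds} in phase two instead of the memoryless~$\tau_I^C$, and your gluing argument carries over verbatim. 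Also, you need not store the pair~$(C,I)$ in memory---the current state determines the MEC, so storing~$I$ alone (plus the phase bit) suffices, which is how the paper gets $2^q+1$ rather than $|\mecs(M)|\cdot 2^q$ memory elements.
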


\begin{proof}
  Let~$\bar{y_{s,a}},\bar{y_{s,a^*}}, \bar{\lambda_I^C}$ be a solution of (L).
  By \cite[Theorem 3.2]{EKVY-lmcs08}, there exists a memoryless strategy~$\rho$ for~$M'$ such that
  $\pr_{M',\initState}^\rho[\Diamond s_{C}] = \sum_{s \in C} y_{s,a^*}$ for each MEC~$C$,
  and $\sum_{C \in \mecs(M)} \pr_{M',\initState}^\rho[\Diamond s_{C}] = 1$ by the second line.  
  In this strategy, $y_{s,a^*}$ is the probability of going to~$s_{C_i}$ from~$s$.

  For each MEC~$C$, we define the strategy~$\sigma^C$ for~$M$ which, from the states of~$C$,
  executes each strategy
  $\sigma_I$ for~$I \in \calI^C$ with probability $\frac{\lambda_I^C}{\sum_{J \in
      \calI^C}\lambda_J^C}=\frac{\lambda_I^C}{\sum_{s \in C} y_{s,a^*}}$,
  if the denominators are positive, and with an arbitrary distribution otherwise.
  We combine these in a strategy~$\sigma$ for~$M$
  which starts by simulating~$\rho$ until~$\rho$ chooses takes the action~$a^*$, at which point
  $\sigma$ switches to~$\sigma^C$.

  By construction the probability of~$\sigma$ of switching to~$\sigma_I^C$ is
  $\sum_{s \in C}y_{s,a^*} \cdot \frac{\lambda_I^C}{\sum_{s \in C}y_{s,a^*}} = \lambda^C_I$, for any~$C$ and~$I \in \calI^C$.
  Moreover, thanks to the fact that $\sum_{s \in S_{\mecs}} y_{s,a^*} =1$, we know that $\sigma$ will eventually switch to some~$\sigma^C$ almost-surely. 
  Because for all~$I$ and~$C$ such that~$i \not \in I$, the probability of~$\sigma_I^C$ of satisfying~$A_i$ inside~$C$ is~$0$ (see above), 
  we get that the probability of satisfying~$A_i$ under~$\sigma$ is equal to the probability of switching to some~$\lambda^C_I$.
  But thanks to the last line of the program, this quantity is at least~$\alpha_i$.
  Hence,~$\sigma$ satisfies the multi-constraint percentile problem.

  We obtain a strategy for the relaxed problem as follows. Each strategy~$\sigma_I^C$ may be infinite-memory a priori
  but for any~$\varepsilon>0$, we can compute by Lemma~\ref{lemma:joint-bounds}, memoryless randomized strategies~$\tau_I^C$ ensuring~$\calA_I^\varepsilon$
  with probability~$1$. Now, since~$\rho$ is also memoryless, the combined strategy only needs $2^q+1$ memory elements (to store the phase, and which~$I$ it has chosen once in a MEC).
  The result follows.
\end{proof}

\begin{lemma}
  If strategy~$\sigma$ solves the multi-constraint percentile problem for~$\mpinf$, then (L) has a solution.
\end{lemma}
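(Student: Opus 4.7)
The plan is to construct a feasible solution of the linear program (L) from the given strategy $\sigma$, by decomposing the probability space according to the maximal EC in which the run eventually settles, and then invoking the single-EC lemma \emph{within} each MEC to obtain the coefficients $\lambda_I^C$.

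First I would set $p_C = \pr_{M,\initState}^\sigma[\Diamond\Box C]$ for every $C \in \mecs(M)$. Since $\Inf$ is almost surely an EC, the events $\{\Diamond\Box C\}_{C \in \mecs(M)}$ partition the sample space up to a null set, so $\sum_C p_C = 1$. By Lemma~\ref{lemma:mec-contract} the strategy $\sigma$ induces a strategy $\tau$ on the contracted MDP $M'$ that reaches each absorbing state $s_C$ with probability exactly $p_C$. The multiple-reachability LP of~\cite{EKVY-lmcs08} applied to $M'$ with the absorbing targets $s_C$ then supplies flow values $y_{s,a}$ satisfying the first two lines of (L) and with $\sum_{s \in C} y_{s,a^*} = p_C$ for every $C$, and $\sum_{s \in S_{\mecs}} y_{s,a^*} = 1$.

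Next, for every MEC $C$ with $p_C > 0$, I would consider the conditional probabilities $\beta_i^C = \pr_{M,\initState}^\sigma[A_i \mid \Diamond\Box C]$. Because $\mpinf$ is prefix-independent, the tail of a run conditioned on $\Diamond\Box C$ is distributed as a run inside the strongly connected sub-MDP $C$ under a well-defined randomized strategy, and that strategy achieves $\pr[A_i] \geq \beta_i^C$. Applying the single-EC lemma to this strategy produces non-negative coefficients $(\mu_I^C)_{I \in \calI^C}$ with $\sum_I \mu_I^C = 1$ and $\sum_{I \in \calI^C : i \in I} \mu_I^C \geq \beta_i^C$. Setting $\lambda_I^C := p_C \cdot \mu_I^C$ (and $\lambda_I^C := 0$ when $p_C = 0$) gives the candidate multipliers.

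Finally I would verify each constraint of (L). The flow equations and $\sum_{s \in S_{\mecs}} y_{s,a^*} = 1$ hold by construction; equation~\eqref{eqn:thirdline} reduces to $\sum_I \lambda_I^C = p_C \cdot \sum_I \mu_I^C = p_C$; non-negativity~\eqref{eqn:fourthline} is immediate; and \eqref{eqn:fifthline} follows from the law of total probability:
\[
\sum_{C \in \mecs(M)} \sum_{I \in \calI^C : i \in I} \lambda_I^C \;=\; \sum_C p_C \sum_{I \ni i} \mu_I^C \;\geq\; \sum_C p_C \cdot \beta_i^C \;=\; \pr_{M,\initState}^\sigma[A_i] \;\geq\; \alpha_i.
\]
The main obstacle is the middle step: one must rigorously justify that conditioning $\sigma$ on $\Diamond\Box C$ yields a \emph{strategy on $C$} with the stated $\beta_i^C$, so that the single-EC lemma applies. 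The argument relies on prefix-independence of $\mpinf$ together with the fact that, conditionally on eventually remaining in $C$, the tail of the run is distributed as a run of $C$ under the conditional strategy; everything before entering $C$ is irrelevant to the events $A_i$.
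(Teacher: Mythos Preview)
Your overall decomposition is correct and matches the paper's proof: both obtain the flow variables $y_{s,a}$ from Lemma~\ref{lemma:mec-contract} together with the reachability LP of~\cite{EKVY-lmcs08}, and both split the remaining probability mass among the maximal subsets $\calI^C$ inside each MEC. The difference lies in how the coefficients $\lambda_I^C$ are produced.

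You propose to extract, for each MEC~$C$, a strategy on the strongly connected sub-MDP~$C$ realising the conditional probabilities $\beta_i^C = \pr_{M,\initState}^\sigma[A_i \mid \Diamond\Box C]$, and then to invoke the single-EC lemma as a black box. As you correctly flag, this extraction is the delicate step: $\Diamond\Box C$ is a tail event, the ``last entry into~$C$'' is not a stopping time, and the conditional process is not obviously induced by a strategy on~$C$. The paper avoids this obstacle entirely. Rather than manufacturing a strategy on~$C$, it works throughout with probabilities in~$M$ under~$\sigma$: it directly sets $\lambda_I^C := \pr_{M,\initState}^\sigma[\calA_I \land \Inf(\rho)=C]$ for the events $\calA_I = \bigwedge_{i\in I} A_i \wedge \bigwedge_{i\notin I} \lnot A_i$, after reassigning the mass of each non-maximal~$J$ to an arbitrarily chosen maximal $I \in \calI^C$ with $J\subseteq I$. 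The only fact about~$C$ that is needed is that every~$J$ with $\pr_{M,\initState}^\sigma[\calA_J \land \Diamond\Box C]>0$ lies below some $I \in \calI^C$; this follows from Lemma~\ref{lemma:joint-bounds}, since such a~$J$ witnesses that $\bigwedge_{i\in J}A_i$ is achievable with positive probability inside~$C$. Verifying~\eqref{eqn:thirdline}--\eqref{eqn:fifthline} is then the same direct computation as in your final paragraph. So the paper's route is less modular but sidesteps precisely the step you identified as problematic.
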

\begin{proof}
  Let~$C_1,\ldots,C_m$ denote the MECs,
  and define~$y_{C_i} = \pr_{M,\initState}^\sigma[\Inf(\rho) = C_i]$ for each~$i$.
  Clearly, we have $\sum_i y_{C_i} = 1$.
  Let~$\rho$ denote the strategy on~$M'$ of Lemma~\ref{lemma:mec-contract} given for~$\sigma$.
  For any action~$a \in C_i$, let~$y_{s,a}$ denote the expected number of times action~$a$ is taken at~$s$ under~$\rho$
  in~$M'$ starting at~$\initState$. 
  Now, \cite[Lemma 3.3]{EKVY-lmcs08} ensures that these variables have finite values and satisfy the 
  first two lines of (L).

  We define strategy~$\sigma'$ for~$M$ which follows~$\rho$ until action~$a^*$ is taken, at which point
  it switches to each strategy $\sigma_I^C$ with probability $\pr_{M,s_0}^{\sigma}[\calA_I\mid \Inf(\rho) = C]$	(these include non-maximal sets~$I$).
  We have that 
  \[
  \pr_{M,\initState}^{\sigma'}[A_i] = \sum_{C \in \mecs(M)} \pr_{M,\initState}^{\sigma'}[A_i \mid \Inf(\rho) = C]\pr_{M,\initState}^{\sigma'}[\Inf(\rho) = C].
  \]
  By definition of~$\sigma'$, the first term in the sum equals
  ${\pr_{M,\initState}^\sigma[A_i\mid \Inf(\rho) = C]}$.
  The second term in the sum is equal to $\pr_{M',\initState}^\rho[\Diamond s_{C}]=
  \pr_{M,\initState}^\sigma[\Inf(\rho) = C]$. 
  It follows that $\pr_{M,\initState}^{\sigma'}[A_i] = \pr_{M,\initState}^{\sigma}[A_i]$.

  Now, to obtain a solution of (L), it remains to get rid of the strategies~$\sigma_I$ for non-maximal subsets~$I$ for each MEC.
  We thus modify once more~$\sigma'$ to obtain~$\sigma''$ as follows. Whenever~$\sigma'$ switches to some strategy~$\sigma_I^C$,
  where~$I$ is not maximal, we rather switch to some $\sigma_J^C$ for some -arbitrarily chosen- maximal~$J\supset I$.
  It is clear that $\pr_{M,\initState}^\sigma[\Inf(\rho)=C] = \pr_{M,\initState}^{\sigma''}[\Inf(\rho) = C]$ and
  $\pr_{M,\initState}^{\sigma''}[\calA_I \mid \Inf(\rho) = C]\geq \pr_{M,\initState}^{\sigma}[\calA_I \mid \Inf(\rho)=C]$ for all~$I \in \calI^C$.
  
  Now, for each~$C$ and~$I \in \calI^C$, we define $\lambda_I^C = \pr_{M,\initState}^{\sigma''}[\calA_I \land \Inf(\rho) = C]$.
  It is easy to verify that $0\leq \lambda_I^C$ and $\sum_{I \in \calI^C} \lambda_I^C = \sum_{s \in C}y_{s,a^*}$, so \eqref{eqn:thirdline} and~\eqref{eqn:fourthline} are satisfied.
  We have $\sum_{I \in \calI^{C_i}} \lambda_I^{C_i} = \pr_{M,\initState}^{\sigma''}[\Inf(\rho)=C_i] = \pr_{M,\initState}^\sigma[\Inf(\rho) = C_i] =  y_{C_i}$
  for all $i=1\ldots d$. Moreover, $\pr_{M,\initState}^{\sigma''}[A_i] = \sum_{C \in \mecs(M)}\sum_{I \in \calI^C: i \in I} \pr_{M,\initState}^{\sigma''}[\calA_I \land \Inf(\rho)= C]$.
  This is at least equal to~$\pr_{M,\initState}^{\sigma}[A_i]$ as we saw above, which is at least~$\alpha_i$ by assumption; hence \eqref{eqn:fifthline} is also satisfied.
\end{proof}

Our results for the multi-dimensional problems with $\mpinf$ are summed up in the next theorem.

\begin{theorem}
\label{thm:mpinf}
  The multi-dimensional percentile problem for $\mpinf$ can be solved in 
  time polynomial in the model, and exponential in the query.
  Infinite-memory strategies are necessary, but exponential-memory (in the query) suffices
  for the $\varepsilon$-relaxation and can be computed with the same complexity.
\end{theorem}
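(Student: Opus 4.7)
The plan is to assemble the three ingredients already developed in Section~\ref{section:multidim-mpinf}. First, by Lemma~\ref{lemma:mpinf-lp}, the multi-constraint percentile problem is equivalent to the feasibility of the linear program~(L) of Figure~\ref{fig:mpinf-lp}, so it suffices to analyze the size and construction of (L). The variables $y_{s,a}$ are polynomial in~$|M|$, while for each MEC $C$ the number of variables $\lambda_I^C$ is bounded by $|\calI^C| \leq 2^q$. Computing each set $\calI^C$ requires testing, for candidate subsets $I \subseteq \{1,\ldots,q\}$, whether a single strategy almost-surely satisfies $\wedge_{i\in I} A_i$ inside $C$; by Lemma~\ref{lemma:joint-bounds} each such test is polynomial in~$|M|$, so the total construction cost is polynomial in~$|M|$ and exponential in~$q$. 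Solving (L) by standard LP techniques then takes time polynomial in its encoding, giving the announced complexity.

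For the $\varepsilon$-relaxation and strategy synthesis, I would reuse the two-phase construction implicit in the lemma preceding Theorem~\ref{thm:mpinf}. From a feasible solution of (L), the first two lines yield a memoryless randomized strategy~$\rho$ on $M'$ via the multiple reachability LP of~\cite{EKVY-lmcs08}, which selects the MEC $C$ to commit to. Upon entering $C$ via $a^*$, the strategy flips a biased coin with weights $\lambda_I^C / \sum_{s\in C} y_{s,a^*}$ and switches to the memoryless randomized strategy $\tau_I^C$ given by Lemma~\ref{lemma:joint-bounds} that enforces $\wedge_{i\in I} \mpinf_i \geq v_i - \varepsilon$ almost surely. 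The memory bookkeeping then reduces to (i) a bit marking the phase and (ii) the index $I \in \calI^C$ of the selected inner strategy, which fits in $\mathcal{O}(2^q)$ memory elements; the bit-complexity of computing $\tau_I^C$ contributes the additional $\log(1/\varepsilon)$ factor reported in the theorem.

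Finally, for the infinite-memory lower bound, I would lift the example of Figure~\ref{fig:mpsup-memory}. Setting $v_1 = v_2 = 0.5$ and probability thresholds $\alpha_1 = \alpha_2 = 0.6$, a union-bound argument forces any feasible $\sigma$ to satisfy $\pr_{M,\initState}^\sigma[\mpinf_1 \geq 0.5 \wedge \mpinf_2 \geq 0.5] \geq 0.2$, since $\alpha_1 + \alpha_2 - 1 = 0.2$. But \cite[Section~5]{BBCFK-lmcs14} shows that on this MDP, achieving that joint event with any positive probability requires infinite memory, so $\sigma$ cannot be finite-memory. This inherits the lower-bound mechanism already used in the single-EC lemma and matches the memory upper bound of the $\varepsilon$-relaxation.

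The routine parts are the LP encoding and the plugging-in of Lemma~\ref{lemma:joint-bounds}; the only slightly subtle point is to make sure the two phases compose correctly under the $\varepsilon$ slack, i.e.~that the mixture of the $\tau_I^C$ still delivers each $A_i^\varepsilon$ with probability at least $\alpha_i$ and not merely with the within-MEC conditional probability—this follows because, by~\eqref{eqn:thirdline} and~\eqref{eqn:fifthline} of (L), summing $\lambda_I^C$ over all $C$ and all $I \ni i$ already accounts for the reachability weights $\sum_{s\in C} y_{s,a^*}$.
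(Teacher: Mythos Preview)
Your proposal is correct and follows essentially the same route as the paper. Theorem~\ref{thm:mpinf} is a summary theorem: the complexity follows from Lemma~\ref{lemma:mpinf-lp} together with the observation that~(L) has polynomially many $y_{s,a}$ variables and at most $2^q$ variables $\lambda_I^C$ per MEC (each $\calI^C$ computable via Lemma~\ref{lemma:joint-bounds}); the $\varepsilon$-relaxation strategy and its $\mathcal{O}(2^q)$ memory bound are exactly the two-phase construction of the lemma immediately preceding the theorem; and your infinite-memory argument (thresholds $v_1=v_2=0.5$, $\alpha_1=\alpha_2=0.6$, union bound giving joint probability $\geq 0.2$, then invoking~\cite[Section~5]{BBCFK-lmcs14}) is verbatim the argument given in the strongly-connected lemma earlier in the section.
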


\section{Shortest Path}
\label{sec:sp}
We study shortest path problems in MDPs, which generalize the classical graph problem. In MDPs, the problem consists in finding a strategy ensuring that a target set is reached with bounded truncated sum with high probability.
This problem has been studied in the context of games and MDPs (e.g.,~\cite{bertsekas_MOR1991,DBLP:conf/concur/Alfaro99,DBLP:conf/stacs/BruyereFRR14}).
We consider percentile queries of the form $\query \coloneqq \bigwedge_{i = 1}^{q}\; \pr_{M,\initState}^\strat\big[\truncatedSum{\truncatedTarget_{i}}_{l_{i}} \leq v_i\big] \geq
\alpha_i$ (inner inequality $\leq$ is more natural but $\geq$ could be used by negating all weights). Observe that each constraint $i$ may relate to a different target set $\truncatedTarget_{i} \subseteq \states$.

\subsection{MDPs with Arbitrary Weights}

We prove that without further restriction, the multi-dimen\-sion\-al percentile problem is undecidable, even for a fixed number of dimensions. Our proof is inspired by the approach of Chatterjee et al.~for the undecidability of two-player multi-dimensional total-payoff games~\cite{DBLP:journals/iandc/Chatterjee0RR15} but requires additional techniques to adapt to the stochastic case. 

\begin{theorem}
\label{thm:truncated_undec}
The multi-dimensional percentile problem is undecidable for the truncated sum payoff function, for MDPs with both negative and positive weights and four dimensions, even with a unique target set.
\end{theorem}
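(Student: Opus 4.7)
The plan is to reduce from the halting problem for deterministic two-counter Minsky machines, which is undecidable. Given such a machine $\twoCM$ with counters $c_1, c_2$, I would build a weighted MDP with a four-dimensional weight function, a unique target set $\truncatedTarget = \{t^\star\}$, and a four-constraint percentile query $\query$ such that $\twoCM$ halts iff $\query$ is satisfiable. This follows the template of the game-theoretic undecidability proof of Chatterjee et al.~\cite{DBLP:journals/iandc/Chatterjee0RR15} for multi-dimensional total-payoff games, but with probabilistic branches in place of the adversary's challenges --- which is exactly where the additional stochastic techniques are required.

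The counters are encoded on pairs of dimensions: an increment of $c_1$ carries weight $(+1,-1,0,0)$, a decrement of $c_1$ carries $(-1,+1,0,0)$, and $c_2$ is handled symmetrically on dimensions~3 and~4; all control-flow edges carry weight $\vec{0}$. Under any honest simulation the accumulated weight vector after a prefix equals $(c_1, -c_1, c_2, -c_2)$. Each instruction of $\twoCM$ becomes a small gadget, and the halt instruction is the only edge directly entering $t^\star$. The query requires that for every $i \in \{1,2,3,4\}$, $\pr_{M, \initState}^\strat[\truncatedSum{\truncatedTarget}_i \leq 0] \geq \alpha$ for some $\alpha$ close to $1$; any honest halting run reaches $t^\star$ with $c_1 = c_2 = 0$ and therefore satisfies all four constraints with probability $1$. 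To force the controller to resolve zero-tests truthfully, I interpose before each zero-test on $c_i$ a $\tfrac{1}{2}$--$\tfrac{1}{2}$ probabilistic split: either the simulation continues along the announced branch, or a short verification path leads straight to $t^\star$. A lie on a zero-test makes one of the two dimensions tracking $c_i$ strictly positive at $t^\star$, which violates a constraint with probability at least $\tfrac{1}{2}$, ruling cheating out when $\alpha > \tfrac{1}{2}$.

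The main obstacle lies in the other direction: if $\twoCM$ does not halt but the controller simulates it faithfully, some verification gadget still triggers almost surely, so $t^\star$ is reached without any cheat; nothing in the basic encoding prevents pathological non-halting machines that pass through $(c_1, c_2) = (0, 0)$ at every zero-test and thus pass every verification. Closing this gap is precisely where the ``additional techniques'' come in. I would handle it either by (i)~a standard pre-normalisation of $\twoCM$ into an equivalent machine whose non-halting computations have some counter strictly positive at infinitely many zero-tests, or (ii)~injecting along the verification paths a small weight increment on the cheating-side dimensions that accumulates per simulated step, so that along any infinite simulation the running sum must eventually exceed $0$ at the moment of verification. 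With this in place, the correctness proof is routine: a satisfying strategy must (a)~reach $t^\star$ almost surely, for otherwise $\truncatedSum{\truncatedTarget}_i = \infty$ with positive probability, (b)~never cheat at a zero-test, and (c)~eventually take the halt edge; but (a)--(c) are jointly achievable iff $\twoCM$ halts, which yields the desired undecidability.
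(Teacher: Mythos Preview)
Your reduction is in the right spirit (two-counter machines, four dimensions with the $(c_1,-c_1,c_2,-c_2)$ encoding, probabilistic challenges replacing the adversary), but the construction as you describe it has a concrete gap, and the paper in fact takes the reduction in the \emph{opposite} direction.

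\paragraph{The gap in your verification gadget.}
You write that at a zero-test on~$c_i$ a coin flip sends the run ``straight to~$t^\star$'', and that a lie makes one of the two dimensions for~$c_i$ positive there. But at that moment the accumulated vector is $(c_1,-c_1,c_2,-c_2)$, so if the \emph{other} counter~$c_j$ is nonzero --- which is perfectly compatible with an honest simulation --- then dimension $2j{-}1$ is strictly positive at~$t^\star$ and the corresponding constraint fails. Your verification punishes honest controllers just as often as cheaters. The paper fixes exactly this by never going ``straight'' to the target: every branch to the target first passes through a state with a self-loop of weight $(-1,0,-1,-1)$ (for a check on~$c_1$) or $(0,-1,-1,-1)$ (for a zero-test on~$c_1$), etc., which lets an honest controller drive all dimensions \emph{except the one being checked} below~$0$ before entering~$\truncatedTarget$. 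In addition, after every simulated instruction there is a further probabilistic branch to an ``escape'' gadget with a $(-1,-1,-1,-1)$ loop, so that a faithful controller can always reach~$\truncatedTarget$ with all sums~$\leq 0$.

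\paragraph{Direction of the reduction.}
With these gadgets the natural equivalence is the one the paper proves: the query (with $\alpha_i=1$) is satisfiable iff the machine \emph{does not} halt. The halting control state is \emph{not} in~$\truncatedTarget$ and has no path to it, so any halting computation yields $\truncatedSum{\truncatedTarget}=\infty$ with positive probability; conversely, along an infinite faithful simulation the escape/check branches are taken almost surely and always lead to~$\truncatedTarget$ with all sums~$\leq 0$. This sidesteps precisely the ``main obstacle'' you flag: there is nothing to repair in the non-halting case, because non-halting is the \emph{good} case. Your two proposed fixes for the halting-iff-satisfiable direction are both problematic --- ``normalising'' a two-counter machine so that every non-halting run has a positive counter at infinitely many zero-tests is not a standard transformation and is not obviously computable, and ``small weight increments'' per step along verification paths do not sit well with integer weights and in any case do not address the other-counter issue above.
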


\begin{proof}
We reduce the halting problem for two-counter machines (2CMs) to a multi-dimensional percentile problem for the truncated sum payoff function over an MDP with weights in $\integ^{4}$, with a unique target set.

Counters of a 2CM take values $(v_{1}, v_{2}) \in \nat^{2}$ along an execution, and can be incremented or decremented (if positive). A counter can be tested for equality to zero, and the machine can branch accordingly.
The halting problem for 2CMs is well-known to be undecidable~\cite{minsky1961}.

Consider a 2CM $\twoCM$. From this 2CM, we construct an MDP $\markovProcess = (\states, A, \delta, \weight)$ and a target set of states $\truncatedTarget \subset \states$, with an initial state $\initState \in \states$ such that there exists a strategy $\strat \in \strats$ satisfying the four-dimensional percentile query
\begin{equation*}
\query \coloneqq \bigwedge_{i = 1}^{4}\; \pr_{M,\initState}^\strat\big[\truncatedSum{\truncatedTarget}_{l_{i}} \leq 0\big] =
	1.
\end{equation*}
if and only if the machine does not halt.

Intuitively, this MDP is built such that strategies that do not faithfully simulate the 2CM~$\twoCM$ cannot satisfy the percentile query. To ensure that this is the case, we will implement checks through probabilistic transitions that will produce bad runs with positive probability against unfaithful strategies.

The MDP $\markovProcess$ is built as follows. The states of $\markovProcess$ are copies of the control states of $\twoCM$ (plus some special states discussed in the following). Actions in the MDP represent transitions between these control states. The weight function maps actions to $4$-dimensional vectors of the form $(c_{1}, -c_{1}, c_{2}, -c_{2})$, that is, two dimensions for the first counter $C_{1}$ and two for the second counter $C_{2}$. Each increment of counter $C_{1}$ (resp. $C_{2}$) in $\twoCM$ is implemented in $\markovProcess$ as an action of weight $(1, -1, 0, 0)$ (resp. $(0, 0, 1, -1)$). For decrements, we have weights respectively $(-1, 1, 0, 0)$ and $(0, 0, -1, 1)$ for $C_{1}$ and $C_{2}$. Therefore, the current value of counters $(v_{1}, v_{2})$ along an execution of the 2CM $\twoCM$ is represented in the MDP as the current sum of weights, $(v_{1}, -v_{1}, v_{2}, -v_{2})$. Hence, along a faithful execution, the 1st and 3rd dimensions are always non-negative, while the 2nd and 4th are always non-positive. The two dimensions per counter are used to enforce faithful simulation of non-negativeness of counters and zero test.

\begin{figure}[tb]
        \centering
        \begin{subfigure}[b]{0.26\textwidth}
        \centering
               \scalebox{1.1}{\begin{tikzpicture}[->,>=stealth',shorten >=1pt,auto,node
    distance=2.5cm,bend angle=45, scale=0.6, inner sep=0pt,font=\scriptsize]
    \tikzstyle{p1}=[draw,circle,text centered,minimum size=5mm,text width=4mm]
    \tikzstyle{p2}=[fill,circle,text centered,minimum size=1.5mm]
    \node[p1]  (1) at (0, 0)  {};
    \node[p2]  (3)  at (0, -2.5) {};
    \node[p1,dashed]  (2)  at (1.6, -4) {};
    \node[draw,rectangle,dashed,inner sep=2pt] (4) at (-1.6, -4) {escape gadget};
    \path
    (3) edge[shorten <=1pt](2)
    (3) edge[shorten <=1pt] (4)
    (1) edge node[left] {$(1, -1, 0, 0)$} (3);
      \end{tikzpicture}}
                \caption{Increment $C_{1}$.}
        \end{subfigure}%
        $\quad$ 
        \begin{subfigure}[b]{0.7\textwidth}
        \centering
               \scalebox{1.1}{\begin{tikzpicture}[->,>=stealth',shorten >=1pt,auto,node
    distance=2.5cm,bend angle=45, scale=0.6, inner sep=0pt,font=\scriptsize]
    \tikzstyle{p1}=[draw,circle,text centered,minimum size=5mm,text width=4mm]
    \tikzstyle{p2}=[fill,circle,text centered,minimum size=1.5mm]
    \node[p1]  (1) at (0, 0)  {};
    \node[p2]  (2)  at (4, 0) {};
    \node[p1,dashed]  (3) at (8, 2)  {};
    \node[p1]  (4) at (8, -2)  {};
    \node[p1,double]  (5) at (12, -2)  {};
    \node[draw,rectangle,dashed,inner sep=2pt] (6) at (3, -2) {escape gadget};
    \path
    (1) edge node[above] {$(-1, 1, 0, 0)$} (2)
    (2) edge[shorten <=1pt](3)
    (2) edge[shorten <=1pt] (4)
    (2) edge (6)
    (4) edge (5)
    (4) edge [loop above, out=130, in=50,looseness=3, distance=2cm] node [above] {$(-1, 0, -1, -1)$} (4)
    (5) edge [loop above, out=130, in=50,looseness=3, distance=2cm] node [above] {$(0, 0, 0, 0)$} (5);
      \end{tikzpicture}}
                \caption{Decrement $C_{1}$.}\label{fig:sp_undec_gadgets_dec}
        \end{subfigure}
        
\vspace{4mm}       

\begin{subfigure}[b]{0.26\textwidth}
        \centering
               \scalebox{1.1}{\begin{tikzpicture}[->,>=stealth',shorten >=1pt,auto,node
    distance=2.5cm,bend angle=45, scale=0.6, inner sep=0pt,font=\scriptsize]
    \tikzstyle{p1}=[draw,circle,text centered,minimum size=5mm,text width=4mm]
    \tikzstyle{p2}=[draw,rectangle,text centered,minimum size=5mm,text width=4mm]
    \node[p1]  (1)  at (0, 0) {};
    \path
    (0, 2) edge (1)
    (1) edge [loop below, out=230, in=310,looseness=3, distance=2cm] node [below] {$(0, 0, 0, 0)$} (1);
      \end{tikzpicture}}
                \caption{Halting.}
        \end{subfigure}%
        $\quad$ 
        \begin{subfigure}[b]{0.7\textwidth}
        \centering
               \scalebox{1.1}{\begin{tikzpicture}[->,>=stealth',shorten >=1pt,auto,node
    distance=2.5cm,bend angle=45, scale=0.6, inner sep=0pt,font=\scriptsize]
    \tikzstyle{p1}=[draw,circle,text centered,minimum size=5mm,text width=4mm]
    \tikzstyle{p2}=[fill,circle,text centered,minimum size=1.5mm]
    \node[p1]  (1) at (0, 0)  {};
    \node[p2]  (2)  at (4, 0) {};
    \node[p1,dashed]  (3) at (8, 2)  {};
    \node[p1]  (4) at (8, -2)  {};
    \node[p1,double]  (5) at (12, -2)  {};
    \node[draw,rectangle,dashed,inner sep=2pt] (6) at (3, -2) {escape gadget};
    \path
    (1) edge (2)
    (2) edge[shorten <=1pt](3)
    (2) edge[shorten <=1pt] (4)
    (4) edge (5)
    (2) edge (6)
    (4) edge [loop above, out=130, in=50,looseness=3, distance=2cm] node [above] {$(0, -1, -1, -1)$} (4)
    (5) edge [loop above, out=130, in=50,looseness=3, distance=2cm] node [above] {$(0, 0, 0, 0)$} (5);
      \end{tikzpicture}}
                \caption{Checking that $C_{1}$ is equal to zero.}\label{fig:sp_undec_gadgets_zero}
        \end{subfigure}
\vspace{4mm}       

\begin{subfigure}[b]{1\textwidth}
        \centering
             \scalebox{1.1}{\begin{tikzpicture}[->,>=stealth',shorten >=1pt,auto,node
    distance=2.5cm,bend angle=45, scale=0.6, inner sep=0pt,font=\scriptsize]
    \tikzstyle{p1}=[draw,circle,text centered,minimum size=5mm,text width=4mm]
    \tikzstyle{p2}=[draw,rectangle,text centered,minimum size=5mm,text width=4mm]
    \node[p1]  (1) at (0, 0)  {};
    \node[p1,double]  (2)  at (4, 0) {};
    \path
    (-2, 0) edge (1)
    (1) edge node[above] {$(0, 0, 0, 0)$} (2)
    (1) edge [loop above, out=130, in=50,looseness=3, distance=2cm] node [above] {$(-1, -1, -1, -1)$} (1)
    (2) edge [loop above, out=130, in=50,looseness=3, distance=2cm] node [above] {$(0, 0, 0, 0)$} (2);
      \end{tikzpicture}}
                \caption{Escape gadget reachable by every action of the MDP.}
        \end{subfigure}
        \caption{Gadgets encoding 2CM halting problem in a multi-dimensional percentile problem for truncated sum payoff function.}\label{fig:sp_undec_gadgets}    
\end{figure}
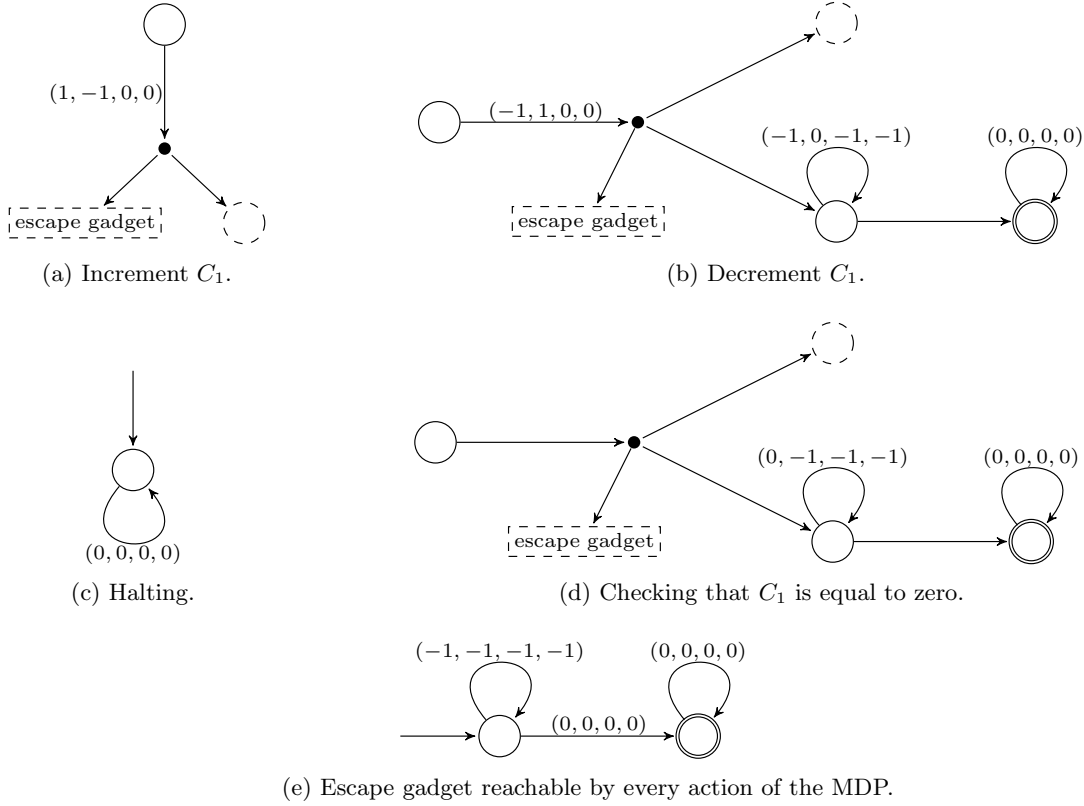

We now discuss how this MDP $\markovProcess$ ensures faithful simulation of the 2CM~$\twoCM$ by the controller through the use of the gadgets represented in Fig.~\ref{fig:sp_undec_gadgets}. Small filled circles represent equiprobable stochastic transitions, double circles depict states of the target set $\truncatedTarget$.
\begin{itemize}
\item An \textit{escape gadget} has a positive probability to be reached whenever an instruction of the 2CM is simulated. When in this gadget, the controller can decrease the sum on all dimensions below zero by cycling long enough before deciding to reach the target, hence making the run acceptable for the considered percentile query.

\item \textit{Increment and decrement} of counter values are easily simulated using the first four dimensions.

\item \textit{Values of counters may never go below zero}. To ensure this, we make every decrement action probabilistic with three equiprobable outcomes. Consider the decrement of $C_{1}$ as depicted in Fig.~\ref{fig:sp_undec_gadgets_dec}. Either the simulation continues (dashed control state), or it branches to the escape gadget, or it branches to the bottom right part of the decrement gadget. In that case, the controller can cycle long enough in the first state to ensure a negative sum of weights in all dimensions except for the second one, before reaching the target (runs \textit{have to} reach the target or their truncated sum will be $\infty$). If the controller is not faithful and a negative value is reached on counter $C_{1}$ when decrementing, this branching will induce a run which is losing because the second dimension will be strictly positive (recall it has value $-c_{1}$). Notice that the controller can never cheat, otherwise this branching will happen with strictly positive probability (i.e., after a finite prefix). On the contrary, if the controller never cheats, this branching is harmless and induces acceptable runs w.r.t. the percentile query. The gadget is similar for decrements of $C_{2}$ using the fourth dimension.

\item \textit{Zero tests are correctly executed}. In the same spirit, we allow a probabilistic branching after the controller claims a counter is equal to zero. Consider Fig.~\ref{fig:sp_undec_gadgets_zero} for counter $C_{1}$. If a zero test is passed while $c_{1} > 0$, the sum on the first dimension will stay strictly positive and the run will not be acceptable. On the contrary, if the controller is faithful, this branching is again harmless as it is possible to make all sums negative except for the first dimension for which it would already be equal to zero. We use a similar gadget for $C_{2}$ based on the third dimension. We also need to ensure that the controller cannot cheat by claiming that a counter is strictly positive while it is not. To achieve this, we follow each claim that $C_i$ is strictly positive by a decrement on $C_i$ (using our gadget) followed by an increment (idem). Thus, if $C_i$ is equal to zero while the controller claims the opposite, the decrement gadget will yield non-acceptable runs, as seen before. On the other hand, if the controller is faithful, visiting those two gadgets is safe (and does not modify the counters for the rest of the simulation).

\item \textit{Halting}. The end of a 2CM execution is modeled in the MDP by an halting state. This state does not belong to the target set $\truncatedTarget$: any run corresponding to an halting execution will have its truncated sum equal to $\infty$ on all dimensions, which makes it bad for the percentile query.

\end{itemize}

Now, we have argued that if the simulation is not faithful, bad runs will be produced with strictly positive probability, and the percentile query will not be satisfied. Furthermore, if the machine halts, then with a strictly positive probability, the halting state will be reached (because the machine halts after a \textit{finite} number of operations), which also results in bad runs. Hence if the percentile query is satisfied by a strategy $\strat \in \strats$, then this strategy describes a faithful infinite execution of $\twoCM$.

It remains to show that if the 2CM does not halt, the percentile query is satisfiable. Clearly, the halting state will never be reached, and gadgets cannot produce bad runs as the simulation is faithful. However, runs that never reach any target state (i.e., runs that never branch away from the simulation) are still bad runs as they yield an infinite truncated sum. Nonetheless, observe that each action taken in the MDP yields a strictly positive probability to branch to the escape gadget or to branch inside decrement and zero-test gadgets. Hence, if the 2CM does not halt, such actions are taken infinitely often and with probability one, the simulation eventually branches toward the target states (with a good truncated sum as argued before). We conclude that the strategy that simulates a never-halting machine does yield good runs with probability one.

Consequently, we have that the studied multi-dimensional percentile problem is equivalent to the 2CM halting problem, and thus, undecidable.
\end{proof}

\subsection{MDPs with Non-Negative Weights}

In the light of this result, we will restrict our setting to non-negative weights, a setting closer to the original interpretation of the shortest path problem (we could equivalently consider non-positive weights with inequality $\geq$ inside percentile constraints). We first discuss recent related work.

\smallskip\noindent\textbf{Comparison with quantiles and cost problems.} In~\cite{DBLP:conf/fossacs/UmmelsB13}, Ummels and Baier study \textit{quantile queries} over non-negatively weighted MDPs. They are equivalent to minimizing $v \in \nat$ in a single-constraint percentile query $\pr_{M,\initState}^\strat\big[\truncatedSum{\truncatedTarget} \leq v\big] \geq
	\alpha$ such that there still exists a satisfying strategy, for some fixed $\alpha$. Very recently, Haase and Kiefer extended quantile queries by introducing \textit{cost problems}~\cite{HaaseK14}. They can be seen as single-constraint percentile queries where inequality $\truncatedSum{\truncatedTarget} \leq v$ is replaced by an arbitrary Boolean combination of inequalities~$\varphi$. Hence, it can be written as $\pr_{M,\initState}^\strat\big[\truncatedSum{\truncatedTarget}  \models \varphi\big] \geq
	\alpha$.
Cost problems are studied on single-dimensional MDPs and all the inequalities relate to the same target $\truncatedTarget$, in contrast to our setting which allows both for multiple dimensions and multiple target sets. The single probability threshold bounds the probability of the whole event~$\varphi$.
	
Both settings are incomparable. 
Cost problems consist in a unique query that checks that with probability $\alpha$, paths satisfy the Boolean combination $\varphi$. In percentile queries, we have several constraints and we check that \textit{each} inequality is satisfied with the corresponding probability $\alpha_{i}$: paths do not need to satisfy \textit{all} inequalities at the same time. In full generality, for a fixed probability threshold $\alpha$, it is easier to satisfy a unique constraint over a disjunction of inequalities than to satisfy a disjunction of constraints over single inequalities: in the second case, $\alpha$ percent of the paths must satisfy the \textit{same} unique inequality, not in the first one. Similarly, it is harder to satisfy a unique constraint over a conjunction of inequalities than to satisfy a conjunction of constraints over single inequalities: in the first case, $\alpha$ percent of the paths must satisfy \textit{all} inequalities, not in the second one.

Still, our queries share common subclasses with cost problems: atomic formulae~$\varphi$ exactly correspond to our single-constraint queries. Moreover, cost problems for such formulae are inter-reducible with quantile queries~\cite[Proposition 2]{HaaseK14}. Cost problems with atomic formulae are \PSPACE-hard, so this also holds for \textit{single-constraint} percentile queries. The best known algorithm in this case is in \EXPTIME. In the following, we establish an algorithm that still only requires exponential time while allowing for \textit{multi-constraint multi-dimensional multi-target} percentile queries.

\smallskip\noindent\textbf{Overview.} 
Our main contributions for the shortest path are summarized in Theorem~\ref{thm:sp_overview}. In the following, we detail each of them and discuss some subclasses of queries with interesting complexities.

\begin{theorem}
\label{thm:sp_overview}
The percentile problem for the shortest path with non-negative weights can be solved in time polynomial in the model size
and exponential in the query size (exponential in the number of constraints and pseudo-polynomial in the largest threshold). 
The problem is \PSPACE-hard even for single-constraint queries. Exponential-memory strategies are sufficient and in general necessary.
\end{theorem}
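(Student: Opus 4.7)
The plan is to reduce the multi-constraint percentile problem on non-negatively weighted MDPs to a multiple reachability problem on a suitably unfolded MDP, and then invoke Theorem~\ref{thm:absorbing-reachsafe}.

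First, I would build an unfolded product MDP $M'$ whose states are pairs $(s, \vec{c})$, where $s \in S$ and $\vec{c} = (c_1,\ldots,c_q)$ records, for each constraint $i$, the status of the partial truncated sum on dimension $l_i$ w.r.t.~target $T_i$. Each component $c_i$ is either the current accumulated weight $\sum_{j} w_{l_i}(a_j)$ so far (provided this sum is $\leq v_i$ and no state of $T_i$ has been visited yet), a \emph{succeeded} flag (set as soon as a state of $T_i$ is reached while the current sum is still $\leq v_i$, and then frozen), or a \emph{failed} flag (set once the sum exceeds $v_i$ before visiting $T_i$; non-negativity of weights ensures the component can then never recover). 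Transitions are inherited from $M$ with the obvious update of $\vec{c}$. Since each~$c_i$ takes at most $v_i+3$ distinct values, we have $|M'| = \mathcal{O}\big(|M|\cdot \prod_{i=1}^q (v_i+3)\big)$, which is polynomial in~$|M|$, pseudo-polynomial in $\max_i v_i$, and exponential in~$q$.

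Second, by construction the event $\truncatedSum{T_i}_{l_i}\leq v_i$ in $M$ corresponds exactly to reaching, in $M'$, the set $T_i' = \{(s,\vec{c}) \mid c_i = \text{\emph{succeeded}}\}$, which we can assume absorbing by adding a self-loop (the $i$-th component is frozen anyway). Hence the multi-constraint percentile problem on $M$ with thresholds $(v_i,\alpha_i)_{1\leq i\leq q}$ is equivalent to the multiple reachability problem on $M'$ with target sets $(T_i')_i$ and probabilities $(\alpha_i)_i$. By Theorem~\ref{thm:absorbing-reachsafe}, this can be decided by a linear program of size polynomial in $|M'|$ and exponential in the number of targets, and admits memoryless witness strategies on $M'$. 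Translating back, we obtain on $M$ a strategy whose memory is bounded by the size of the counter component of $M'$, i.e., exponential in $|\query|$, and the overall running time is polynomial in $|M|$ and exponential in $|\query|$.

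For the lower bounds, \PSPACE-hardness for single-constraint queries is inherited directly from the \PSPACE-hardness of cost problems with atomic formulae~\cite{HaaseK14}, as discussed just before the theorem statement (single-constraint percentile queries and such atomic cost problems are inter-reducible via~\cite[Proposition~2]{HaaseK14}). For the exponential memory lower bound, I would replay the construction of Lemma~\ref{lem:multiReach_expMemoryLB}: the MDP of Fig.~\ref{fig:multiReach_exp_mem} can be equipped with unit weights on one dimension per gadget and singleton target sets, so that the reachability objectives $T_i$ become percentile constraints of the form $\truncatedSum{T_i}_{l_i}\leq 0$ with probability $1/2$, thereby inheriting the $2^{q/2}$ memory lower bound.

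The main obstacle is handling the interaction between the different targets $T_i$ along a single run in the correctness argument for the reduction: one must argue that tracking each component independently (and freezing it once the corresponding truncated sum is fixed) is sound, i.e., that under any strategy in $M'$ the marginal probability of reaching $T_i'$ equals the probability in $M$ of satisfying $\truncatedSum{T_i}_{l_i}\leq v_i$. This is where non-negativity of weights is essential (monotonicity of partial sums guarantees that the status of each component stabilises) and must be checked carefully; once established, the complexity and memory bounds follow directly from Theorem~\ref{thm:absorbing-reachsafe} applied to~$M'$.
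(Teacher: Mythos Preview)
Your approach is essentially the same as the paper's: unfold the MDP while tracking accumulated weights, then reduce to multiple reachability on the unfolding via Theorem~\ref{thm:absorbing-reachsafe}. The paper tracks sums per \emph{dimension} (states in $S\times\{0,\ldots,v_{\sf max}+1\}^d$) rather than per constraint, which is slightly more economical when several constraints share a dimension, but both fit within the stated bounds.

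Two points deserve correction. First, your remark that the target sets~$T_i'$ ``can be assumed absorbing by adding a self-loop'' is wrong: a state $(s,\vec{c})$ with $c_i=\text{\emph{succeeded}}$ is generally not absorbing, since the other components~$c_j$ are still evolving. You cannot freeze the whole state just because one coordinate is frozen. This is harmless for the complexity claim because you (correctly) invoke the general, non-absorbing case of Theorem~\ref{thm:absorbing-reachsafe} anyway, but the sentence as written is misleading; drop it. Note that the paper exploits absorbing targets only in the special case of a \emph{single} target set (the Remark after Lemma~\ref{lem:sp_alg}), where one can stop the unfolding at~$T$.

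Second, your sketch for the exponential-memory lower bound is garbled. With ``unit weights'' and threshold~$0$, the constraint $\truncatedSum{T_i}_{l_i}\leq 0$ would force reaching~$T_i$ at the very first step, which is not what you want; and the ``probability~$1/2$'' does not reproduce the \emph{almost-sure} multiple reachability instance of Lemma~\ref{lem:multiReach_expMemoryLB}. The clean reduction (used in the paper, Lemma~\ref{lem:spmemory}) is simpler: assign weight~$0$ to \emph{every} action on a single dimension. Then $\truncatedSum{T_i}\leq 0$ holds iff~$T_i$ is reached (otherwise the sum is~$\infty$), so $\pr^\sigma[\truncatedSum{T_i}\leq 0]\geq 1$ is exactly almost-sure reachability of~$T_i$, and the $2^{q/2}$ memory lower bound transfers directly.
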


\smallskip\noindent\textbf{Algorithm.} The sketch of our algorithm is as follows. Consider a $d$-dimensional MDP $M$ and a $q$-query percentile problem, with potentially different targets for each query. 
Let $v_{{\sf max}}$ be the maximum of the thresholds $v_i$. Because weights are non-negative, extending a finite history never decreases the sum of its weights.
Thus, any history ending with a sum exceeding $v_{\sf max}$ in all dimensions is surely losing under any strategy.

Based on this, we build an MDP $M'$ by unfolding $M$ and integrating the sum for each dimension in states of $M'$. We ensure its finiteness thanks to the above observation and we reduce its overall size to a \textit{single}-exponential by defining a suitable equivalence relation between states of $M'$: we only care about the current sum in each dimension, and we can forget about the actual path that led to it. Precisely, 
the states of~$M'$ are in $S \times \{0,\ldots,v_{\sf max}+1\}^d$. 
Now, for each constraint, we compute a set of target states in $M'$ that exactly captures all runs satisfying the inequality of the constraint. Thus, we are left with a multiple reachability problem on~$M'$: we look for a strategy~$\strat'$ that ensures that each of these sets $R_{i}$ is reached with probability $\alpha_{i}$. This query can be answered in time polynomial in $\vert M'\vert$ but exponential in the number of sets $R_{i}$, i.e., in $q$ (Theorem~\ref{thm:absorbing-reachsafe}). 

We prove the correctness of our algorithm in the next lemma.

\begin{lemma}
\label{lem:sp_alg}
The percentile problem for the shortest path can be solved in time polynomial in the size of the MDP and the thresholds values, and exponential in the number of dimensions of the weight function and the number of constraints of the problem.
\end{lemma}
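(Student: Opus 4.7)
The plan is to formalize the algorithm sketched in the overview, then verify correctness and complexity. First I would construct the annotated MDP $M'$: its state space is $S \times \{0,1,\ldots,v_{\sf max}+1\}^d$, where the second component tracks the running sum in each dimension, saturated at $v_{\sf max}+1$. Formally, from $(s,\vec c)$, action $a$ leads to $(s',\vec c\,')$ with probability $\delta(s,a,s')$, where $c'_l = \min\{c_l + w_l(a), v_{\sf max}+1\}$; once a target $T_i$ is entered I stop updating its dimension (or equivalently record this by adding a flag, which only blows up the state space by a factor $2^q$). The initial state is $(\initState, \vec 0)$. The saturation is harmless because once the $l_i$-th sum exceeds $v_i$ before $T_i$ is visited, the corresponding percentile constraint is already irrevocably violated on that run (weights are non-negative), so any further evolution of that coordinate is immaterial.

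Next I would bound the size: $|M'| = O(|M|\cdot (v_{\sf max}+2)^d)$, polynomial in $|M|$ and $v_{\sf max}$ and exponential in $d$. For each constraint $i$ I define
\[
R_i \;=\; \bigl\{ (s,\vec c)\in S\times\{0,\ldots,v_{\sf max}+1\}^d \;\bigm|\; s \in T_i \text{ and } c_{l_i} \leq v_i \bigr\}.
\]
The key correctness claim is that strategies on $M$ and $M'$ are in natural correspondence (a strategy on $M'$ projects to one on $M$ by forgetting the counter, and a strategy on $M$ lifts by tracking the sums in its memory), and under this correspondence $\pr^{\sigma}_{M,\initState}[\truncatedSum{T_i}_{l_i}\leq v_i] = \pr^{\sigma'}_{M',(\initState,\vec 0)}[\Diamond R_i]$ for every $i$. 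The forward direction uses that any run visiting $T_i$ with sum $\leq v_i$ projects to a history reaching $R_i$ before saturation triggers; the backward direction uses that any run reaching $R_i$ witnesses a visit to $T_i$ with the true (unsaturated) $l_i$-sum at most $v_i$, since saturation in coordinate $l_i$ would have forced $c_{l_i}=v_{\sf max}+1>v_i$.

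The problem on $M'$ then becomes: decide whether there exists $\sigma'$ with $\bigwedge_{i=1}^q \pr^{\sigma'}_{M',(\initState,\vec 0)}[\Diamond R_i] \geq \alpha_i$. This is an instance of multiple reachability with arbitrary (not necessarily absorbing) targets, which by Theorem~\ref{thm:absorbing-reachsafe} is solvable in time polynomial in $|M'|$ and exponential in $q$. Composing: the overall running time is polynomial in $|M|$ and in $v_{\sf max}$, and exponential in $d$ and~$q$, as claimed.

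The main obstacle I expect is making the correspondence between $M$-runs and $M'$-runs watertight with multiple, possibly overlapping target sets: since a run that has already crossed $T_i$ may continue and still need to satisfy other constraints, I cannot simply absorb $R_i$. Adding per-constraint flags (``have I visited $T_i$ yet?'') to the state space resolves this at the cost of an extra $2^q$ blow-up, which is absorbed in the stated complexity; alternatively, I would argue directly that $\truncatedSum{T_i}_{l_i}$ depends only on the prefix up to the first visit of $T_i$, so the first-visit semantics of $\Diamond R_i$ captures it exactly. Either route establishes the equivalence and yields the claimed complexity bound.
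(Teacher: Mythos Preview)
Your proposal is correct and follows essentially the same approach as the paper: build the product MDP $S \times \{0,\ldots,v_{\sf max}+1\}^d$ with saturated running sums, define $R_i$ as the states in $T_i$ whose $l_i$-coordinate is at most $v_i$, and reduce to multiple reachability on this product (polynomial in the product, exponential in~$q$). The paper phrases the construction as an ``unfolding'' followed by merging of nodes with identical labels, but the resulting object and size bound $O(|S|\cdot(v_{\sf max}+2)^d)$ are the same as yours; your direct product formulation is arguably cleaner. The paper does not freeze coordinates upon visiting $T_i$ nor add per-constraint flags: it relies implicitly on the first-visit argument you mention (monotonicity of sums ensures that $\Diamond R_i$ holds iff the \emph{first} visit to $T_i$ has $l_i$-sum at most $v_i$), so your worry about overlapping targets is already handled by the simpler route you identified.
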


\begin{proof}
Let $\markovProcess = (\states, A, \delta, \weight)$ be the considered MDP,  with $w\colon A \rightarrow \nat^{d}$ its $\dimension$-dimensional non-negative weight function, $\initState \in \states$ the initial state. We consider a $\queries$-constraint query: we are looking for a strategy $\strat \in \strats$ such that
$\query \coloneqq \bigwedge_{i = 1}^{q}\; \pr_{M,\initState}^\strat\big[\truncatedSum{\truncatedTarget_{i}}_{l_{i}} \leq v_i\big] \geq
	\alpha_i$
for the given thresholds $v_i \in \nat$, $\alpha_i \in [0, 1] \cap \rat$ and target sets $\truncatedTarget_{i} \subseteq \states$. The algorithm is as follows.

Let $v_{{\sf max}}$ be the maximum of the thresholds $v_i$, $i \in \{1, \ldots{}, \queries\}$. Observe that given any prefix of a run, extending it can never decrease the sum of weights (as all weights are non-negative) and that any run for which the truncated sum exceeds~$v_{{\sf max}}$ in all dimensions is not interesting for the controller.

Based on those observations, we unfold the MDP~$\markovProcess$, creating a tree-like structure in the nodes of which we integrate the current sum of weights. That is, nodes are labeled by elements of $\states \times \nat^{\dimension}$. We stop a branch as soon as the sum reaches $v_{{\sf max}}+1$ in all dimensions (we do not care about what happens after the sum hits this value as it is now a bad outcome for all the percentile constraints). Now, this unfolding is not exactly a tree because we allow actions of weight zero in the original MDPs. Hence we may have to introduce cycles in the unfolding: whenever a branch visits a node with a label identical to one of its ancestors, we stop this branch and introduce a cycle to the corresponding ancestor. Those two cutting criteria guarantee an unfolding which is finite and of maximum height $h = \mathcal{O}(\vert\states\vert \cdot (v_{{\sf max}}+2) \cdot \dimension)$. That is because every cycle (in the original MDP) that does not result in a cycle in the unfolding has to increase at least one dimension, by at least one, and has at most length $\vert \states \vert$.

Now consider the overall size of this unfolding. Recall that we want to build an unfolding which is at most exponential. If no special care is taken, the size of the unfolding could be as high as $\mathcal{O}(b^{h})$, where $b$ denotes the branching degree of $\markovProcess$, defined as
\begin{equation*}
b = \max_{s \in \states} \big\vert \{(a, s') \mid a \in A(s), s' \in \states, \delta(s,a,s') > 0\} \big\vert.
\end{equation*}
In particular, the overall size could be exponential in $v_{{\sf max}}$, that is, doubly-exponential in its encoding. To avoid that, we reduce the size of the unfolding by merging equivalent nodes.

What are equivalent nodes? First, we declare two nodes to be equivalent if they relate to the same state and describe identical sums on all dimensions. Second, observe that for any node of the unfolding, the sum on any dimension can theoretically grow up to $h\cdot W$, with $W$ the largest weight appearing on any action of $\markovProcess$. That is, it can grow larger than $(v_{{\sf max}}+1)$ as we stop only when \textit{all} dimensions are larger than this bound. Nonetheless, w.r.t.~satisfaction of the percentile query, we do not need to recall exactly what is the value reached after exceeding $v_{{\sf max}}$ as in any case, such a sum in a given dimension is not acceptable for any related constraint. Hence, we can also merge nodes by replacing any label larger than $(v_{{\sf max}}+1)$ by label $(v_{{\sf max}}+1)$.

By merging nodes equivalent according to this definition, we ensure that the overall size of the unfolding is at most $u = \mathcal{O}(\vert \states \vert \cdot (v_{{\sf max}}+2)^{\dimension})$. Indeed, the possible values for sums on any dimension in the unfolding run from $0$ to $(v_{{\sf max}}+1)$. Observe that this overall size $u$ is, as desired, polynomial in the number of states~$\vert \states \vert$ and in the largest threshold $v_{{\sf max}}$, and exponential in the number of dimensions $d$.

Interestingly, this merging process can be executed on the fly while building the unfolding hence does not hinder the total execution time of the algorithm (i.e., one does not have to fully build the doubly-exponential unfolding to construct the single-exponential merged one).

Now notice that this unfolding is itself an MDP, denoted~$\markovProcess'$. For each constraint $i \in \{1, \ldots{}, q\}$, we can compute the set $R_{i}$ of nodes that are labeled by a state in the corresponding target set $T_{i}$ and have a label less than or equal to $v_{i}$ on the corresponding sum dimension $l_{i}$. Hence, such a set $R_{i}$ actually captures all branches satisfying the inequality of constraint~$i$ (a branch is captured if it possesses a node of $R_{i}$). Observe that we only consider dimensions related to constraint $i$ when computing the set~$R_i$ (e.g., it is not a problem to exceed $v_{{\sf max}}$ in other dimensions). This computation takes time $\mathcal{O}(u \cdot \queries)$ in the worst case.

Now we are left with a multiple reachability problem on $M'$: we have to decide the existence of a strategy~$\strat'$ satisfying the query
\begin{equation*}
\query' \coloneqq \bigwedge_{i = 1}^{q}\; \pr_{M',\initState'}^{\strat'}\big[\diamondsuit R_{i}\big] \geq
	\alpha_i.
\end{equation*}
If such a strategy $\strat'$ exists in $\markovProcess'$, it is easy to see that the equivalent strategy~$\strat$ in the original MDP $\markovProcess$ satisfies the shortest path percentile query~$\query$. Indeed, the probability of reaching set $R_{i}$ in $M'$ following strategy $\strat'$ is exactly the probability of satisfying constraint $i$ in $M$ following the equivalent strategy $\strat$. On the contrary, if no strategy satisfies the multiple reachability query~$\query'$, it implies that the original percentile query~$\query$ cannot be satisfied either.

Solving this multiple reachability query~$\query'$ can be done in polynomial time in the size of the unfolding MDP $\markovProcess'$ but exponential in the number of sets $R_{i}$, i.e., in the number of constraints (Theorem~\ref{thm:absorbing-reachsafe}). Hence, the overall time required by the algorithm is polynomial in $\vert\states\vert$ and in the maximum threshold $v_{{\sf max}}$, and exponential in the number of dimensions~$\dimension$ and in the number of constraints $\queries$.
\end{proof}

It is worthwhile to mention that the exponential dependency on the number of constraints can be lifted when they all consider the same target set.

\begin{remark}
\textit{Percentile problems with unique target are solvable in time polynomial in the number of constraints but still exponential in the number of dimensions.}
\end{remark}

\begin{proof}
Consider the unfolding algorithm described in the proof of Lemma~\ref{lem:sp_alg}. Assume that all constraints of the percentile query relate to the same target set $\truncatedTarget \subseteq \states$. In that case, all branches can be stopped as soon as they reach a state of $\truncatedTarget$. Thus, when computing the sets $R_{i}$ for the multiple reachability problem on the unfolding $\markovProcess'$, all nodes that belong to these sets are actually leaves of the unfolding. Hence they are absorbing states of $\markovProcess'$. From Theorem~\ref{thm:absorbing-reachsafe}, it follows that the multiple reachability problem can be solved in polynomial time, which eliminates the exponential dependency on the number of constraints for solving the shortest path percentile problem with a single target set.
\end{proof}

For single-dimensional queries with a unique target set (but still potentially multi-constraint), our algorithm remains pseudo-polynomial as it requires polynomial time in the thresholds values (i.e., exponential in their encoding).
\begin{corollary}
\label{cor:sp_alg}
The single-dimensional percentile problem with a unique target set can be solved in pseudo-polynomial time.
\end{corollary}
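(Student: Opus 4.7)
The plan is to simply combine the two simplifications already identified: the single-dimensional assumption eliminates the exponential dependency on $d$ in the unfolding size, while the unique target assumption (via the preceding remark) eliminates the exponential dependency on $q$ in the multiple reachability step. Together, these reduce the complexity of the algorithm of Lemma~\ref{lem:sp_alg} to one that is polynomial in every parameter except the encoding of the thresholds $v_i$, which is the definition of pseudo-polynomial.

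More precisely, I would first re-invoke the construction of~$M'$ from the proof of Lemma~\ref{lem:sp_alg}, specialized to $d=1$. The node labels live in $S \times \{0,\ldots,v_{\sf max}+1\}$, so the size of $M'$ is $\mathcal{O}(|S|\cdot (v_{\sf max}+2))$, i.e., polynomial in $|S|$ and $v_{\sf max}$. Next, since all $q$ constraints share the same target set $T$, the branch of the unfolding that reaches any state of~$T$ can be truncated there, so every node in any $R_i$ is a leaf of $M'$; by the standard trick of redirecting these leaves to absorbing self-loops, the induced multiple reachability instance on $M'$ has \emph{absorbing} target sets.

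At this point I would apply Theorem~\ref{thm:absorbing-reachsafe}, which solves multiple reachability with absorbing targets in time polynomial in the MDP and in the query. Plugging in $|M'|=\mathrm{poly}(|S|,v_{\sf max})$ and query size polynomial in $q$ and $\log v_{\sf max}$, the resulting running time is polynomial in $|S|$, $q$ and $v_{\sf max}$, and therefore pseudo-polynomial in the problem size (which uses the binary encoding of $v_{\sf max}$). The correctness of the reduction—that a solution of the multiple reachability instance on $M'$ yields a strategy satisfying the percentile query in $M$, and vice-versa—is inherited verbatim from the proof of Lemma~\ref{lem:sp_alg}, and does not have to be redone.

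There is no real obstacle here: both ingredients are already established earlier in the section, and the corollary just records the complexity one obtains by intersecting the two hypotheses. The only point requiring a small verification is that combining $d=1$ and a unique target simultaneously preserves both simplifications, which is immediate since the two arguments act on orthogonal parts of the algorithm (unfolding size versus reachability solver).
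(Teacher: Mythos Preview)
Your proposal is correct and follows essentially the same approach as the paper: the corollary is justified in the paper by the single sentence preceding it, which simply combines the single-dimensional assumption (removing the exponential in~$d$ from the unfolding) with the unique-target remark (removing the exponential in~$q$ from the reachability step). Your write-up elaborates these two ingredients more explicitly than the paper does, but the argument is identical.
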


\noindent\textbf{Lower bound.}
By equivalence with cost problems for atomic cost formulae, it follows
from~\cite[Theorem 7]{HaaseK14} that no truly-polynomial-time algorithm
exists for the single-constraint percentile problem unless \PTIME~=~\PSPACE.

\begin{lemma}
The single-constraint percentile problem for the shortest path is \PSPACE-hard.
\end{lemma}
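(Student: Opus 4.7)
The plan is to exploit the direct syntactic correspondence between single-constraint shortest-path percentile queries and cost problems with an atomic cost formula in the sense of Haase and Kiefer \cite{HaaseK14}, and then invoke their \PSPACE-hardness result off-the-shelf. No new combinatorial construction is needed; the work consists in verifying that the two frameworks are the same object when restricted to a single atomic inequality.

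First I would fix an instance of the Haase–Kiefer cost problem: a single-dimensional MDP $M$ with non-negative weights, a target set $T \subseteq S$, an atomic cost formula $\varphi \equiv (\truncatedSum{T} \leq v)$, and a probability threshold $\alpha \in [0,1] \cap \rat$. The question whether there exists a strategy $\sigma$ with $\pr_{M,s_0}^\sigma[\truncatedSum{T} \models \varphi] \geq \alpha$ then literally coincides with the single-constraint percentile query $\pr_{M,s_0}^\sigma[\truncatedSum{T} \leq v] \geq \alpha$, which is an instance of the single-constraint percentile problem for the shortest path (with $d = 1$, $q = 1$, $l_1 = 1$). The reduction is thus the identity on the input, and in particular runs in logarithmic space.

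The only subtlety to double-check is that the hardness result of \cite[Theorem~7]{HaaseK14} already applies to the \emph{atomic} case, i.e.\ to a single inequality $\truncatedSum{T} \leq v$ with $\leq$ (not only to richer Boolean combinations); the authors of the excerpt have already observed this, and the inter-reducibility between cost problems for atomic formulae and quantile queries given by \cite[Proposition~2]{HaaseK14} makes it explicit. Combining this with the identification above yields a logspace reduction from a \PSPACE-hard problem to the single-constraint percentile problem for the shortest path, which concludes the proof. The main (and only) obstacle is the bookkeeping check that our conventions — non-negative integer weights, reachability of $T$ handled through the value $\infty$ when $T$ is not visited, inequality direction $\leq$ — match those of \cite{HaaseK14}, which is immediate from the definitions recalled in the comparison paragraph preceding the statement.
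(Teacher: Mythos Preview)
Your proposal is correct and matches the paper's approach exactly: the paper does not give an independent construction but simply invokes the equivalence of single-constraint shortest-path percentile queries with cost problems for atomic formulae and cites \cite[Theorem~7]{HaaseK14} for \PSPACE-hardness. Your write-up is a faithful (and slightly more detailed) expansion of that one-line argument.
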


\smallskip\noindent\textbf{Memory.} 
We now formally prove the need for (and sufficiency of) exponential memory in shortest path percentile queries.

\begin{lemma}
\label{lem:spmemory}
Exponential-memory strategies are both sufficient and, in general, necessary to satisfy percentile queries for the shortest path.
\end{lemma}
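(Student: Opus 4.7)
The plan is to prove the two directions separately, since sufficiency is essentially a restatement of the algorithm of Lemma~\ref{lem:sp_alg} while necessity requires exhibiting a family of instances that defeats any sub-exponential strategy. The two directions can be attacked independently and, conveniently, both rely on tools already developed earlier in the excerpt, so almost no new machinery needs to be built.

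For sufficiency, I would unpack the construction underlying Lemma~\ref{lem:sp_alg}. That construction produces an unfolded MDP $M'$ whose state space lives inside $S \times \{0, \ldots, v_{\sf max}+1\}^d$, and then reduces the percentile problem to a multiple reachability instance on $M'$ with \emph{absorbing} target sets $R_i$. By Theorem~\ref{thm:absorbing-reachsafe}, that instance admits a memoryless solution on $M'$. Pulling back such a memoryless strategy to $M$ yields a finite-memory strategy whose memory consists exactly of the current capped sum vector $\vec{c} \in \{0,\ldots,v_{\sf max}+1\}^d$, together with a bit tracking the cap. The size of this memory is $\mathcal{O}(|S|\cdot(v_{\sf max}+2)^d)$, which is single-exponential in the query size, since $v_{\sf max}$ is given in binary and $d$ appears in the exponent.

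For necessity, I would reduce from almost-sure multiple reachability on the family of MDPs of Lemma~\ref{lem:multiReach_expMemoryLB}, which already requires $2^{q/2}$ memory states. Given such an instance $(M, s_0, T_1, \ldots, T_q)$, build the weighted MDP $M^{sp}$ with the same structure, $q$ dimensions all carrying weight $0$ on every action, shortest-path target sets $T_i^{sp}=T_i$, thresholds $v_i=0$, and probability thresholds $\alpha_i=1$. For any run $\rho$, $\truncatedSum{T_i}_i(\rho)=0$ if $\rho$ visits $T_i$ and $\infty$ otherwise, hence $\pr^\sigma[\truncatedSum{T_i}_i \leq 0] = \pr^\sigma[\Diamond T_i]$. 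Strategies for the almost-sure multi-reachability instance and for the resulting percentile query therefore coincide, so the exponential memory lower bound from Lemma~\ref{lem:multiReach_expMemoryLB} transfers verbatim.

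I do not expect a serious obstacle here: the only subtle point is being careful about what "query size" means in the upper bound, because the exponential blow-up in sufficiency comes from two different sources (binary-encoded $v_{\sf max}$ and the multiplicative $d$), whereas in the lower bound the exponential blow-up comes from the number of constraints $q$. Both quantities are captured by $|\query|$, so both bounds are in terms of the same parameter and match. The rest is a direct appeal to Lemmas~\ref{lem:sp_alg} and~\ref{lem:multiReach_expMemoryLB} together with Theorem~\ref{thm:absorbing-reachsafe}.
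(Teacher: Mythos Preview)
Your approach matches the paper's: sufficiency via Lemma~\ref{lem:sp_alg} plus Theorem~\ref{thm:absorbing-reachsafe}, and necessity via a zero-weight reduction from the multiple reachability family of Lemma~\ref{lem:multiReach_expMemoryLB}.

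One correction in the sufficiency direction: the target sets $R_i$ in the unfolded MDP $M'$ are \emph{not} absorbing in general. A node of $M'$ belongs to $R_i$ as soon as its state component lies in $T_i$ and its sum on dimension $l_i$ is at most $v_i$; nothing stops the run from continuing past such a node (the branches are only cut when \emph{all} dimensions exceed $v_{\sf max}$). The absorbing case arises only when all constraints share a single target set, as discussed in the Remark following Lemma~\ref{lem:sp_alg}. Consequently you cannot invoke the memoryless clause of Theorem~\ref{thm:absorbing-reachsafe}; you must use its general clause, which yields strategies with memory exponential in $q$ on top of the exponential size of $M'$. The product is still single-exponential, so your conclusion survives, but the justification needs this fix.

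For necessity, the paper carries out the same reduction with a \emph{single} zero-weight dimension and $q$ distinct target sets $T_i$, rather than $q$ dimensions; your variant is equally valid but slightly less economical.
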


\begin{proof}
First, the algorithm given in Lemma~\ref{lem:sp_alg} solves the percentile problem by answering a multiple reachability problem over an unfolded MDP of exponential size. As stated in Theorem~\ref{thm:absorbing-reachsafe}, memory of size polynomial in the MDP (here, the unfolded one) and exponential in the number of constraints (which is untouched by our algorithm) is sufficient to satisfy such queries. Hence, it follows that exponential-memory strategies suffice for shortest path percentile queries.

Second, let us show that multiple reachability problems over an MDP $M$ can be reduced to shortest path percentile problems over the very same MDP, enriched with a trivial weight function. Consider an unweighted MDP $\markovProcess = (\states, A, \delta)$ and a multiple reachability query for sets $T_{i} \subseteq \states$ and thresholds $\alpha_{i} \in \left[ 0, 1\right] \cap \rat$, with $i \in \{1, \ldots{}, \queries\}$. Let $\markovProcess' = (\states, A, \delta, \weight)$ be a single-dimensional weighted version of the MDP~$\markovProcess$, where all actions are assigned weight zero. Then we trivially have that a strategy~$\strat$ satisfies the multiple reachability query on $\markovProcess$ if and only if it satisfies the percentile query $\bigwedge_{i = 1}^{q}\; \pr_{M',\initState}^\strat\big[\truncatedSum{\truncatedTarget_{i}} \leq 0\big] \geq	\alpha_i$ on $\markovProcess'$. Indeed, runs that are bad for this percentile query are exactly the ones that are assigned truncated sum $\infty$ because they do not reach the considered target sets. This concludes the reduction.

Finally, we know by Lemma~\ref{lem:multiReach_expMemoryLB} that exponential memory is needed in general for multiple reachability queries. This lower bound thus straightforwardly carries over to shortest path percentile queries.
\end{proof}

\section{Discounted Sum}
\label{sec:ds}

The \textit{discounted sum} accumulates weights using a discount factor to model that short-term rewards or costs are more important than long-term ones. It is well-studied in automata~\cite{DBLP:journals/corr/BokerH14} and MDPs~\cite{Puterman-wiley94,CMH-stacs06,DBLP:conf/lpar/ChatterjeeFW13}. We consider queries of the form $\query \coloneqq \bigwedge_{i = 1}^{q}\; \pr_{M,\initState}^\strat\big[\discSum{\discount_{i}}_{l_{i}} \geq v_i\big] \geq
	\alpha_i$, for discount factors $\discount_{i} \in \left] 0, 1\right[ \cap \rat$ and the usual thresholds. That is, we study multi-dimensional MDPs and possibly distinct discount factors for each constraint.
	
Unfortunately, our setting encompasses a much simpler question which is still not known to be decidable. We discuss this question and its reduction to our problem in Sect.~\ref{subsec:precise}. We also argue that solving this problem would require an important breakthrough. Then, in Sect.~\ref{subsec:approx}, we establish a conservative algorithm that, in some sense, can approximate the answer to the percentile problem.
	
\subsection{Precise Discounted Sum is Hard}
\label{subsec:precise}

Consider the \textit{precise discounted sum problem}: given a rational $t$, and a rational discount factor $\discount \in \left] 0, 1\right[ $, does there exist an infinite binary sequence $\tau = \tau_{1}\tau_{2}\tau_{3}\ldots{} \in \{0, 1\}^{\omega}$ such that $\sum_{j = 1}^{\infty} \discount^{j} \cdot \tau_{j} = t$? In~\cite{bokerTDS}, this problem is related to several long-standing open questions, such as decidability of the \textit{universality problem for discounted-sum automata}~\cite{DBLP:journals/corr/BokerH14}. A slight generalization to paths in graphs is also mentioned by Chatterjee et al.~as a key open problem in~\cite{DBLP:conf/lpar/ChatterjeeFW13}.

\begin{lemma}
\label{lem:ds_precise}
The precise discounted sum problem can be reduced to an almost-sure percentile problem over a two-dimensional MDP with only one state.
\end{lemma}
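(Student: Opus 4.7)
The plan is to construct a trivial MDP that faithfully encodes every binary sequence as a run. Let $\markovProcess$ have a single state $s$, two self-loop actions $a_0, a_1$, and two-dimensional weight function defined by $w(a_0) = (0,0)$ and $w(a_1) = (1,-1)$. Every run $\rho = s\,a_{i_1}s\,a_{i_2}s\cdots$ then corresponds bijectively to a binary sequence $\tau \in \{0,1\}^\omega$ via $\tau_j = i_j$, and by construction $\discSum{\lambda}_1(\rho) = \sum_{j=1}^\infty \lambda^j \tau_j$, while $\discSum{\lambda}_2(\rho) = -\discSum{\lambda}_1(\rho)$.

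Taking the same discount factor $\lambda$ on both dimensions, the percentile query I would use is
\[
\query \coloneqq \pr_{\markovProcess,s}^\strat\bigl[\discSum{\lambda}_1 \geq t\bigr] \geq 1 \;\wedge\; \pr_{\markovProcess,s}^\strat\bigl[\discSum{\lambda}_2 \geq -t\bigr] \geq 1.
\]
The intersection of these two probability-one events is exactly $\{\rho : \discSum{\lambda}_1(\rho) = t\}$, so a strategy $\strat$ satisfies $\query$ if and only if almost every run it produces has dimension-1 discounted sum precisely $t$.

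The forward direction is immediate: given any witness sequence $\tau$ to the precise discounted sum instance, the pure deterministic strategy that plays $a_{\tau_j}$ at the $j$-th step yields a single run whose dimension-1 discounted sum equals $t$, and thus satisfies $\query$. For the converse, suppose $\strat$ satisfies $\query$; then $E = \{\rho : \discSum{\lambda}_1(\rho) = t\}$ has probability one in the Markov chain $\markovProcess_s^\strat$. Since the set of runs of $\markovProcess_s^\strat$ is nonempty (the MDP is deadlock-free and $\strat$ always prescribes some action), and a probability-one event over a nonempty sample space cannot be empty, there exists at least one run $\rho \in E$, and its induced binary sequence $\tau$ provides the desired witness.

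The one conceptual subtlety to address is ruling out that a randomized $\strat$ might satisfy $\query$ without any individually realizable run hitting $t$ exactly. This is precisely what the almost-sure extraction excludes: because the constraints are imposed on individual runs (not on expectations), any witnessing $\strat$ must actually place a run with discounted sum $t$ in the support of $\markovProcess_s^\strat$, from which the deterministic sequence $\tau$ is read off. I do not expect further obstacles, as the MDP, being a one-state self-loop gadget, introduces no stochastic behavior beyond that of the strategy itself.
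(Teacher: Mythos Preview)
Your proposal is correct and follows essentially the same approach as the paper: the same one-state, two-action MDP with weights $(0,0)$ and $(1,-1)$, the same almost-sure two-constraint query using thresholds $t$ and $-t$, and the same extraction argument that any (possibly randomized) satisfying strategy must place some run with discounted sum exactly $t$ in its support. Your treatment of the randomized-strategy subtlety is slightly more explicit than the paper's, but the argument is identical in substance.
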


\begin{proof}
Assume we have a precise discounted sum problem with discount factor $\discount \in \left] 0, 1\right[ \cap \rat$ and target value $t \in \rat$. Let $M$ be an MDP with only one state $s$ and two actions, $a$ and $b$ (that both cycle on $s$ with probability one, obviously). Consider the two-dimensional weight function $w\colon A \rightarrow \integ^{2}$ such that $w(a) = (0, 0)$ and $w(b) = (1, -1)$. The role of action $a$ (resp. $b$) is to represent the choice of $0$ (resp. $1$) in the binary sequence.

We define the percentile problem asking for the existence of a strategy $\strat \in \strats$ such that
\begin{equation*}
\pr_{M,s}^\strat \big[\discSum{\discount}_{1} \geq t\big] = 1\quad \wedge \quad \pr_{M,s}^\strat \big[\discSum{\discount}_{2} \geq -t\big] = 1.
\end{equation*}
By definition of the weight function, the second term of the conjunction is equivalent to $\pr_{M,s}^\strat \big[\discSum{\discount}_{1} \leq t\big] = 1$. Hence, if a satisfying strategy $\strat$ exists, it does satisfy $\pr_{M,s}^\strat \big[\discSum{\discount}_{1} = t\big] = 1$. We claim that the answer to the precise discounted sum problem is \textsf{Yes} if and only if the answer to the percentile problem is \textsf{Yes}.

First, assume a satisfying strategy $\strat$ exists. In general, our percentile problems do not require strategies to be pure. However, even if $\strat$ is randomized, we can extract a run $\rho$ induced by this strategy and such that $\discSum{\discount}_{1}(\rho) = t$ (such a run exists otherwise the strategy would not satisfy the percentile query). This run can be seen as a sequence of actions $\rho_{A} \in \{a, b\}^{\omega}$, which we translate in a corresponding sequence $\tau \in \{0, 1\}^{\omega}$ satisfying the precise discounted sum problem.

Conversely, assume there exists a sequence $\tau \in \{0, 1\}^{\omega}$ satisfying the precise discounted sum problem. Then this sequence defines a (possibly infinite-memory) pure strategy $\strat$ that ensures a discounted sum equal to $t$, and thus the percentile query is satisfied.
\end{proof}

This suggests that answering percentile problems for the discounted sum would require an important breakthrough.

\subsection{Approximation Algorithm}
\label{subsec:approx}

\smallskip\noindent\textbf{Approaching an answer.} As shown in Sect.~\ref{subsec:precise}, an exact algorithm is currently out of reach. Fortunately, we are still able to establish an algorithm that can ``approximate'' a solution. Since we consider decision problems, the notion of approximation should not be understood \textit{sensu stricto}. We will formalize the output of the algorithm in the following but we first give an intuitive sketch.

\smallskip\noindent\textbf{The $\varepsilon$-gap problem.} Our algorithm takes as input a percentile query and an arbitrarily small \textit{precision factor} $\varepsilon > 0$ and has three possible outputs: \textsf{Yes}, \textsf{No} and \textsf{Unknown}. If it answers \textsf{Yes}, then a satisfying strategy exists and can be synthesized. If it answers \textsf{No}, then no such strategy exists. Finally, the algorithm may output \textsf{Unknown} for a specified ``zone'' close to the threshold values involved in the problem and of width which depends on $\varepsilon$.
It is possible to incrementally reduce the uncertainty zone, but it cannot be eliminated as the case~$\varepsilon=0$ would answer the precise discounted sum problem, which is not known to be decidable.

We actually solve an \emph{$\varepsilon$-gap problem}, a particular case of \emph{promise problems}~\cite{DBLP:conf/birthday/Goldreich06a},
where the set of inputs is partitioned in three subsets: yes-inputs, no-inputs and the rest of them. The promise problem then asks to answer \textsf{Yes} for all yes-inputs and \textsf{No} for all no-inputs, while the answer may be arbitrary for the remaining inputs. 
In our setting, the set of inputs for which no guarantee is given can be taken arbitrarily small, parametrized by value $\varepsilon > 0$: this is an $\varepsilon$-gap problem. This notion is later formalized in Theorem~\ref{thm:ds_gap}.

\smallskip\noindent\textbf{Related work: single-constraint case.} There are papers considering models related to \textit{single-constraint} percentile queries. Consider a single-dimensional MDP and a single-constraint query, with thresholds $v$ and $\alpha$. The \textit{threshold problem} fixes $v$ and maximizes $\alpha$~\cite{White1993634,WL99}. The \textit{value-at-risk problem} fixes~$\alpha$ and maximizes $v$~\cite{DBLP:conf/fsttcs/BrazdilCFNS13}. This is similar to \textit{quantiles} in the shortest path setting~\cite{DBLP:conf/fossacs/UmmelsB13}.

Paper~\cite{DBLP:conf/fsttcs/BrazdilCFNS13} is the first to provide an exponential-time algorithm to approximate the optimal value $v^{\ast}$ under a fixed $\alpha$ in the general setting. The authors also rely on approximation. While we do not consider optimization, we do extend the setting to \textit{multi-constraint}, \textit{multi-dimensional}, \textit{multi-discount} problems, and we are able to remain in the same complexity class, namely \EXPTIME.

\smallskip\noindent\textbf{Overview.} Our main contributions for the discounted sum are summarized in Theorem~\ref{thm:ds_overview}. In the following, we provide a thorough discussion for each of them, and prove several intermediate results of interest.
 
\begin{theorem}
\label{thm:ds_overview}
The $\varepsilon$-gap percentile problem for the discounted sum can be solved in time pseudo-polynomial in the model size and the precision factor, and exponential in the query size: polynomial in the number of states, the weights, the discount factors and the precision factor, and exponential in the number of constraints. It is \PSPACE-hard for two-dimensional MDPs and already \NPTIME-hard for single-constraint queries. Exponential-memory strategies are both sufficient and in general necessary to satisfy $\varepsilon$-gap percentile queries.
\end{theorem}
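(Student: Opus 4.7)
The plan is to adapt the unfolding construction used for the shortest path in Sect.~\ref{sec:sp} to the infinite discounted sum by exploiting its geometric convergence. Let $W$ be the largest absolute weight of $M$ and let $\lambda_{\max} = \max_{i} \lambda_{i} < 1$. Given a precision $\varepsilon > 0$, I would first choose a truncation depth $N$ so that the discounted tail of any infinite run contributes at most $\varepsilon/2$ in absolute value on every dimension, i.e.~$W\lambda_{\max}^{N+1}/(1-\lambda_{\max}) \leq \varepsilon/2$. This gives $N = \mathcal{O}\bigl(\log(W/(\varepsilon(1-\lambda_{\max})))/(1-\lambda_{\max})\bigr)$, i.e.~polynomial in the encodings of $W$, $\lambda_{\max}$, and $\varepsilon$.

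Next, I would build a finite MDP $M'$ by unfolding $M$ up to depth $N$ while tracking, in each node, the partial discounted sums in all $d$ dimensions. To keep $M'$ single-exponential, I would discretize each partial sum to buckets of width $\delta = \varepsilon/(4N)$ on the $[-W/(1-\lambda_{\max}),\,W/(1-\lambda_{\max})]$ range, and merge nodes with identical (state, discretized-sum-vector) labels on the fly, exactly as in the proof of Lemma~\ref{lem:sp_alg}. The cumulative discretization error over $N$ steps stays below $\varepsilon/2$, so the total error from truncation plus discretization is below $\varepsilon$. The resulting $|M'|$ is polynomial in $|S|$, $W$, $1/(1-\lambda_{\max})$, $1/\varepsilon$, and exponential only in $d$. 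For each constraint~$i$, I would mark as target set $R_{i}$ the leaves of $M'$ whose discretized partial sum on dimension $l_{i}$ is at least $v_{i} - \varepsilon/2$, and then answer the multiple reachability query $\bigwedge_{i} \pr_{M',\initState'}^{\sigma'}[\Diamond R_{i}] \geq \alpha_{i}$ via Theorem~\ref{thm:absorbing-reachsafe}, in time polynomial in $|M'|$ and exponential in $q$. By construction, the algorithm answers \emph{Yes} on every yes-instance (true value $\geq v_{i}$) and \emph{No} on every no-instance (true value $< v_{i} - \varepsilon$), which establishes the $\varepsilon$-gap upper bound.

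For the lower bounds, I would obtain PSPACE-hardness in two dimensions by reducing from the almost-sure multiple reachability problem (PSPACE-hard by Theorem~\ref{thm:asreach}): use one dimension with $0/1$ weights and an appropriate discount factor, as in the proof of Theorem~\ref{thm:quant_reg_multi_dim_pspace}, to encode reachability of each target set via a percentile constraint, and use the second dimension together with a sufficiently small $\varepsilon$ to rule out the strategies that would exploit the discretization gap. For NP-hardness of single-constraint queries, I would give a standard subset-sum reduction: from a subset-sum instance $\{w_{1},\ldots,w_{n}\}$ with target $t$, construct an acyclic MDP where for each $w_{j}$ the controller chooses to either include or exclude it, with weights tuned so that the discounted sum equals the chosen subset's sum times a fixed factor; the query $\pr[\discSum{\lambda} \geq v] \geq 1$ is then satisfiable iff a valid subset exists. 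The memory bounds follow directly: exponential-memory strategies on $M'$ lift to exponential-memory strategies on $M$ thanks to Theorem~\ref{thm:absorbing-reachsafe}, and the lower bound is inherited from the reduction of Lemma~\ref{lem:spmemory} (using zero weights, percentile queries on discounted sum subsume almost-sure multiple reachability, which requires exponential memory by Lemma~\ref{lem:multiReach_expMemoryLB}).

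The main obstacle is the error accounting: the depth $N$, the bucket width $\delta$, and the shift by $\varepsilon/2$ in the definition of the targets $R_{i}$ must be chosen coherently so that \emph{no} discretized run strictly inside the gap can be labeled incorrectly, while simultaneously keeping $|M'|$ single-exponential in the query. A secondary difficulty is to maintain the announced complexity when the discount factors $\lambda_{i}$ are \emph{distinct} across constraints: one must unfold using a single depth $N$ tailored to $\lambda_{\max}$ but track $d$ independent partial sums, each evolving with its own $\lambda_{i}$, which is why $d$ appears in the exponent of $|M'|$ but the overall runtime still remains within the \EXPTIME{} bound claimed in Theorem~\ref{thm:ds_overview}.
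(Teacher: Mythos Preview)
Your algorithmic upper bound is essentially the paper's construction: truncate at a pseudo-polynomial depth, round the partial sums, merge labels, and solve multiple reachability on the leaves. One technical slip: you track $d$ partial sums, one per weight dimension, but the problem allows two constraints on the \emph{same} dimension $l$ with \emph{different} discount factors $\lambda_i\neq\lambda_j$, so you must track $q$ running sums (one per constraint), as the paper does. This does not change the complexity class.

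Your lower-bound arguments, however, all contain genuine gaps.

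\textbf{\PSPACE-hardness.} You propose to reduce from almost-sure multiple reachability, but that problem is \PSPACE-hard only when the number of target sets $q$ is unbounded; encoding $q$ independent reachability targets into a \emph{two}-dimensional discounted-sum query is exactly the difficulty, and you give no mechanism for it. The paper instead reduces from \emph{subset-sum games}, which are \PSPACE-complete with a single scalar target; one dimension encodes $\sum \geq t$ and the second encodes $\sum \leq t$ (via $-\sum \geq -t$), and the discount is neutralized by scaling weight $j$ by $\lambda^{-j}$.

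\textbf{\NPTIME-hardness.} A single constraint $\pr[\discSum{\lambda}\geq v]\geq 1$ cannot encode subset-sum: in a fully controllable acyclic MDP the controller simply picks every positive element, so the query is trivial to decide. Subset-sum needs an equality, which requires two inequalities and hence two constraints. The paper's reduction is from the \emph{$K$-th largest subset problem} over a \emph{Markov chain} (no controller choices at all): runs are in bijection with subsets, each has probability $2^{-n}$, and the single threshold $\alpha=K/2^n$ counts how many subsets satisfy the sum bound.

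\textbf{Memory lower bound.} Your ``zero weights'' idea works for truncated sum because runs that miss the target get value $\infty$, but for discounted sum every run would have value $0$ and all constraints become trivial. The paper instead builds a $2k$-dimensional weight function on the MDP of Fig.~\ref{fig:multiReach_exp_mem}: visiting $s_{i,L}$ or $s'_{i,L}$ contributes a scaled $+1$ on dimension $i$ (and similarly for the $R$-side on dimension $k+i$), so that $\discSum{\lambda}_l \geq 0$ iff the corresponding target pair was visited, recovering the exponential-memory instance of Lemma~\ref{lem:multiReach_expMemoryLB}.
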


\smallskip\noindent\textbf{Cornerstones of the algorithm.} Our approach is similar to the shortest path: we want to build an unfolding capturing the needed information w.r.t.~the discounted sums, and then reduce the percentile problem to a multiple reachability problem over this unfolding. However, several challenges have to be overcome.

First, we need a \textit{finite} unfolding. This was easy in the shortest path due to non-decreasing sums and corresponding upper bounds. Here, it is not the case as we put no restriction on weights. 
Nonetheless, thanks to the discount factor, weights contribute less and less to the sum along a run. In particular, cutting all runs after a pseudo-polynomial length changes the overall sum by at most~$\varepsilon/2$.

Second, we reduce the overall size of the unfolding.
For the shortest path we took advantage of integer labels to define equivalence. Here, the space of values taken by the discounted sums is too large for a straightforward equivalence.
To reduce it, we introduce a \textit{rounding} scheme of the numbers involved. This idea is inspired by~\cite{DBLP:conf/fsttcs/BrazdilCFNS13}. We bound the error due to cumulated roundings by $\varepsilon/2$.

So, we control the amount of information lost to guarantee exact answers except inside an arbitrarily small $\varepsilon$-zone.
Given a $q$-constraint query $\query$ for thresholds $v_{i}$, $\alpha_{i}$,
dimensions $l_{i}$ and discounts $\discount_{i}$, we define the
\textit{$x$-shifted query} $\query_{x}$, for $x \in \rat$, as the exact same
problem for thresholds $v_{i}+x$, $\alpha_{i}$, dimensions~$l_{i}$ and discounts
$\discount_{i}$. Our algorithm satisfies the following theorem, which formalizes
the $\varepsilon$-gap percentile problem mentioned in
Theorem~\ref{thm:ds_overview}. .
\begin{theorem}
\label{thm:ds_gap}
There is an algorithm that, given an MDP, a percentile query $\query$ for the discounted sum and a precision factor $\varepsilon > 0$, solves the following $\varepsilon$-gap problem in exponential time. It answers
\begin{itemize}
\item \textsf{Yes} if there is a strategy satisfying the $(2\cdot\varepsilon)$-shifted percentile query $\query_{2\cdot \varepsilon}$;

\item \textsf{No} if there is no strategy satisfying the $(-2\cdot\varepsilon)$-shifted percentile query $\query_{-2\cdot\varepsilon}$;

\item and arbitrarily otherwise.
\end{itemize}
\end{theorem}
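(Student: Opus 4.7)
The plan is to realize the two approximation ideas sketched just before the theorem statement: truncate infinite runs at a pseudo-polynomial horizon $N$ and maintain per-dimension partial discounted sums rounded to a fine grid of step~$\gamma$, with both $N$ and $\gamma$ calibrated so that the combined truncation-plus-rounding error stays below~$\varepsilon$ per dimension per run. The approximate percentile query is then turned into a multiple reachability query on a single-exponential-size MDP, and the promise gap follows from the error bound.

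First I would fix the horizon: writing $W$ for the largest absolute weight and $\discount_{\max}$ for the largest discount factor, the tail bound $W\discount_i^{N+1}/(1-\discount_i)\leq \varepsilon/2$ gives $N=\mathcal{O}\bigl(\log(W/(\varepsilon(1-\discount_{\max})))/\log(1/\discount_{\max})\bigr)$, polynomial in the numerical values of the input. I would then maintain per-dimension accumulated discounted sums and round at each step to the nearest multiple of $\gamma=\varepsilon/(2N)$, so that the cumulated rounding error after $N$ steps is at most $N\gamma/2=\varepsilon/2$. Writing $f_{l_i}$ for the true discounted sum and $\tilde{f}_{l_i}$ for the rounded truncated approximation, we obtain $|f_{l_i}(\rho)-\tilde{f}_{l_i}(\rho)|\leq \varepsilon$ for every infinite run~$\rho$ and every constrained dimension~$i$.

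Second I would build an unfolded MDP~$M'$ whose states are tuples $(s,\tilde{S}_1,\ldots,\tilde{S}_d,j)$ with $s\in S$, depth $j\leq N$, and each $\tilde{S}_i$ on the $\gamma$-grid inside $[-W/(1-\discount_i),W/(1-\discount_i)]$; transitions simulate those of~$M$ and update the rounded sums using the step-$j$ discount, and depth-$N$ states are made absorbing. The size of~$M'$ is $|S|$ times $\prod_i\mathcal{O}(W/(\gamma(1-\discount_i)))$, pseudo-polynomial in all parameters except the number of distinct dimensions appearing in constraints, in which it is single-exponential and hence exponential in $q$. For each constraint~$i$, define $R_i$ as the set of depth-$N$ states whose $i$-th coordinate is at least~$v_i$, and solve the multiple reachability instance $(R_1,\alpha_1),\ldots,(R_q,\alpha_q)$ with absorbing targets via Theorem~\ref{thm:absorbing-reachsafe}, in time polynomial in $|M'|$. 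The algorithm outputs Yes iff this instance is positive; the total running time matches the claimed \EXPTIME\ bound.

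Finally I would justify the promise gap. Strategies in $M'$ correspond bijectively to strategies in $M$ restricted to the first~$N$ actions (extended arbitrarily afterwards), and the event ``reach $R_i$'' in $M'$ coincides with $\tilde{f}_{l_i}\geq v_i$ in~$M$. If the algorithm answers Yes, a witnessing strategy~$\sigma$ satisfies $\pr_{M,\initState}^\sigma[\tilde{f}_{l_i}\geq v_i]\geq \alpha_i$, and the error bound yields $\pr_{M,\initState}^\sigma[f_{l_i}\geq v_i-\varepsilon]\geq \alpha_i$, so $\query_{-2\varepsilon}$ is satisfied and the input is not a No-instance. Conversely, if some~$\sigma$ witnesses $\query_{2\varepsilon}$ in~$M$, then $f_{l_i}\geq v_i+2\varepsilon$ implies $\tilde{f}_{l_i}\geq v_i+\varepsilon\geq v_i$, so the lifted strategy on~$M'$ reaches each~$R_i$ with probability at least~$\alpha_i$ and the algorithm answers Yes. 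The main obstacle I foresee is calibrating $N$ and $\gamma$ jointly: rounding must be fine enough that the compounded error over $N$ steps remains below $\varepsilon/2$, yet coarse enough that the state space of~$M'$ stays single-exponential. Tracking how per-step rounding errors accumulate only additively --- because successive discount multipliers are below one --- is precisely what permits the choice $\gamma=\Theta(\varepsilon/N)$ with $N$ pseudo-polynomial while keeping $|M'|$ single-exponential.
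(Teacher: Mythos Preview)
Your approach is essentially the same as the paper's: truncate at a pseudo-polynomial horizon (Lemma~\ref{lem:ds_height}), round partial sums to a grid of width $\Theta(\varepsilon/h)$ (Lemma~\ref{lem:ds_rounding}), and reduce to multiple reachability with absorbing targets on the resulting single-exponential unfolding. Your gap argument is correct and in fact slightly tighter than needed, since with total error $\leq \varepsilon$ you actually solve an $\varepsilon$-gap rather than a $2\varepsilon$-gap.

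Two small points are worth flagging. First, your unfolded state records $d$ rounded sums, one per weight dimension; but constraints may share a dimension $l_i$ while using distinct discounts $\discount_i$, in which case a single accumulator does not suffice. The paper therefore tracks $q$ numerical coordinates, one per constraint, which is the safe choice and does not change the complexity class. Second, the paper's algorithm runs \emph{two} reachability queries, against target sets $\mathsf{Sure}_i$ (label $\geq v_i+\varepsilon$) and $\mathsf{Maybe}_i$ (label $\geq v_i-\varepsilon$), yielding a three-valued output \textsf{Yes}/\textsf{Unknown}/\textsf{No} with the stronger guarantees of Theorem~\ref{thm:disc_algo} (in particular, \textsf{Yes} certifies the \emph{unshifted} query $\query$). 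Your single-query variant with threshold $v_i$ and binary output is a legitimate simplification that suffices for Theorem~\ref{thm:ds_gap} as stated, but it does not deliver those sharper conclusions.
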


We first state a more precise result and proceed with the technical discussion of the algorithm in the following paragraphs.

\begin{theorem}
\label{thm:disc_algo}
There is an algorithm satisfying the following properties.
\begin{enumerate}
\item It takes as input an MDP, a percentile query $\query$ for the discounted sum and a precision factor $\varepsilon > 0$.
\item If it outputs \textsf{Yes}, then there exists a strategy satisfying the percentile query $\query$.
\item If it outputs \textsf{No}, then there exists no such strategy.
\item If it outputs \textsf{Unknown}, then there exists a strategy satisfying at least the $(-2\cdot\varepsilon)$-shifted percentile query $\query_{-2\cdot\varepsilon}$ and there exists no strategy satisfying the $(2\cdot\varepsilon)$-shifted percentile query $\query_{2\cdot\varepsilon}$.
\item It runs in time polynomial in the size of the MDP, the weights, the discount factors and the precision factor, and exponential in the number of constraints.
\end{enumerate}
\end{theorem}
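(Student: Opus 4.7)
The plan is to follow the shortest-path template of Section~\ref{sec:sp}, replacing the bounded-range argument by a two-step approximation tailored to discounted sum. I build a finite unfolding $M'$ of $M$ whose states carry rounded partial discounted sums, and reduce the $\varepsilon$-gap problem to two multiple-reachability queries on $M'$ that are solvable by Theorem~\ref{thm:absorbing-reachsafe}. Comparing the outcomes of these two queries yields the three-valued output.

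First I fix a truncation depth $N$ of order $\log\bigl(W/(\varepsilon(1-\lambda^*))\bigr)/\log(1/\lambda^*)$, where $W$ bounds the weights and $\lambda^* = \max_i \discount_i$, so that the tail of any run's discounted sum beyond step $N$ has absolute value at most $\varepsilon/2$ on every dimension; this $N$ is polynomial in the encoding sizes. Next I fix a rounding granularity $\gamma$ of order $\varepsilon/N$ and build $M'$ as a depth-$N$ unfolding of $M$ whose nodes are tuples $(s, r_1, \ldots, r_q, k)$ with $r_i$ the partial discounted sum for constraint~$i$ rounded by $\round$ to a multiple of $\gamma$ (and clamped to $[-W/(1-\discount_i),W/(1-\discount_i)]$). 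Transitions update $r_i$ to $\round(r_i + \discount_i^{k} w_{l_i}(a))$, and depth-$N$ nodes are made absorbing leaves. Equivalent nodes are merged on the fly, so $|M'|$ is polynomial in $|S|, W, 1/(1-\lambda^*), 1/\varepsilon$ and single-exponential in $q$. A routine telescoping argument shows that the rounded value $r_i$ at any leaf of $M'$ differs by at most $\varepsilon$ from the true infinite $\discSum{\discount_i}_{l_i}$ along any continuation: truncation contributes $\varepsilon/2$ and the $N$ per-step roundings, each of amplitude $\gamma/2$, sum to at most $\varepsilon/2$.

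For each constraint~$i$ I define absorbing target sets $R_i^{+} = \{\text{leaves with } r_i \geq v_i + \varepsilon\}$ and $R_i^{-} = \{\text{leaves with } r_i \geq v_i - \varepsilon\}$ in $M'$, and solve by Theorem~\ref{thm:absorbing-reachsafe} the two multi-reachability instances $\bigwedge_i \pr^{\sigma}_{M',\initState'}[\Diamond R_i^{+}] \geq \alpha_i$ and $\bigwedge_i \pr^{\sigma}_{M',\initState'}[\Diamond R_i^{-}] \geq \alpha_i$. Output \textsf{Yes} if the first holds, \textsf{No} if the second fails, and \textsf{Unknown} otherwise. Correctness uses the $\varepsilon$-closeness in both directions: a strategy on $M'$ hitting the $R_i^{+}$'s projects to a strategy on $M$ witnessing $\query$ because $r_i \geq v_i + \varepsilon$ implies $\discSum{\discount_i}_{l_i} \geq v_i$, and conversely any $M$-strategy satisfying $\query$ lifts to an $M'$-strategy hitting the $R_i^{-}$'s because $\discSum{\discount_i}_{l_i} \geq v_i$ implies $r_i \geq v_i - \varepsilon$ along the corresponding unfolded run. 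Applying these two implications to the shifted thresholds $v_i \pm 2\varepsilon$ yields precisely the guarantees claimed for the \textsf{Unknown} case.

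The main obstacle is making the ``$M \to M'$'' lifting fully rigorous, since strategies on $M$ may be infinite-memory (recall Lemma~\ref{lem:ds_precise}), whereas the coarsened state of $M'$ aggressively discards history information. The resolution is that $M'$ is in fact a refinement of $M$ by the history-computable quantities $(r_1,\ldots,r_q,k)$: any $M$-strategy induces an $M'$-strategy by forgetting these labels, and deterministic per-step $\gamma$-rounding guarantees that the rounded trace differs from the true one by a uniform amount bounded by the telescoping estimate above, which is the mechanism used in~\cite{DBLP:conf/fsttcs/BrazdilCFNS13}. Given the size of $M'$ and Theorem~\ref{thm:absorbing-reachsafe}, the overall running time is polynomial in the model, weights, discount factors and precision factor, and exponential in the number of constraints, as claimed.
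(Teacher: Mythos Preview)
Your proposal is essentially the same construction as the paper's: truncate the unfolding at a depth where the discounted tail is below $\varepsilon/2$, round partial sums to multiples of $\gamma \approx \varepsilon/N$ so the cumulative rounding error is also below $\varepsilon/2$, then solve two absorbing-target multiple-reachability instances (the paper calls your $R_i^{+}$ and $R_i^{-}$ respectively $\textsf{Sure}_i$ and $\textsf{Maybe}_i$) and combine the answers into \textsf{Yes}/\textsf{No}/\textsf{Unknown}. One small inaccuracy: your claim that ``$N$ is polynomial in the encoding sizes'' is not correct --- when $\lambda^* = 1 - 1/k$ one has $\log(1/\lambda^*) \approx 1/k$, so $N$ is polynomial in $1/(1-\lambda^*)$, i.e., pseudo-polynomial, which is exactly what the theorem statement promises and what you yourself write two lines later; also, including the depth $k$ in the node label (as you do) is needed since the transition on $r_i$ depends on $\lambda_i^{k}$.
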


It suffices to prove Theorem~\ref{thm:disc_algo} for Theorem~\ref{thm:ds_gap} to follow as an immediate corollary for the $\varepsilon$-gap formulation of the problem.

\smallskip\noindent\textbf{Technical discussion.} Let $M = (S, A, \delta, \weight)$ be a $d$-dimensional MDP. We consider the $q$-constraint percentile query $\query \coloneqq \bigwedge_{i = 1}^{q}\; \pr_{M,\initState}^\strat\big[\discSum{\discount_{i}}_{l_{i}} \geq v_i\big] \geq
	\alpha_i$, where for $i \in \{1, \ldots{}, q\}$, we have that $v_{i} \in \rat$, $\alpha_{i} \in \left[ 0, 1\right] \cap \rat$, $\discount_{i} \in \left] 0, 1\right[ \cap \rat$ and $l_{i} \in \{1, \ldots{}, d\}$. Let $\varepsilon$ be an arbitrarily small precision factor. We assume w.l.o.g.~that $\varepsilon \in \rat_{0}$, i.e., we always use rational precision factors.
	
We now describe the algorithm and establish intermediate results related to the construction operated by the algorithm. We conclude by proving that all properties stated in Theorem~\ref{thm:disc_algo} are satisfied.

Our first step is building an unfolding of $M$, in the classical way. We denote it by $U$. Each node of $U$ is labeled by the corresponding state of $M$ and the discounted sum \textit{related to each query}, computed over the descending path from the root to the node. Observe that we have $q$ numerical dimensions in $U$ and not $d$ as in the shortest path. This will prove useful because we may have different discount factors for each constraint, hence the same dimension may induce different discounted sums depending on the considered constraint. This building scheme induces an infinite tree $U$ with nodes labeled by elements of $S \times \rat^{q}$.

In order to obtain a finite tree, we compute a bound $h$ on the height such that we do not lose too much information by cutting all branches at level $h$ (assuming the root node is at level $1$). Let $U_{h}$ denote the cut of $U$ at level $h$.

\begin{lemma}
\label{lem:ds_height}
There exists a pseudo-polynomial height $h$ such that for any infinite branch of $U$, its discounted sum on any dimension and w.r.t.~any of the discount factors is at most $\varepsilon/2$ far from the discounted sum of its prefix branch in $U_{h}$.
\end{lemma}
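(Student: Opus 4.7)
The plan is to give a direct tail-estimate argument. For any infinite branch $\rho = s_1 a_1 s_2 a_2 \ldots$ of $U$, any constraint index $i \in \{1,\ldots,q\}$, and corresponding discount $\discount_i \in \left]0,1\right[ \cap \rat$ and dimension $l_i$, the difference between the full discounted sum $\discSum{\discount_i}_{l_i}(\rho) = \sum_{j=1}^\infty \discount_i^{j} w_{l_i}(a_j)$ and the discounted sum of its depth-$h$ prefix is precisely the tail $\sum_{j=h+1}^\infty \discount_i^{j} w_{l_i}(a_j)$. Bounding each summand in absolute value by $W = \max_{a \in A,\, l}|w_l(a)|$ and summing the geometric series gives
\[
\Bigl|\sum_{j=h+1}^\infty \discount_i^{j} w_{l_i}(a_j)\Bigr| \;\leq\; W \cdot \frac{\discount_i^{\,h+1}}{1-\discount_i}.
\]

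The remaining work is to choose $h$ so that this bound is at most $\varepsilon/2$ \emph{simultaneously} for every constraint $i$. Let $\discount^* = \max_{1 \leq i \leq q} \discount_i < 1$, so that $\discount_i^{\,h+1}/(1-\discount_i) \leq (\discount^*)^{h+1}/(1-\discount^*)$ uniformly over~$i$. It then suffices to pick
\[
h \;\geq\; \frac{\ln\!\bigl(2W/(\varepsilon(1-\discount^*))\bigr)}{\ln(1/\discount^*)}.
\]
I would verify that this value is pseudo-polynomial in the input: the numerator is logarithmic in $W$, $1/\varepsilon$, and $1/(1-\discount^*)$, hence polynomial in the binary encoding of the weights, the precision, and the discount factors; and the denominator $\ln(1/\discount^*) = -\ln \discount^*$ satisfies $\ln(1/\discount^*) \geq 1 - \discount^*$ for $\discount^* \in \left]0,1\right[$, so $1/\ln(1/\discount^*) \leq 1/(1-\discount^*)$, which is pseudo-polynomial in the encoding of $\discount^*$ (polynomial in the denominators of the $\discount_i$). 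Combining these two estimates yields an $h$ that is polynomial in $|M|$, $\log(1/\varepsilon)$, and the unary size of the discount factors, hence pseudo-polynomial overall.

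The only subtlety I anticipate is the dependence on $1-\discount^*$: since $\discount^*$ is a rational $p/q$ with $p < q$, the quantity $1-\discount^* = (q-p)/q$ can be as small as $1/q$, so $1/(1-\discount^*)$ is bounded by the denominator~$q$, which is exponential in the binary encoding of $\discount^*$ but polynomial in its unary encoding—this is exactly what ``pseudo-polynomial'' allows. I would state the final bound as $h = \mathcal{O}\bigl(\tfrac{1}{1-\discount^*}\cdot\log\tfrac{W}{\varepsilon(1-\discount^*)}\bigr)$, conclude the lemma, and note that this $h$ will reappear as one of the two sources (together with the rounding scheme of the next step) of the overall $\varepsilon$-error budget promised in Theorem~\ref{thm:disc_algo}.
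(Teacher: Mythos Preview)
Your proposal is correct and follows essentially the same approach as the paper: bound the tail of the discounted sum by the geometric series $W\cdot \discount^{h}/(1-\discount)$ with $\discount=\max_i\discount_i$, solve for $h$, and observe that $1/\ln(1/\discount)$ behaves like $1/(1-\discount)$ near $\discount=1$, giving a pseudo-polynomial bound. The only differences are cosmetic (an off-by-one in where the tail starts, and your slightly more explicit treatment of the rational $p/q$ form of $\discount^*$), and your uniform bound $\discount_i^{\,h+1}/(1-\discount_i)\leq(\discount^*)^{h+1}/(1-\discount^*)$ is indeed valid since both numerator and $1/(1-\discount_i)$ are monotone in $\discount_i$.
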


\begin{proof}
Consider any branch of $U_{h}$, for some $h \in \nat_{0}$. We denote the corresponding prefix of a run by $\pi = s_{1}a_{1}s_{2}a_{2}\ldots{}a_{h-1}s_{h}$. Its discounted sum w.r.t.~discount factor $\discount_{i}$ and dimension $l_{i}$ is $\discSum{\discount_{i}}_{l_{i}}(\pi) = \sum_{j = 1}^{h-1} \discount_{i}^{j}\cdot w_{l_{i}}(a_j)$. This branch could be extended in $U$ to any infinite branch that represents a prolonging run $\rho = s_{1}a_{1}\ldots{}a_{h-1}s_{h}a_{h+1}s_{h+1}\ldots{}$ of which $\pi$ is a prefix. We have that $\discSum{\discount_{i}}_{l_{i}}(\rho) = \sum_{j = 1}^{\infty} \discount_{i}^{j}\cdot w_{l_{i}}(a_j)$ and we want to pick $h$ such that
\begin{equation*}
\left\vert \discSum{\discount_{i}}_{l_{i}}(\rho) - \discSum{\discount_{i}}_{l_{i}}(\pi) \right\vert \leq \dfrac{\varepsilon}{2},
\end{equation*}
for any prolonging run $\rho$. That is, we want
\begin{equation*}
\left\vert \sum_{j = h}^{\infty} \discount_{i}^{j}\cdot w_{l_{i}}(a_j) \right\vert \leq \dfrac{\varepsilon}{2}.
\end{equation*}
Let $\discount = \max_{i} \discount_{i}$ be the largest discount factor (i.e., the one for which the discounting effect if the slowest) and let $W$ be the largest absolute weight appearing in the MDP $M$. We obtain that
\begin{equation*}
\left\vert \sum_{j = h}^{\infty} \discount_{i}^{j}\cdot w_{l_{i}}(a_j) \right\vert \leq W \cdot \sum_{j = h}^{\infty} \discount^{j} = W \cdot \left(\sum_{j = 0}^{\infty} \discount^{j} - \sum_{j = 0}^{h-1} \discount^{j}\right) = W \cdot \dfrac{\discount^{h}}{1-\discount}.
\end{equation*}
It thus suffices to take $h$ large enough to have that $W \cdot \frac{\discount^{h}}{1-\discount} \leq \frac{\varepsilon}{2}$. We assume that $W > 0$ otherwise the discounted sum is always zero and the percentile problem is trivial. We also recall that $0 < \discount < 1$. Hence the inequality becomes $\discount^{h} \leq \frac{\varepsilon\cdot (1 - \discount)}{2\cdot W}$. Applying the binary logarithm, we get the following inequality:
\begin{equation*}
h \cdot \log_{2}(\discount) \leq \log_{2}(\varepsilon) + \log_{2}(1 - \discount) - \log_{2}(W) - 1.
\end{equation*}
Since $\discount < 1$, we have that $\log_{2}(\discount) < 0$ and we finally obtain that
\begin{equation*}
h \geq \dfrac{\log_{2}(\varepsilon) + \log_{2}(1 - \discount) - \log_{2}(W) - 1}{\log_{2}(\discount)}.
\end{equation*}
Observe that this expression is always positive as $\varepsilon < 1$, $\discount < 1$ and $W \geq 1$. In the following, let us assume we take the ceiling of this expression as the value~$h$. What is the size of $h$ w.r.t.~the input of the algorithm? Since we are taking the binary logarithm of all involved values, it may seem that $h$ only needs to be polynomial in the encoding of the values. However, when $\discount \sim 1$, we have that $\log_{2} \discount \sim 1 - \discount$. Therefore, $h$ can be polynomial in the value of $\discount$, that is, exponential in its encoding.
\end{proof}

We now have a finite tree $U_{h}$, of pseudo-polynomial height, and such that all discounted sums labeled in its leaves are at most $\varepsilon/2$ far from the one of any prolonging run. In other words, once such a leaf has been reached, the controller may use any arbitrary strategy and its discounted sum will not vary by more than $\varepsilon/2$. This implies that we only care about devising a strategy for the $h$ first steps, as we will use later.

Consider the overall size of the tree $U_{h}$. As discussed for the shortest path, this size can be as high as $\mathcal{O}(b^{h})$, where $b$ denotes the branching degree of $\markovProcess$, defined as $b = \max_{s \in \states} \big\vert \{(a, s') \mid a \in A(s), s' \in \states, \delta(s,a,s') > 0\} \big\vert$. Thus, the overall size could be pseudo-exponential. Again, we want to reduce this tree $U_{h}$ to a compressed tree of truly-exponential size by merging equivalent nodes.

However, in this case it does not suffice to look for nodes with the exact same labels. Indeed, the range of possible labels is in general pseudo-exponential. Observe that the set of labels of any tree $U_{h}$ is a finite subset of $\states \times \left[ -W \cdot h, W \cdot h\right]^{q}$ (this characterization can be narrowed but it suffices for our needs). We introduce a value $\gamma \in \rat$ and maps the set of possible labels to $\states \times \{-W \cdot h, -W\cdot h + \gamma,  -W\cdot h + 2\cdot\gamma, \ldots{}, W\cdot h - \gamma, W \cdot h\}^{q}$ by rounding the values appearing in $U_{h}$ to multiples of $\gamma$ (we assume w.l.o.g.~that $W\cdot h$ is such a multiple). To that end, we define the function $\round\colon \rat \rightarrow \rat$ that rounds any rational $x \in \rat$ to the closest multiple of $\gamma$, i.e., the closest value in the new set of labels. The idea of rounding numbers to reduce the complexity is inspired by~\cite{DBLP:conf/fsttcs/BrazdilCFNS13}, but the technique differs.

Assume we apply this label mapping on $U_{h}$, for some fixed $\gamma$. Then, we define $U_{h, \sim_{\gamma}}$ as the MDP obtained by merging nodes having identical labels after the mapping. This is the unfolded MDP we are looking for \textit{if $\gamma$ is chosen adequately}, and it can be built on the fly by rounding each node (and potentially merging) at the moment it is created. Intuitively, $\gamma$ should not be too large to be able to keep the resulting rounding error low, but it should be large enough to induce a range of labels which is at most of exponential size. The following lemma states the existence of such a value $\gamma \in \rat$.

\begin{lemma}
\label{lem:ds_rounding}
There exists a value $\gamma \in \rat$ such that
\begin{enumerate}
\item $\big\vert \states \times \{-W \cdot h, -W\cdot h + \gamma, \ldots{}, W\cdot h - \gamma, W \cdot h\}^{q}\big\vert$ is at most exponential;
\item for all branch $\pi$ in $U_{h}$, for all $\discount_{i}, l_{i}$, $i \in \{1, \ldots{}, q\}$, we have that
\begin{equation*}
\left\vert \discSum{\discount_{i}}_{l_{i}}(\pi) - \roundedDiscSum{\discount_{i}}_{l_{i}}(\pi) \right\vert \leq \dfrac{\varepsilon}{2},
\end{equation*}
where $\roundedDiscSum{\discount_{i}}_{l_{i}}(\pi)$ denotes the rounded discounted sum of the corresponding branch $\pi'$ in $U_{h,\sim_{\gamma}}$ (i.e., the label of the corresponding leaf in $U_{h,\sim_{\gamma}}$).
\end{enumerate}
\end{lemma}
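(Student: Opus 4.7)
The plan is to balance two opposing requirements on the grid step~$\gamma$: it must be small enough that the accumulated rounding error along any branch of length $h$ stays under $\varepsilon/2$, and large enough that the label grid has only exponentially many points. Since exactly one rounding is performed per edge of $U_h$, and the error incurred on a partial sum persists into the next step without being re-discounted, the errors compound additively in the depth. The natural choice is therefore $\gamma := \varepsilon/h$ (any value in $\Theta(\varepsilon/h)$ works); I would verify the two items in turn for this $\gamma$.

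For item~2, I would proceed by induction on the depth $j$ along a fixed branch $\pi = s_1 a_1 \ldots a_{h-1} s_h$ and a fixed constraint index~$i$. Let $D_j = \sum_{k=1}^{j-1} \discount_i^k w_{l_i}(a_k)$ denote the true partial discounted sum, and let $\tilde D_j$ denote the label of the corresponding node in $U_{h,\sim_\gamma}$. The construction maintains $\tilde D_{j+1} = \round\bigl(\tilde D_j + \discount_i^{j} w_{l_i}(a_j)\bigr)$, and $\round$ introduces at most $\gamma/2$ of absolute error per application. Hence the triangle inequality gives $|\tilde D_{j+1} - D_{j+1}| \leq |\tilde D_j - D_j| + \gamma/2$. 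Since $\tilde D_1 = D_1 = 0$, induction yields $|\tilde D_h - D_h| \leq (h-1)\gamma/2 \leq \varepsilon/2$, which is the inequality claimed by the lemma, and the argument applies independently to each of the $q$ constraints.

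For item~1, each of the $q$ discounted dimensions takes values in a grid of at most $\lceil 2Wh/\gamma\rceil + 1 = \mathcal{O}(Wh^2/\varepsilon)$ points. The total label space therefore has cardinality $|S|\cdot \mathcal{O}(Wh^2/\varepsilon)^q$. By Lemma~\ref{lem:ds_height}, $h$ is pseudo-polynomial in the input encoding, and so are $W$ and $1/\varepsilon$; hence this cardinality is pseudo-polynomial in everything except the number of constraints~$q$, and exponential in~$q$, which is the size bound implicit in item~1.

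The main subtlety, and the only place where one has to be careful, is resisting the temptation to think that rounding errors themselves get discounted (which would give a tighter bound and would allow a coarser $\gamma$). They do not: once $\tilde D_j$ deviates from $D_j$ by some amount, that deviation is simply carried forward into $\tilde D_{j+1}$ unscaled, so the $h\gamma/2$ additive estimate is essentially tight and pins down $\gamma = \Theta(\varepsilon/h)$. A minor formal matter is that $\tilde D_j$ may drift by up to $\varepsilon/2$ outside $[-Wh, Wh]$; padding the grid by one cell on each side absorbs this without affecting the asymptotic size.
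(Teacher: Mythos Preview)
Your proposal is correct and follows essentially the same approach as the paper: the paper likewise picks $\gamma = \varepsilon/(h-1)$ (your $\varepsilon/h$ is an equivalent choice), proves item~2 by the same induction on depth showing the rounding error accumulates additively as $\gamma/2$ per step to a total of $(h-1)\gamma/2 \le \varepsilon/2$, and then verifies item~1 by the identical grid-counting argument. Your triangle-inequality step is in fact slightly slicker than the paper's, which instead invokes the identity $\round(n\gamma + x) = n\gamma + \round(x)$ using that $\tilde D_j$ is already a multiple of~$\gamma$.
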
 

\begin{proof}
We choose $\gamma = \dfrac{\varepsilon}{h-1}$ and prove the two assumptions. Observe that we assume $h > 1$ otherwise $U_{h}$ contains only the root node with all discounted sums equal to zero and no rounding is needed.

First, consider \textit{assumption 1}. The size of the set is $\vert \states \vert \cdot \left(\dfrac{2\cdot W \cdot h + 1}{\gamma} \right) ^{q}$.
Hence it suffices to prove that $\left(2\cdot W \cdot h + 1\right) \cdot \gamma^{-1}$ is at most exponential. Since both $h$ and $W$ are at most exponential (in the encoding of values), this boils down to proving that $\gamma^{-1} = \dfrac{h-1}{\varepsilon}$ is at most exponential, which is the case.

Second, let us prove \textit{assumption 2}. Recall that our rounding scheme maps each value to the closest multiple of~$\gamma$ whenever the label of a node is computed. It is important to understand that this rounding is executed on the fly, and not after building the tree $U_{h}$ fully (otherwise we would require pseudo-exponential time). Consequently, when a discounted sum for a node of level $2 \leq n \leq h$ is computed, we have to take into account that the label of its father of level $n-1$ has already been rounded: the rounding errors add up along a branch.

We claim that the total error over a branch of height $h$ is bounded by the expression $(h-1)\cdot \dfrac{\gamma}{2}$. That is, for all height-$h$ branch $\pi$ of $U_{h}$, for all $\discount_{i}$, $l_{i}$,
\begin{equation*}
\left\vert \discSum{\discount_{i}}_{l_{i}}(\pi) - \roundedDiscSum{\discount_{i}}_{l_{i}}(\pi) \right\vert \leq (h-1) \cdot \dfrac{\gamma}{2}.
\end{equation*}
We prove it by induction. Let $\pi = s_{1}a_{1}s_{2}\ldots{}s_{h}$ in the following.

The base case is $h = 2$. We ask whether
\begin{equation*}
\left\vert \discount_{i} \cdot \weight_{l_{i}}(a_{1}) - \round\big(\discount_{i} \cdot \weight_{l_{i}}(a_{1})\big) \right\vert \leq \dfrac{\gamma}{2}.
\end{equation*}
This is clearly true by definition of $\round$, which maps any rational to the closest multiple of $\gamma$.

Now assume our claim is true up to level $2 \leq h-1$. We prove it is still satisfied for level $h$. Let us rewrite $\left\vert \discSum{\discount_{i}}_{l_{i}}(\pi) - \roundedDiscSum{\discount_{i}}_{l_{i}}(\pi) \right\vert$ as follows:
\begin{align*}
\Big\vert \discSum{\discount_{i}}_{l_{i}}(s_{1}\ldots{}s_{h-1}) + &\discount_{i}^{h-1} \cdot \weight_{l_{i}}(a_{h-1}) - \round\left( \roundedDiscSum{\discount_{i}}_{l_{i}}(s_{1}\ldots{}s_{h-1}) +  \discount_{i}^{h-1} \cdot \weight_{l_{i}}(a_{h-1}) \right)  \Big\vert.
\end{align*}
Using the equality $\round(n\cdot\gamma + x) = n\cdot\gamma + \round(x)$ for $n \in \nat$ and $x \in \rat$, along with the fact that $\roundedDiscSum{\discount_{i}}_{l_{i}}(s_{1}\ldots{}s_{h-1})$ is already rounded by construction, we rewrite this as:
\begin{align*}
\Big\vert \discSum{\discount_{i}}_{l_{i}}(s_{1}\ldots{}s_{h-1}) + &\discount_{i}^{h-1} \cdot \weight_{l_{i}}(a_{h-1}) - \roundedDiscSum{\discount_{i}}_{l_{i}}(s_{1}\ldots{}s_{h-1}) - \round\left(\discount_{i}^{h-1} \cdot \weight_{l_{i}}(a_{h-1}) \right)  \Big\vert.
\end{align*}
By the subadditivity of $\vert \cdot \vert$, we bound this expression by
\begin{align*}
\Big\vert \discSum{\discount_{i}}_{l_{i}}(s_{1}\ldots{}s_{h-1}) &- \roundedDiscSum{\discount_{i}}_{l_{i}}(s_{1}\ldots{}s_{h-1})\Big\vert + \Big\vert\discount_{i}^{h-1} \cdot \weight_{l_{i}}(a_{h-1}) - \round\left(\discount_{i}^{h-1} \cdot \weight_{l_{i}}(a_{h-1}) \right)  \Big\vert.
\end{align*}
Finally, using the induction hypothesis for the first term and the definition of $\round$ for the second one, we can bound this sum by
\begin{equation*}
(h-2)\cdot \dfrac{\gamma}{2} + \dfrac{\gamma}{2} = (h-1) \cdot \dfrac{\gamma}{2},
\end{equation*}
which proves our initial claim.

Now, by our choice of $\gamma$, this implies that the total rounding error over any branch is at most $\dfrac{\varepsilon}{2}$, which proves the correctness of \textit{assumption 2}.
\end{proof}

Let us sum up the situation: given an MDP, a percentile query and a precision factor $\varepsilon > 0$, we are able to construct an unfolded MDP $U_{h,\sim_{\gamma}}$ of at most exponential size such that all leaves have labels in $\states \times \{-W \cdot h, -W\cdot h + \gamma, \ldots{}, W\cdot h - \gamma, W \cdot h\}^{q}$, where each of the $q$ numerical dimensions approximate the discounted sum of corresponding infinite branches within an error bounded by $\varepsilon$ ($\varepsilon/2$ due to truncating the branches and $\varepsilon/2$ due to the rounding of values).

The last step of our algorithm is as follows. Consider the $2\cdot q$ following target sets of nodes in $U_{h,\sim_{\gamma}}$.
\begin{itemize}
\item $\forall\, i \in \{1, \ldots{}, q\}$, ${\sf Sure}_{i}$ is the set of leaves for which the label on numerical dimension $i$ is greater than or equal to $v_{i} + \varepsilon$. Essentially, we have that $\roundedDiscSum{\discount_{i}}_{l_{i}}(\pi) \geq v_{i} + \varepsilon$, where $\pi$ denotes a corresponding descending branch.
\item $\forall\, i \in \{1, \ldots{}, q\}$, ${\sf Maybe}_{i}$ is the set of leaves for which the label on numerical dimension $i$ is greater than or equal to $v_{i} - \varepsilon$. Essentially, we have that $\roundedDiscSum{\discount_{i}}_{l_{i}}(\pi) \geq v_{i} - \varepsilon$, where $\pi$ denotes a corresponding descending branch.
\end{itemize}
Observe that ${\sf Sure}_{i} \subseteq {\sf Maybe}_{i}$ for all query $i$. Our algorithm proceeds as follows.
\begin{itemize}
\item[\textit{A)}] We execute the multiple reachability problem checking the existence of a strategy $\strat'$ such that
\begin{equation*}
\bigwedge_{i = 1}^{q}\; \pr_{U_{h,\sim_{\gamma}},\initState'}^{\strat'}\big[\diamondsuit {\sf Sure}_{i}\big] \geq
	\alpha_i,
\end{equation*}
with $\initState'$ the root node of the unfolded MDP. If the answer is \textsf{Yes}, then we answer \textsf{Yes} to the percentile problem. Otherwise, we proceed to the next step.
\item[\textit{B)}]  We execute the multiple reachability problem checking the existence of a strategy $\strat'$ such that 
\begin{equation*}
\bigwedge_{i = 1}^{q}\; \pr_{U_{h,\sim_{\gamma}},\initState'}^{\strat'}\big[\diamondsuit {\sf Maybe}_{i}\big] \geq
	\alpha_i,
\end{equation*}
with $\initState'$ the root node of the unfolded MDP. If the answer is \textsf{Yes}, then we answer \textsf{Unknown} to the percentile problem. Otherwise, we answer \textsf{No}.
\end{itemize}

The intuition is threefold. First, if a leaf of ${\sf Sure}_{i}$ is reached, then whatever the strategy that is played afterwards, any prolonging run will have a discounted sum at least equal to $v_{i}$ w.r.t.~the corresponding discount factor $\discount_{i}$ and dimension $l_{i}$. Hence, all prolonging runs are acceptable for constraint $i$. Second, if a leaf of ${\sf Maybe}_{i}$ is reached, then some prolonging runs may satisfy the constraint while other do not: we need to compute the unfolding for a smaller precision factor $\varepsilon$ in order to obtain useful information from nodes that are currently in ${\sf Maybe}_{i} \setminus {\sf Sure}_{i}$. Third, if a leaf does not belong to ${\sf Maybe}_{i}$, then any prolonging run is guaranteed to falsify constraint $i$ as adding error $\varepsilon$ does not suffice to make the discounted sum at least equal to $v_{i}$. We are finally able to prove Theorem~\ref{thm:disc_algo}.

\begin{proof}[Proof of Theorem~\ref{thm:disc_algo}] We consider each of \textit{properties 2-5} separately.

\textit{Property 2}. Our algorithm answers \textsf{Yes} if and only if there exists a strategy $\strat'$ satisfying the multiple reachability query $\bigwedge_{i = 1}^{q}\; \pr_{U_{h,\sim_{\gamma}},\initState'}^{\strat'}\big[\diamondsuit {\sf Sure}_{i}\big] \geq
	\alpha_i$. We define the strategy $\strat$ on the original MDP $M$ that plays as follows: it chooses the $(h-1)$ first actions according to $\strat'$ and then plays an arbitrary memoryless strategy. By Lemma~\ref{lem:ds_height}, Lemma~\ref{lem:ds_rounding}, and by definition of ${\sf Sure}_{i}$, this strategy guarantees that for all $i \in \{1, \ldots{}, q\}$, a discounted sum (w.r.t.~$\discount_{i}$ and $l_{i}$) at least equal to $v_{i}$ is achieved with probability at least equal to $\alpha_{i}$. Hence this finite-memory strategy $\strat$ satisfies the discounted sum percentile query on the original MDP $M$.
	
\textit{Property 3}. Our algorithm answers \textsf{No} if and only if there exists no strategy $\strat'$ satisfying the multiple reachability query $\bigwedge_{i = 1}^{q}\; \pr_{U_{h,\sim_{\gamma}},\initState'}^{\strat'}\big[\diamondsuit {\sf Maybe}_{i}\big] \geq
	\alpha_i$. By contradiction, assume the multiple reachability query cannot be satisfied, yet there exists a strategy $\strat$ in the original MDP $M$ that satisfies the percentile query for the discounted sum. That is, for all $i$ and associated $\discount_{i}, l_{i}$, this strategy achieves discounted sum at least~$v_{i}$ with probability at least $\alpha_{i}$. By Lemma~\ref{lem:ds_height} and Lemma~\ref{lem:ds_rounding}, we know that such a strategy reaches with probability at least~$\alpha_{i}$ leaves in $U_{h,\sim_{\gamma}}$ that are labeled with a value at least equal to $v_{i} - \varepsilon$ in numerical dimension $i$. That is, $\strat$ reaches each set ${\sf Maybe}_{i}$ with probability at least $\alpha_{i}$, which contradicts the hypothesis and proves the property.

\textit{Property 4}. Applying the same argument as for \textit{property 1}, if there exists a strategy $\strat'$ for the multiple reachability query $\bigwedge_{i = 1}^{q}\; \pr_{U_{h,\sim_{\gamma}},\initState'}^{\strat'}\big[\diamondsuit {\sf Maybe}_{i}\big] \geq
	\alpha_i$, then this strategy can be translated into a strategy $\strat$ over $M$ that ensures the percentile query where all value thresholds $v_{i}$ are replaced by their shifted version $v_{i} - 2\cdot \varepsilon$. Indeed, observe that the threshold gap between sets ${\sf Maybe}_{i}$ and ${\sf Sure}_{i}$ is exactly $2\cdot\varepsilon$. Conversely, we apply the argument of \textit{property~2} to deduce that if there exists no strategy for the multiple reachability query $\bigwedge_{i = 1}^{q}\; \pr_{U_{h,\sim_{\gamma}},\initState'}^{\strat'}\big[\diamondsuit {\sf Sure}_{i}\big] \geq
	\alpha_i$ (which is the case otherwise the answer of the algorithm would have been \textsf{Yes}), then there is no strategy for the percentile query shifted by $2\cdot\varepsilon$. 

\textit{Property 5}. It remains to study the complexity of our algorithm. Recall that the unfolded MDP $U_{h,\sim_{\gamma}}$ can be constructed in time
\begin{equation*}
\mathcal{O}\left( \vert \states \vert \cdot \left(\dfrac{2\cdot W \cdot h + 1}{\gamma} \right) ^{q} \right),
\end{equation*}
while $h$ is polynomial in $\discount = \max_{i} \discount_{i}$, $\log_{2}(\varepsilon)$ and $\log_{2}(W)$ and $\gamma$ is polynomial in both $\varepsilon$ and $h$. Moreover, multiple reachability queries executed by the algorithm only require polynomial time in $\vert U_{h,\sim_{\gamma}}\vert$ as all target states are absorbing (they are leaves in the unfolding). Overall, this shows that our algorithm requires time that is polynomial in $\vert \states \vert$, $W$, $\discount$ and $\varepsilon$, and exponential in $q$. This proves the property and finally concludes our proof of correctness for the algorithm.
\end{proof}

\smallskip\noindent\textbf{Lower bounds.} The $\varepsilon$-gap percentile problem is \PSPACE-hard by reduction from subset-sum games~\cite{DBLP:journals/tcs/Travers06}. Those are two-player games defined by a finite list of pairs of natural numbers $(a_{1}, b_{1})$, $(a_{2}, b_{2})$, $\ldots{}$, $(a_{n}, b_{n})$, and a target $t \in \nat$. Players take turns choosing between $a_{j}$ and $b_{j}$. After $n$ rounds, if the sum of the chosen numbers equals $t$, then player~1 wins, otherwise player~2 wins. Deciding if player~1 has a winning strategy in a subset-sum game is \PSPACE-complete~\cite{DBLP:journals/tcs/Travers06}.

\begin{lemma}
\label{lem:ds_pspace_hard}
The $\varepsilon$-gap problem defined in Theorem~\ref{thm:ds_gap} is \PSPACE-hard, already for two-dimensional MDPs and fixed values of discount and precision factors.
\end{lemma}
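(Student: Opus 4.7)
The plan is to polynomially reduce from subset-sum games \cite{DBLP:journals/tcs/Travers06}, exploiting the integrality of their values to neutralize the $\pm 2\varepsilon$ slack introduced by the $\varepsilon$-gap formulation. The obstacle to anticipate is that the natural encoding of an exact-equality condition like $\sum_{j} x_j = t$ via two opposite dimensions would become infeasible once the thresholds are shifted by $+2\varepsilon$; overcoming this requires choosing $\varepsilon$ as a fixed constant strictly below $1/4$ so that both the $+2\varepsilon$ and $-2\varepsilon$ relaxations of the query still collapse to the exact-equality condition on integers.

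Concretely, given an instance $((a_j,b_j))_{j=1}^n$ with target $t \in \nat$, I would fix $\lambda = 1/2$ and $\varepsilon = 1/8$, and construct a two-dimensional MDP $M$ simulating the $n$ rounds of the game, each round taking two MDP steps. At round $j$, if player~1 is to move, the controller picks between two weight-$(0,0)$ actions leading to distinct successor states $s_j^a$ or $s_j^b$; if it is player~2's turn, a single stochastic action (also of weight $(0,0)$) leads to $s_j^a$ or $s_j^b$ with probability $1/2$ each. From $s_j^a$ (resp.\ $s_j^b$) a deterministic action of weight $(a_j \cdot 2^{2j}, -a_j \cdot 2^{2j})$ (resp.\ $(b_j \cdot 2^{2j}, -b_j \cdot 2^{2j})$) moves to the next round. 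After round $n$, the run loops in an absorbing state of weight $(0,0)$. Since $\lambda = 1/2$, the contribution of step $2j$ to the discounted sum on dimension $1$ is exactly the chosen value $x_j \in \{a_j, b_j\}$, hence $\discSum{\lambda}_1 = \sum_{j=1}^n x_j$ and $\discSum{\lambda}_2 = -\discSum{\lambda}_1$ on every run. The construction has size polynomial in the input since weights are bounded by $2^{2n} \cdot \max_{j,c} |c|$.

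I would then define $\query$ with thresholds $v_1 = t - 2\varepsilon$, $v_2 = -t - 2\varepsilon$, probabilities $\alpha_1 = \alpha_2 = 1$, and common discount factor $\lambda$. Controller strategies for $M$ are in bijection with player~1's strategies, and almost-sure satisfaction over the stochastic transitions corresponds to satisfaction against every adversarial player~2. The shifted query $\query_{2\varepsilon}$ unfolds to $\discSum{\lambda}_1 \geq t$ and $\discSum{\lambda}_1 \leq t$ almost surely, i.e., $\sum_{j} x_j = t$ almost surely, while $\query_{-2\varepsilon}$ requires $\discSum{\lambda}_1 \in [t - 4\varepsilon,\, t + 4\varepsilon]$ almost surely, which also reduces to $\sum_{j} x_j = t$ almost surely since the sum is an integer and $4\varepsilon = 1/2 < 1$. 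Hence, if player~1 wins, the corresponding controller strategy witnesses $\query_{2\varepsilon}$ (Yes-input), whereas if player~1 loses, no controller strategy satisfies even $\query_{-2\varepsilon}$ (No-input). Any algorithm correctly solving the $\varepsilon$-gap problem must therefore decide the subset-sum game, giving \PSPACE-hardness.
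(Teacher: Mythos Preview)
Your proof is correct and follows essentially the same approach as the paper: reduce from subset-sum games, build a two-dimensional MDP whose discounted sums are integers by premultiplying weights by $\lambda^{-j}$, encode equality with~$t$ via opposite dimensions, and choose the thresholds and a fixed $\varepsilon$ so that both $\query_{2\varepsilon}$ and $\query_{-2\varepsilon}$ collapse to the exact integer condition. The only cosmetic differences are that the paper uses one step per round (with $\lambda^{-j}$ scaling) rather than two, and takes thresholds $t-1/2,\,-t-1/2$ with $\varepsilon=1/6$ instead of your $t-2\varepsilon,\,-t-2\varepsilon$ with $\varepsilon=1/8$; both choices achieve the same collapse to integers.
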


Two tricks are important. First, counterbalancing the discount effect via adequate weights. Second, simulating an equality constraint. This cannot be achieved directly because it requires to handle $\varepsilon = 0$. Still, by choosing weights carefully we restrict possible discounted sums to integer values only. Then we choose the thresholds and $\varepsilon > 0$ such that no run can take a value within the uncertainty zone. This circumvents the limitation due to uncertainty.

\begin{proof}
Consider a subset-sum game defined by pairs $(a_{1}, b_{1})$, $\ldots{}$, $(a_{n}, b_{n}) \in \nat^{2}$, and target $t \in \nat$. Assume that we have an algorithm, called \textsf{Algo}$_{\varepsilon}$, that solves the $\varepsilon$-gap problem of Theorem~\ref{thm:ds_gap}. We claim that this algorithm can also decide if player~1 has a winning strategy in the subset-sum game, through a polynomial-time reduction of the subset-sum game to a discounted sum percentile problem.

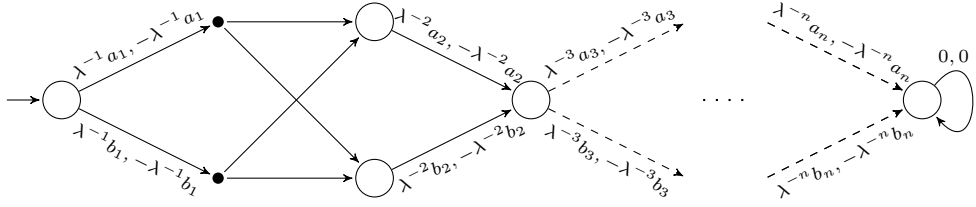
\begin{figure}[tb]
        \centering
               \scalebox{1}{\begin{tikzpicture}[->,>=stealth',shorten >=1pt,auto,node
    distance=2.5cm,bend angle=45, scale=0.52, inner sep=0pt,font=\scriptsize]
    \tikzstyle{p1}=[draw,circle,text centered,minimum size=5mm,text width=4mm]
    \tikzstyle{p2}=[fill,circle,text centered,minimum size=1.5mm]
    \tikzstyle{empty}=[]
    \node[p1]  (1) at (0, 0)  {};
    \node[p2]  (2)  at (4, 2) {};
    \node[p2]  (3)  at (4, -2) {};
    \node[p1]  (4)  at (8, 2) {};
    \node[p1]  (5)  at (8, -2) {};
    \node[p1]  (6)  at (12, 0) {};
    \node[empty]  (7)  at (16, 2) {};
    \node[empty]  (8)  at (16, -2) {};
    \node[empty]  (10)  at (18, 2) {};
    \node[empty]  (11)  at (18, -2) {};
    \node[p1]  (9) at (22, 0)  {};
    \path
    (-1.4,0) edge (1)
     (1) edge node[above,sloped,yshift=1mm] {$\discount^{-1} a_{1}, -\discount^{-1} a_{1}$} (2)
     (1) edge node[below,sloped,yshift=-1mm] {$\discount^{-1} b_{1}, -\discount^{-1} b_{1}$} (3)
    (2) edge[shorten <=1pt] (4)
    (2) edge[shorten <=1pt] (5)
    (3) edge[shorten <=1pt] (4)
    (3) edge[shorten <=1pt] (5)
    (4) edge node[above,sloped,yshift=1mm] {$\discount^{-2} a_{2}, -\discount^{-2} a_{2}$} (6)
    (5) edge node[below,sloped,yshift=-1mm] {$\discount^{-2}  b_{2}, -\discount^{-2}  b_{2}$} (6)
     (6) edge[dashed] node[above,sloped,yshift=1mm] {$\discount^{-3}  a_{3}, -\discount^{-3}  a_{3}$} (7)
     (6) edge[dashed] node[below,sloped,yshift=-1mm] {$\discount^{-3}  b_{3}, -\discount^{-3}  b_{3}$} (8)
    (10) edge[dashed] (9)
    (11) edge[dashed] (9)
    (10) edge[dashed] node[above,sloped,yshift=1mm] {$\discount^{-n}  a_{n}, -\discount^{-n}  a_{n}$} (9)
    (11) edge[dashed] node[below,sloped,yshift=-1mm] {$\discount^{-n}  b_{n}, -\discount^{-n}  b_{n}$} (9)
    (9) edge [loop right, out=50, in=310,looseness=3, distance=2cm] node [above,xshift=-2.5mm,yshift=4.5mm] {$0,0$} (9)
    ;
    \draw[loosely dotted,thick,-] (16.4,0) -- (17.6,0);
      \end{tikzpicture}}
        \caption{Encoding of subset-sum game into 2-dimensional percentile problem for the discounted sum.}\label{fig:ds_subset}
\end{figure}

We construct a 2-dimensional MDP $\markovProcess = (\states, A, \delta, \weight)$. Our construction is illustrated in Fig.~\ref{fig:ds_subset}. Filled circles represent equiprobable stochastic transitions. Controllable states simulate choices of player~1 in the game: the controller can choose between $a_{j}$ and $b_{j}$ when $j$ is odd. Conversely, stochastic transitions simulate choices of player~2: when $j$ is even, $a_{j}$ and $b_{j}$ are chosen with the same probability $1/2$. Each action corresponding to choosing $a_{j}$ (resp. $b_{j}$) has a 2-dimensional weight $(\discount^{-j} \cdot a_{j}, -\discount^{-j} \cdot a_{j})$ (resp. $(\discount^{-j} \cdot b_{j}, -\discount^{-j} \cdot b_{j})$). The discount factor can be fixed arbitrarily, say $\discount = 1/2$ for the sake of concreteness. Note that those weights only require an encoding which is polynomial in the size of the input. We add a self-loop with weight $(0, 0)$ on the terminal state.

Observe that any run in this MDP has a discounted sum which is exactly equal to the sum of the chosen elements $a_{j}$, $b_{j}$, thanks to the countereffect of $\discount^{-j}$ in the weights definition. Hence we also have that all runs have integer discounted sums. 

Our goal is to find a 2-dimensional percentile query that can express the winning condition of the subset-sum game, taking into account that algorithm \textsf{Algo}$_{\varepsilon}$ can only solve the $\varepsilon$-gap problem. 

Intuitively, we would like to express that the discounted sum must be exactly equal to $t$, in all possible runs. First observe that given the structure of the MDP, the terminal state and its zero loop is guaranteed to be reached in $n$ steps. Therefore, any strategy ensuring the required property almost-surely (i.e., with probability one) also ensures it surely (i.e., over all possible runs). Ideally, we would like to execute the $2$-constraint percentile problem asking for the existence of a strategy that satisfies query
\begin{equation*}
\query^{A} \coloneqq \quad \pr_{M,s}^\strat \big[\discSum{\discount}_{1} \geq t\big] = 1\quad \wedge \quad \pr_{M,s}^\strat \big[\discSum{\discount}_{2} \geq -t\big] = 1.
\end{equation*}
Let us call it \textit{Problem A}. Any strategy satisfying $\query^{A}$ would be a winning strategy for player-1, and conversely. Still, this would only be useful if we could take $\varepsilon = 0$, which we cannot.

Instead, consider \textit{Problem B}, asking for the existence of a strategy satisfying
\begin{equation*}
\query^{B} \coloneqq \quad \pr_{M,s}^\strat \big[\discSum{\discount}_{1} \geq t - 1/2\big] = 1\quad \wedge \quad \pr_{M,s}^\strat \big[\discSum{\discount}_{2} \geq -t - 1/2\big] = 1.
\end{equation*}
Furthermore, let us choose the precision factor $\varepsilon = 1/6$. Recall we assume that \textsf{Algo}$_{\varepsilon}$ solves the $\varepsilon$-gap problem. Consider the execution of \textsf{Algo}$_{\varepsilon}$ over query $\query^{B}$. By definition of the $\varepsilon$-gap problem (Theorem~\ref{thm:ds_gap}), we have that:
\begin{itemize}
\item[(1)] if there exists a strategy $\strat$ satisfying
\begin{equation*}
\query^{B}_{2\cdot\varepsilon} \coloneqq \quad \pr_{M,s}^\strat \big[\discSum{\discount}_{1} \geq t - 1/6\big] = 1\quad \wedge \quad \pr_{M,s}^\strat \big[\discSum{\discount}_{2} \geq -t - 1/6\big] = 1,
\end{equation*}
then the answer of \textsf{Algo}$_{\varepsilon}$ is \textsf{Yes};
\item[(2)] if there exists no strategy $\strat$ satisfying
\begin{equation*}
\query^{B}_{-2\cdot\varepsilon} \coloneqq \quad \pr_{M,s}^\strat \big[\discSum{\discount}_{1} \geq t - 5/6\big] = 1\quad \wedge \quad \pr_{M,s}^\strat \big[\discSum{\discount}_{2} \geq -t - 5/6\big] = 1,
\end{equation*}
then the answer of \textsf{Algo}$_{\varepsilon}$ is \textsf{No};
\item[(3)] otherwise the answer can be either \textsf{Yes} or \textsf{No}.
\end{itemize}
Now let us review the possible answers of \textsf{Algo}$_{\varepsilon}$.

Assume the answer is \textsf{Yes}. By (2), we have that there exists a strategy $\strat$ that satisfies $\query^{B}_{-2\cdot\varepsilon}$ otherwise the answer would have been \textsf{No}. Since all runs have integer discounted sums, this necessarily implies that $\strat$ also satisfies $\query^{A}$. Indeed, we have that $t = \lceil t-5/6\rceil$ and $-t = \lceil -t-5/6\rceil$. Hence player-1 has a winning strategy in the subset-sum game.

Assume the answer is \textsf{No}. By (1), we have that there exists no strategy $\strat$ that satisfies $\query^{B}_{2\cdot\varepsilon}$ otherwise the answer would have been \textsf{Yes}. Obviously, there exists no more strategy satisfying $\query^{A}$ since it is harder to satisfy (its thresholds are higher). Hence player-1 has no winning strategy in the subset-sum game.

Finally, we see that the answer of \textsf{Algo}$_{\varepsilon}$ is \textsf{Yes} if and only if the answer to \textit{Problem A} is also \textsf{Yes}. Since algorithm \textsf{Algo}$_{\varepsilon}$ can decide \textit{Problem A}, we also have that it can decide if player-1 has a winning strategy in the subset-sum game, which concludes our proof.
\end{proof}

For single-constraint $\varepsilon$-gap problems, we prove \NPTIME-hardness, even for Markov chains. Our proof is by reduction from the $K$-th largest subset problem~\cite{garey_FNY1979}, inspired by~\cite[Theorem 11]{DBLP:conf/stacs/BruyereFRR14}. A recent paper by Haase and Kiefer~\cite{HaasePP} shows that this $K$-th largest subset problem is actually \textsf{PP}-complete. This suggests that the single-constraint problem does not belong to $\NPTIME$ at all, otherwise the polynomial hierarchy would collapse to $\PTIME^{\NPTIME}$ by Toda's theorem~\cite{toda1991pp}.

\begin{lemma}
\label{lem:ds_np_hard}
The $\varepsilon$-gap problem defined in Theorem~\ref{thm:ds_gap} is \NPTIME-hard for single-constraint queries. This holds even for Markov chains, i.e., MDPs with only one available action in every state.
\end{lemma}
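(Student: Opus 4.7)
The plan is to reduce from the \NPTIME-complete \textit{$K$-th largest subset problem}: given positive integers $a_{1},\ldots,a_{n}$, a threshold $B \in \nat$ and $K \in \nat$, decide whether the set $\{S \subseteq \{1,\ldots,n\} \mid \sum_{j \in S} a_{j} \geq B\}$ has cardinality at least $K$. The idea is to construct a Markov chain in which, for every subset $S$, there is exactly one run whose discounted sum equals $\sum_{j \in S} a_{j}$ and whose probability is $1/2^{n}$; then asking that the discounted sum exceeds $B$ with probability $K/2^{n}$ counts the satisfying subsets.

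The Markov chain is built as follows. For each index $j \in \{1,\ldots,n\}$ we introduce a gadget with a state $s_{j}$ from which the unique action leads with probability $1/2$ to each of two auxiliary states $s_{j}^{0}$ and $s_{j}^{1}$. From $s_{j}^{0}$ the unique action has weight $0$, and from $s_{j}^{1}$ the unique action has weight $\discount^{-2j}\cdot a_{j}$, both moving on to $s_{j+1}$. After $s_{n+1}$ we loop in an absorbing state with weight $0$. Fix $\discount = 1/2$; the weights then have polynomial-size binary encoding. Along the (unique, stochastically chosen) run corresponding to a subset $S$, the accumulated discounted sum is $\sum_{j \in S} \discount^{2j}\cdot \discount^{-2j}\cdot a_{j} = \sum_{j \in S} a_{j}$, an integer, occurring with probability exactly $1/2^{n}$.

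We then consider the single-constraint query $\query \coloneqq \pr_{M,s_{1}}^\strat[\discSum{\discount} \geq B - 1/2] \geq K/2^{n}$ with precision factor $\varepsilon = 1/8$. Since every run attains an integer discounted sum and the shifts $\pm 2\varepsilon = \pm 1/4$ keep the effective threshold strictly between $B-1$ and $B$, the three conditions ``DS $\geq B - 1/2 - 1/4$'', ``DS $\geq B - 1/2$'', and ``DS $\geq B - 1/2 + 1/4$'' are all equivalent to ``DS $\geq B$'' for any actual run. Therefore an algorithm for the $\varepsilon$-gap problem must answer \textsf{Yes} exactly when at least $K$ subsets sum to at least $B$ and \textsf{No} otherwise, establishing \NPTIME-hardness already for Markov chains.

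The only subtle points are the absence of nondeterminism (handled by the two-step gadget that encodes the ``include/exclude'' choice as a stochastic branching, so that weights can differ between branches despite there being only one action per state) and the non-zero precision (handled by exploiting that all attainable discounted sums are integers, so a suitable non-integer threshold together with a small enough $\varepsilon$ collapses the uncertainty zone). The main obstacle is precisely this second point: one must show that the $\varepsilon$-gap promise does not give the algorithm any slack, and the integrality trick together with the $1/2$-offset on the threshold is what rules this out.
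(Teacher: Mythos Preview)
Your proposal is correct and follows essentially the same approach as the paper: both reduce from the $K$-th largest subset problem, build a Markov chain whose equiprobable branchings encode the include/exclude choices, scale the weights by the appropriate power of $\discount^{-1}$ so that every run's discounted sum is an integer, and then exploit this integrality together with a half-integer threshold and a small $\varepsilon$ to collapse the uncertainty zone. The only cosmetic differences are that the paper uses the $\leq L$ formulation of the subset problem (hence negative weights and threshold $-L-1/2$) while you use the equivalent $\geq B$ formulation, and the paper takes $\varepsilon = 1/6$ rather than $1/8$.
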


\begin{proof}
The $K$-th largest subset problem is as follows. Given a finite set $X = \{x_{1}, \ldots{}, x_{n}\}$ (hence $n = \vert X\vert$), a size function $h\colon X \rightarrow \nat$ assigning non-negative integer values to elements of $X$, and two naturals $K, L \in \nat$, decide if there exist $K$ distinct subsets $Y_{i} \subseteq X$, $1 \leq i \leq K$, such that $h(Y_{i}) = \sum_{x \in Y_{i}} h(x) \leq L$ for all $K$ subsets. The \NPTIME-hardness of this problem was proved in~\cite{johnson_JACM1978}.

We assume w.l.o.g.~that $K \leq 2^{n}$ otherwise the answer to the problem is trivially \textsf{No} since we cannot find a sufficient number of \textit{distinct} subsets.

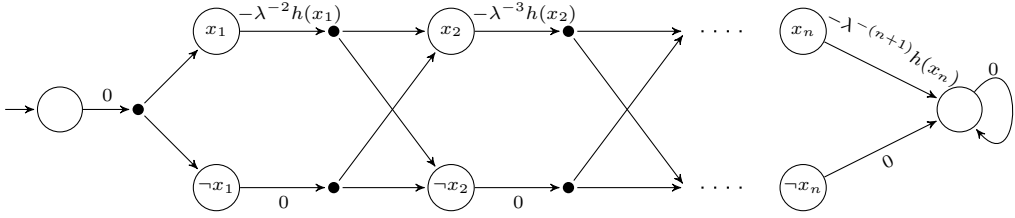
\begin{figure}[htb]
  \centering   
 \scalebox{1}{\begin{tikzpicture}[->,>=stealth',shorten >=1pt,auto,node
    distance=2.5cm,bend angle=45, scale=0.52, inner sep=0pt,font=\scriptsize]
    \tikzstyle{p1}=[draw,circle,text centered,minimum size=5mm,text width=6mm]
    \tikzstyle{p2}=[fill,circle,text centered,minimum size=1.5mm]
    \tikzstyle{empty}=[]
    \node[p1]  (1) at (0, 0)  {};
    \node[p2]  (1b)  at (2, 0) {};
    \node[p1]  (2)  at (4, 2) {$x_{1}$};
    \node[p1]  (3)  at (4, -2) {$\neg x_{1}$};
    \node[p2]  (2b)  at (7, 2) {};
    \node[p2]  (3b)  at (7, -2) {};
    \node[p1]  (4)  at (10, 2) {$x_{2}$};
    \node[p1]  (5)  at (10, -2) {$\neg x_{2}$};
    \node[p2]  (4b)  at (13, 2) {};
    \node[p2]  (5b)  at (13, -2) {};
    \node[empty]  (6)  at (16, 2) {};
    \node[empty]  (7)  at (16, -2) {};
    \node[p1]  (8)  at (19, 2) {$x_{n}$};
    \node[p1]  (9)  at (19, -2) {$\neg x_{n}$};
    \node[p1]  (10)  at (23, 0) {};
    \path
    (-1.4,0) edge (1)
    (1) edge node[above,yshift=1mm] {$0$} (1b)
    (1b) edge[shorten <=1pt] (2)
	(1b) edge[shorten <=1pt] (3)
    (9) edge node[below,sloped,yshift=-1mm] {$0$} (10)
    (8) edge node[above,sloped,yshift=2mm] {$-\discount^{-(n+1)}h(x_{n})$} (10)
    (4) edge node[above,yshift=1mm,xshift=0.8mm] {$-\discount^{-3}h(x_{2})$} (4b)
    (5) edge node[below,yshift=-1mm] {$0$} (5b)
    (2) edge node[above,yshift=1mm,xshift=0.8mm] {$-\discount^{-2}h(x_{1})$} (2b)
    (3) edge node[below,yshift=-1mm] {$0$} (3b)
    (2b) edge[shorten <=1pt] (4)
    (2b) edge[shorten <=1pt] (5)
    (3b) edge[shorten <=1pt] (4)
    (3b) edge[shorten <=1pt] (5)
    (4b) edge[shorten <=1pt] (6)
    (4b) edge[shorten <=1pt] (7)
    (5b) edge[shorten <=1pt] (6)
    (5b) edge[shorten <=1pt] (7)
    (10) edge [loop right, out=50, in=310,looseness=3, distance=2cm] node [above,xshift=-2.5mm,yshift=4.5mm] {$0$} (10)
    ;
    \draw[loosely dotted,thick,-] (16.4,2) -- (17.6,2);
    \draw[loosely dotted,thick,-] (16.4,-2) -- (17.6,-2);
      \end{tikzpicture}}
      \caption{Reduction from $K$-th largest subset problem to $\varepsilon$-gap problem for a single-constraint discounted sum percentile problem over a Markov chain.}
\label{fig:ds_np_hard}
  \end{figure}
  
Given an instance of the $K$-th largest subset problem, we build a Markov chain as depicted in Fig.~\ref{fig:ds_np_hard}. Observe that this is indeed a Markov \textit{chain} as there is a unique action available in all states. As usual, the filled circles represent equiprobable transitions. In the first step, element $x_{1}$ is either selected (upper transition) or not selected (lower one), with equal probability. This is repeated for every element up to reaching the terminal state with a zero loop. Hence, there is a bijection between runs in this Markov chain and subsets of~$X$. Moreover, all subsets are equiprobable: they have probability $1/2^{n}$ to be selected.

The discount factor can be chosen arbitrarily. For the sake of concreteness, assume it is $\discount = 1/2$. Now, observe that the weight function is defined such that the discounted sum over a run representing a subset $Y \subseteq X$ is exactly equal to $-h(Y) = -\sum_{x \in Y} h(x)$. To achieve this, we again use the trick of multiplying values $-h(x_{i})$ by $\discount^{-(i+1)}$ (the shift is due to the first transition). By definition of our weight function, it is clear that all runs take integer values. Also, the size of the Markov chain is polynomial in the size of the original problem.

Consider the single-constraint percentile query asking if
\begin{equation*}
\pr_{M,s} \big[\discSum{\discount} \geq -L - 1/2\big] \geq \dfrac{K}{2^{n}},
\end{equation*}
with $s$ the initial state of the Markov chain. Note that we drop the existential quantification on strategies since there exists a unique - and trivial - strategy in a Markov chain. Recall that we only have access to an algorithm, say \textsf{Algo}$_{\varepsilon}$, that solves the $\varepsilon$-gap problem, not the exact one. Consider $\varepsilon = 1/6$ and let us review the possible answers given by the execution of \textsf{Algo}$_{\varepsilon}$ on this query.

Assume \textsf{Algo}$_{\varepsilon}$ answers \textsf{Yes}. By definition of the $\varepsilon$-gap problem (Theorem~\ref{thm:ds_gap}), we have that
\begin{equation*}
\pr_{M,s} \big[\discSum{\discount} \geq -L-5/6\big] \geq \dfrac{K}{2^{n}} \quad\Rightarrow\quad \pr_{M,s} \big[\discSum{\discount} \geq -L\big] \geq \dfrac{K}{2^{n}}.
\end{equation*}
The implication follows from the fact that all runs take integer values and by equality $\lceil -L - 5/6\rceil = -L$ since $L \in \nat$. This implies that there are at least $K$ distinct runs representing subsets $Y_{i} \subseteq X$ for which $-h(Y_{i}) \geq -L \Leftrightarrow h(Y_{i}) \leq L$. Hence the answer to the $K$-th largest subset problem is also \textsf{Yes}.

Now assume \textsf{Algo}$_{\varepsilon}$ answers \textsf{No}. By definition of the $\varepsilon$-gap problem, we have that
\begin{equation*}
\pr_{M,s} \big[\discSum{\discount} \geq -L-1/6\big] < \dfrac{K}{2^{n}} \quad\Rightarrow\quad \pr_{M,s} \big[\discSum{\discount} \geq -L\big] < \dfrac{K}{2^{n}},
\end{equation*}
using the fact that the second inequality is harder to satisfy. This implies that there are strictly less than $K$ distinct runs representing subsets $Y_{i} \subseteq X$ for which $-h(Y_{i}) \geq -L \Leftrightarrow h(Y_{i}) \leq L$. Hence the answer to the $K$-th largest subset problem is also \textsf{No}.

In summary, we have that \textsf{Algo}$_{\varepsilon}$ answers \textsf{Yes} if and only if the answer to the $K$-th largest subset problem is also \textsf{Yes}. This concludes our proof.
\end{proof}

\smallskip\noindent\textbf{Memory.} For the precise discounted sum and
generalizations, infinite memory is
needed~\cite{DBLP:conf/lpar/ChatterjeeFW13}. For $\varepsilon$-gap problems,
the exponential upper bound follows from the algorithm while the lower bound is
shown via a family of problems that emulate the ones used for multiple
reachability (Theorem~\ref{thm:asreach}). 

\begin{lemma}
\label{lem:ds_memory}
Exponential-memory strategies are both sufficient and, in general, necessary to satisfy $\varepsilon$-gap percentile problems for the discounted sum.
\end{lemma}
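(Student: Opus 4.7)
Sufficiency follows essentially from the algorithm underlying Theorem~\ref{thm:disc_algo}, which reduces the $\varepsilon$-gap problem to a multiple reachability problem with absorbing targets on the unfolded MDP $U_{h,\sim_\gamma}$, whose size is at most exponential in the query. Because the target states are absorbing leaves, Theorem~\ref{thm:absorbing-reachsafe} guarantees that memoryless strategies suffice on $U_{h,\sim_\gamma}$. Transporting such a strategy back to $M$ yields a strategy with memory bounded by $|U_{h,\sim_\gamma}|$, hence exponential in the query size.

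\textbf{Lower bound.} My plan is to reduce almost-sure multiple reachability on the MDPs of Lemma~\ref{lem:multiReach_expMemoryLB} to the $\varepsilon$-gap percentile problem for discounted sum. First, I take such an MDP $M$ with $q = 2k$ target sets $T_i$ and build a $q$-dimensional weighted MDP $M'$ on the same state space, with a fixed discount $\discount = 1/2$. Applying the step-counteracting trick used in Lemma~\ref{lem:ds_pspace_hard}, I assign to every action that first enters a state of $T_i$ at step $j$ the weight $\discount^{-j} = 2^j$ on dimension $i$, and weight $0$ to every other action (including the self-loops of the two absorbing states). By acyclicity of the non-absorbing part of the reachability MDP, each such weighted action is taken at most once at a deterministic step, so the discounted sum on dimension $i$ lies in $\{0,1,2\}$ and is positive if and only if $T_i$ is visited.

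Next, I consider the $\varepsilon$-gap percentile query $\bigwedge_{i=1}^q \pr_{M',\initState}^\sigma\big[\discSum{\discount}_i \geq 1/2\big] \geq 1$ with precision $\varepsilon = 1/8$. Since the attainable values of each $\discSum{\discount}_i$ are nonnegative integers and the shifted thresholds $1/2 \pm 2\varepsilon = 1/2 \pm 1/4$ both lie strictly inside $(0,1)$, a run satisfies any shifted variant of constraint $i$ iff $T_i$ is visited along it. Thus the $\varepsilon$-gap instance coincides with the original almost-sure multi-reachability problem on $M$, and any strategy witnessing a \textsf{Yes} answer must achieve almost-sure reachability of all $T_i$, hence use at least $2^k = 2^{q/2}$ memory by Lemma~\ref{lem:multiReach_expMemoryLB}. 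The main obstacle I foresee is ensuring that all encoding sizes stay polynomial: the step index $j$ is $\mathcal{O}(q)$, so weights $2^j$ have bit-length $\mathcal{O}(q)$, which keeps $M'$ polynomial in the original instance, while the integrality of $\discSum{\discount}_i$ delivers the separation that makes the $\varepsilon$-gap formulation faithful.
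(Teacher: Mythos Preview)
Your upper-bound argument is correct and matches the paper's.

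For the lower bound, your overall strategy---encoding the multiple-reachability instance of Lemma~\ref{lem:multiReach_expMemoryLB} with step-counteracting weights so that the discounted sums are integers, then choosing $\varepsilon$ small enough that the shifted thresholds fall strictly between consecutive integers---is exactly the paper's approach. However, your specific weight assignment has a genuine flaw. You propose to put weight $2^j$ on dimension $i$ on ``every action that first enters a state of $T_i$ at step $j$.'' In the MDP of Lemma~\ref{lem:multiReach_expMemoryLB}, the transition out of each stochastic state $s_j$ is a \emph{single} action that leads to $s_{j,L}$ or $s_{j,R}$ with equal probability. Since the weight function is defined on actions ($w\colon A\to\mathbb{Z}^d$), this action carries one fixed weight vector regardless of the stochastic outcome. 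If you give it nonzero weight on both the dimension for $\{s_{j,L},s'_{j,L}\}$ and the dimension for $\{s_{j,R},s'_{j,R}\}$, then both dimensions receive their contribution irrespective of which branch is taken, and the discounted sum no longer records which target was actually visited. If instead you give it weight $0$ (because it does not \emph{deterministically} enter any $T_i$), then the stochastic visits to $s_{j,L}$ or $s_{j,R}$ are never recorded at all, and your query reduces to requiring almost-sure visits to every $s'_{j,L}$ \emph{and} every $s'_{j,R}$, which is trivially unsatisfiable.

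The paper avoids this by attaching the counteracting weights to the actions \emph{leaving} the target states $s_{j,L},s_{j,R},s'_{j,L},s'_{j,R}$, all of which are deterministic, together with an initial $-1$ on every dimension from the first action; the threshold then becomes $-1/2$ rather than $1/2$. With that correction your argument goes through.
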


\begin{proof}
First, the algorithm of Theorem~\ref{thm:disc_algo} solves the $\varepsilon$-gap percentile problem by answering a multiple reachability problem over an unfolded MDP of exponential size. As stated in Theorem~\ref{thm:absorbing-reachsafe}, memory of size polynomial in the MDP (here, the unfolded one) and exponential in the number of contraints (which is untouched by our algorithm) is sufficient to satisfy such queries. Moreover, once the first $h$ steps have been played according to such a strategy, any arbitrary strategy may be used, in particular a memoryless one suffices. Hence, it follows that exponential-memory strategies suffice for the discounted sum $\varepsilon$-gap percentile problem.

Second, for the lower bound we use a family of MDPs based on the one defined to prove the exponential memory requirements of multiple reachability problems (Lemma~\ref{lem:multiReach_expMemoryLB}). Consider the unweighted MDP depicted in Fig.~\ref{fig:multiReach_exp_mem}. Recall it is composed of $k$ stochastic gadgets followed by $k$ controllable gadgets. We transform this MDP into a $2\cdot k$-dimensional MDP $M$ as follows. First, we remove the self-loops on states $s'_{k,L}$ and $s'_{k,R}$ and replace them by actions going to a terminal state~$s_{t}$ with probability one: this is for technical convenience. Second, we associate actions to $2\cdot k$-dimensional weight vectors:
\begin{itemize}
\item the action leaving $s_{1}$ has weight $-\discount^{-1}$ in all $2\cdot k$ dimensions,
\item actions leaving a state $s_{i,L}$ have weight $\discount^{-2\cdot i}$ in dimension $i$ and weight zero in all other dimensions,
\item actions leaving a state $s_{i,R}$ have weight $\discount^{-2\cdot i}$ in dimension $k + i$ and weight zero in all other dimensions,
\item actions leaving a state $s'_{i,L}$ have weight $\discount^{-(k+ 2\cdot i)}$ in dimension $i$ and weight zero in all other dimensions,
\item actions leaving a state $s'_{i,R}$ have weight $\discount^{-(k+ 2\cdot i)}$ in dimension $k + i$ and weight zero in all other dimensions,
\item all remaining actions have weight zero in all dimensions.
\end{itemize}

As usual, the discount factor can be taken equal to $1/2$. While this may seem technical, the goal is simple: emulating the multiple reachability problem used in Lemma~\ref{lem:multiReach_expMemoryLB}. Each dimension $l \in \{1, \ldots{}, 2\cdot k\}$ will get a $-1$ by the first action. Then, a dimension $l \in \{1, \ldots{}, k\}$ (resp. $l \in \{k + 1, \ldots{}, 2\cdot k\}$) will get a $1$ when $s_{l,L}$ or $s'_{l,L}$ (resp. when $s_{l-k,R}$ or $s'_{l-k,R}$) is visited. All other actions have no impact on the discounted sum over dimension $l$. Therefore, one can easily check if a run $\rho$ has visited a set $\{s_{i,L}, s'_{i,L}\}$ (resp. $\{s_{i,R}, s'_{i,R}\}$): it suffices to check if the discounted sum on dimension $i$ (resp. $k+i$) is at least zero.

Now consider the percentile query
\begin{equation*}
\query \coloneqq \quad \bigwedge_{l = 1}^{2\cdot k}\; \pr_{M,s_{1}}^\strat\big[\discSum{\discount}_{l} \geq -1/2\big] = 1,
\end{equation*}
and in particular, its $\varepsilon$-gap version, with $\varepsilon = 1/6$. Applying the same reasoning as for proofs of Lemma~\ref{lem:ds_pspace_hard} and Lemma~\ref{lem:ds_np_hard}, we can prove that the answer to this $\varepsilon$-gap problem is \textsf{Yes} if and only if all target sets
\begin{equation*}
T_{l} = \{s_{1,L}, s'_{1,L}\}, \{s_{1,R}, s'_{1,R}\}, \{s_{2,L}, s'_{2,L}\}, \ldots{}, \{s_{k,L}, s'_{k,L}\}, \{s_{k,R}, s'_{k,R}\}
\end{equation*} 
are reached almost-surely. By Lemma~\ref{lem:multiReach_expMemoryLB}, we know that this requires a strategy encoded as a Moore machine with no less than $2^{k}$ memory states. This shows the exponential lower bound for the $\varepsilon$-gap problem and concludes our proof.
\end{proof}

\section{Conclusion}

Through this paper, we studied the strategy synthesis problem for \textit{multi-percentile queries} on multi-dimen\-sion\-al MDPs: we considered a wide range of payoff functions from the literature (sup, inf, limsup, liminf, mean-payoff, truncated sum, discounted sum), and established a complete picture of the multi-percentile framework, including algorithms, lower bounds on complexity, and memory requirements. Our results are summed up in Table~\ref{table}.

It is especially interesting to observe that for all payoff functions but the discounted sum, our algorithms require \textit{polynomial time in the size of the model} when the query size is fixed. This is of utmost practical interest as in most applications, the query size (i.e., specification) is typically small while the model (i.e., the system) can be very large. Hence, our algorithms have clear potential to be useful in practice. As future work, we aim to assess their practical efficiency through implementation in tool suites and case studies.

\bibliographystyle{plain}
\bibliography{biblio}

\begin{thebibliography}{10}

\bibitem{BDDKK-fm14}
Christel Baier, Marcus Daum, Clemens Dubslaff, Joachim Klein, and Sascha
  Kl\"uppelholz.
\newblock Energy-utility quantiles.
\newblock In {\em NASA Formal Methods}, LNCS 8430, pages 285--299. Springer,
  2014.

\bibitem{bertsekas_MOR1991}
Dimitri~P. Bertsekas and John~N. Tsitsiklis.
\newblock An analysis of stochastic shortest path problems.
\newblock {\em Mathematics of Operations Research}, 16:580--595, 1991.

\bibitem{DBLP:journals/corr/BokerH14}
Udi Boker and Thomas~A. Henzinger.
\newblock Exact and approximate determinization of discounted-sum automata.
\newblock {\em Logical Methods in Computer Science}, 10(1), 2014.

\bibitem{bokerTDS}
Udi Boker, Thomas~A. Henzinger, and Jan Otop.
\newblock The target discounted-sum problem.
\newblock In {\em Proc. of LICS}, pages 750--761. {IEEE}, 2015.

\bibitem{DBLP:conf/fsttcs/BrazdilCFNS13}
Tom{\'{a}}s Br{\'{a}}zdil, Taolue Chen, Vojtech Forejt, Petr Novotn{\'{y}}, and
  Aistis Simaitis.
\newblock Solvency {M}arkov decision processes with interest.
\newblock In {\em Proc. of FSTTCS}, volume~24 of {\em LIPIcs}, pages 487--499.
  Schloss Dagstuhl - Leibniz-Zentrum fuer Informatik, 2013.

\bibitem{BBCFK-lmcs14}
Tom\'{a}\v{s} Br\'{a}zdil, V\'{a}clav Brozek, Krishnendu Chatterjee, Vojtech
  Forejt, and Anton\'{i}n Kucera.
\newblock Markov decision processes with multiple long-run average objectives.
\newblock {\em Logical Methods in Computer Science}, 10(13):1--29, 2014.

\bibitem{DBLP:conf/stacs/BruyereFRR14}
V{\'{e}}ronique Bruy{\`{e}}re, Emmanuel Filiot, Mickael Randour, and
  Jean{-}Fran{\c{c}}ois Raskin.
\newblock Meet your expectations with guarantees: Beyond worst-case synthesis
  in quantitative games.
\newblock In {\em Proc. of STACS}, volume~25 of {\em LIPIcs}, pages 199--213.
  Schloss Dagstuhl - Leibniz-Zentrum fuer Informatik, 2014.

\bibitem{Chatterjee-tcs07}
Krishnendu Chatterjee.
\newblock Concurrent games with tail objectives.
\newblock {\em Theoretical Computer Science}, 388(1 3):181 -- 198, 2007.

\bibitem{CDHR-fsttcs10}
Krishnendu Chatterjee, Laurent Doyen, Thomas~A. Henzinger, and
  Jean-Fran{\c{c}}ois Raskin.
\newblock Generalized mean-payoff and energy games.
\newblock In {\em Proc. of FSTTCS}, volume~8 of {\em LIPIcs}, pages 505--516.
  Schloss Dagstuhl--Leibniz-Zentrum fuer Informatik, 2010.

\bibitem{DBLP:journals/iandc/Chatterjee0RR15}
Krishnendu Chatterjee, Laurent Doyen, Mickael Randour, and
  Jean{-}Fran{\c{c}}ois Raskin.
\newblock Looking at mean-payoff and total-payoff through windows.
\newblock {\em Information and Computation}, 242:25--52, 2015.

\bibitem{DBLP:conf/lpar/ChatterjeeFW13}
Krishnendu Chatterjee, Vojtech Forejt, and Dominik Wojtczak.
\newblock Multi-objective discounted reward verification in graphs and {MDPs}.
\newblock In {\em Proc. of LPAR}, LNCS 8312, pages 228--242. Springer, 2013.

\bibitem{CH-ilc09}
Krishnendu Chatterjee and Thomas~A. Henzinger.
\newblock Probabilistic systems with limsup and liminf objectives.
\newblock In Margaret Archibald, Vasco Brattka, Valentin Goranko, and Benedikt
  L\"owe, editors, {\em Infinity in Logic and Computation}, LNCS 5489, pages
  32--45. Springer, 2009.

\bibitem{CKK-lics15}
Krishnendu Chatterjee, Zuzana Kom\'arkov\'a, and Jan Kret\'insk\'y.
\newblock Unifying two views on multiple mean-payoff objectives in {M}arkov
  decision processes.
\newblock In {\em Proc. of LICS}, pages 244--256, July 2015.

\bibitem{CMH-stacs06}
Krishnendu Chatterjee, Rupak Majumdar, and Thomas~A. Henzinger.
\newblock Markov decision processes with multiple objectives.
\newblock In {\em Proc. of STACS}, LNCS 3884, pages 325--336. Springer, 2006.

\bibitem{DBLP:journals/acta/ChatterjeeRR14}
Krishnendu Chatterjee, Mickael Randour, and Jean{-}Fran{\c{c}}ois Raskin.
\newblock Strategy synthesis for multi-dimensional quantitative objectives.
\newblock {\em Acta Informatica}, 51(3-4):129--163, 2014.

\bibitem{DeAlfaro-phd97}
Luca de~Alfaro.
\newblock {\em Formal verification of probabilistic systems}.
\newblock Ph.{D.} thesis, Stanford University, 1997.

\bibitem{DBLP:conf/concur/Alfaro99}
Luca de~Alfaro.
\newblock Computing minimum and maximum reachability times in probabilistic
  systems.
\newblock In {\em Proc. of CONCUR}, LNCS 1664, pages 66--81. Springer, 1999.

\bibitem{EKVY-lmcs08}
Kousha Etessami, Marta~Z. Kwiatkowska, Moshe~Y. Vardi, and Mihalis Yannakakis.
\newblock Multi-objective model checking of {M}arkov decision processes.
\newblock {\em Logical Methods in Computer Science}, 4(4), 2008.

\bibitem{FKR-ieee95}
Jerzy~A. Filar, Dmitry Krass, and Kirsten~W. Ross.
\newblock Percentile performance criteria for limiting average {M}arkov
  decision processes.
\newblock {\em Automatic Control, IEEE Transactions on}, 40(1):2--10, 1995.

\bibitem{garey_FNY1979}
Michael~R. Garey and David~S. Johnson.
\newblock {\em Computers and intractability: a guide to the Theory of
  {NP}-Completeness}.
\newblock Freeman New York, 1979.

\bibitem{DBLP:conf/birthday/Goldreich06a}
Oded Goldreich.
\newblock On promise problems: {A} survey.
\newblock In Oded Goldreich, Arnold~L. Rosenberg, and Alan~L. Selman, editors,
  {\em Theoretical Computer Science, Essays in Memory of {S}himon {E}ven}, LNCS
  3895, pages 254--290. Springer, 2006.

\bibitem{HaaseK14}
Christoph Haase and Stefan Kiefer.
\newblock The odds of staying on budget.
\newblock In {\em Proc. of ICALP}, LNCS 9135, pages 234--246. Springer, 2015.

\bibitem{HaasePP}
Christoph Haase and Stefan Kiefer.
\newblock The complexity of the {K}th largest subset problem and related
  problems.
\newblock {\em Inf. Process. Lett.}, 116(2):111--115, 2016.

\bibitem{johnson_JACM1978}
Donald~B. Johnson and Samuel~D. Kashdan.
\newblock Lower bounds for selection in {X + Y} and other multisets.
\newblock {\em Journal of the ACM}, 25(4):556--570, 1978.

\bibitem{minsky1961}
Marvin~L. Minsky.
\newblock Recursive unsolvability of {P}ost's problem of ``tag'' and other
  topics in theory of {T}uring machines.
\newblock {\em The Annals of Mathematics}, 74(3):437--455, 1961.

\bibitem{Ohtsubo-amc2004}
Yoshio Ohtsubo.
\newblock Optimal threshold probability in undiscounted {M}arkov decision
  processes with a target set.
\newblock {\em Applied Mathematics and Computation}, 149(2):519 -- 532, 2004.

\bibitem{Puterman-wiley94}
Martin~L. Puterman.
\newblock {\em Markov Decision Processes: Discrete Stochastic Dynamic
  Programming}.
\newblock John Wiley \& Sons, Inc., New York, NY, USA, 1st edition, 1994.

\bibitem{RRS-cav15}
Mickael Randour, Jean-Fran\c{c}ois Raskin, and Ocan Sankur.
\newblock Percentile queries in multi-dimensional {M}arkov decision processes.
\newblock In {\em Proc. of CAV}, LNCS 9206, pages 123--139. Springer, 2015.

\bibitem{DBLP:conf/vmcai/RandourRS15}
Mickael Randour, Jean-Fran{\c{c}}ois Raskin, and Ocan Sankur.
\newblock Variations on the stochastic shortest path problem.
\newblock In {\em Proc. of VMCAI}, LNCS 8931, pages 1--18. Springer, 2015.

\bibitem{SO-jcta13}
Masahiko Sakaguchi and Yoshio Ohtsubo.
\newblock Markov decision processes associated with two threshold probability
  criteria.
\newblock {\em Journal of Control Theory and Applications}, 11(4):548--557,
  2013.

\bibitem{toda1991pp}
Seinosuke Toda.
\newblock {PP} is as hard as the polynomial-time hierarchy.
\newblock {\em SIAM Journal on Computing}, 20(5):865--877, 1991.

\bibitem{Tracol09}
Mathieu Tracol.
\newblock Fast convergence to state-action frequency polytopes for {MDPs}.
\newblock {\em Oper. Res. Lett.}, 37(2):123--126, 2009.

\bibitem{DBLP:journals/tcs/Travers06}
Stephen~D. Travers.
\newblock The complexity of membership problems for circuits over sets of
  integers.
\newblock {\em Theor. Comput. Sci.}, 369(1-3):211--229, 2006.

\bibitem{DBLP:conf/fossacs/UmmelsB13}
Michael Ummels and Christel Baier.
\newblock Computing quantiles in {M}arkov reward models.
\newblock In {\em Proc. of FOSSACS}, LNCS 7794, pages 353--368. Springer, 2013.

\bibitem{vardi_FOCS85}
Moshe~Y. Vardi.
\newblock Automatic verification of probabilistic concurrent finite-state
  programs.
\newblock In {\em Proc. of FOCS}, pages 327--338. IEEE, 1985.

\bibitem{White1993634}
Douglas~J. White.
\newblock Minimizing a threshold probability in discounted {M}arkov decision
  processes.
\newblock {\em Journal of Mathematical Analysis and Applications}, 173(2):634
  -- 646, 1993.

\bibitem{WL99}
Congbin Wu and Yuanlie Lin.
\newblock Minimizing risk models in {M}arkov decision processes with policies
  depending on target values.
\newblock {\em Journal of Mathematical Analysis and Applications},
  231(1):47--67, 1999.

\bibitem{XM-ijcai11}
Huan Xu and Shie Mannor.
\newblock Probabilistic goal {M}arkov decision processes.
\newblock In {\em Proc. of IJCAI}, pages 2046--2052, 2011.

\end{thebibliography}

\end{document}